\DeclarePairedDelimiter\set{\{}{\}}
\newcommand{\N}{{\mathbb{N}}} 
\newcommand{\R}{{\mathbb{R}}} 
\newcommand{\C}{{\mathbb{C}}}
 \renewcommand{\c}{{\rm c}}
\newcommand\parb[2][]{#1 \big ( #2#1\big )}
 \newcommand{\vB}{{\mathcal B}}
 \newcommand{\vD}{{\mathcal D}}
 \newcommand{\vL}{{\mathcal L}}
\theoremstyle{plain}
\newtheorem{thm}{Theorem}[section]
\newtheorem{proposition}[thm]{Proposition}
\newtheorem{lemma}[thm]{Lemma} \newtheorem{corollary}[thm]{Corollary}
\theoremstyle{definition}
 \newtheorem{cond}[thm]{Condition}
 \newtheorem{remark}[thm]{Remark}
 \newtheorem*{remarks*}{Remarks}
\newtheorem*{remark*}{Remark}
\numberwithin{equation}{section}
\title{Spectral theory for  $1$-body Stark operators}
\thanks{
K.Ito is supported by JSPS KAKENHI grant nr.\ 17K05325.\,
E.S. is  supported by the Research Institute for Mathematical Sciences, a Joint
Usage/Research Center located in Kyoto University, and by DFF grant nr.\ 4181-00042. 
} 
\author{T. Adachi}
\address[T. Adachi]{Graduate School of Human and Environmental Studies,
        Kyoto University, Kyoto, Japan}
\email{adachi@math.h.kyoto-u.ac.jp}
\author{K. Itakura}
\address[K. Itakura]{Department of Mathematics, Kobe University\\
1-1 Rokkodai, Nada, Kobe, 657-8501, Japan}
\email{itakura@math.kobe-u.ac.jp}
\author{K. Ito}
\address[K. Ito]{Graduate School of Mathematical Sciences, The University of Tokyo\\
3-8-1 Komaba, Meguro-ku, Tokyo 153-8914, Japan}
\email{ito@ms.u-tokyo.ac.jp}
\author{E. Skibsted} \address[E. Skibsted]{Institut for Matematiske Fag\\
Aarhus Universitet\\ Ny Munkegade 8000 Aarhus C, Denmark}
\email{skibsted@math.au.dk}
\date{\today}
\begin{document}
\begin{abstract}
We investigate  spectral theory for  a one-body Stark Hamiltonian under
minimum regularity and decay conditions on the potential (actually allowing sub-linear growth at
infinity). Our results include 
Rellich's theorem, the limiting absorption principle,
 radiation condition bounds and Sommerfeld's uniqueness, and most of 
 these 
 are stated and proved in sharp form employing  Besov-type spaces.
For the proofs we adopt a commutator scheme 
by Ito--Skibsted \cite{IS}. 
A feature of the paper is a particular choice of an escape function related to  parabolic coordinates, 
which conforms  well with classical mechanics for  the Stark Hamiltonian.
The whole setting of the paper, such as 
the conjugate operator and the Besov-type spaces, 
is generated by this single escape function. We apply our  results  in the sequel paper \cite{AIIS2}.
\end{abstract}

\allowdisplaybreaks

\maketitle
\tableofcontents

\section{Introduction}

In this paper we investigate  spectral theory for a \emph{perturbed} Stark Hamiltonian
on the Euclidean space of dimension $d\geq 2$. 
Let us split the space variable of $\mathbb R^d$ as 
$(x,y)=(x,y_2,\ldots,y_d)\in \mathbb R\times \mathbb R^{d-1}$ 
and apply the Stark field in the positive $x$-direction.
The \emph{free} Stark Hamiltonian is given by 
$$H_0=\tfrac12(p_x^2+p_y^2)-x=\tfrac12(p_x^2+p_{y_2}^2+\dots+p_{y_d}^2)-x;\quad 
p=-\mathrm i\partial.$$
We perturb it, and consider 
$$H=H_0+q,$$
where $q$ is a multiplication operator 
by a real-valued function $q\in L^2_{\mathrm{loc}}(\mathbb R^d)$ (with
more regularity for $d\geq 4$).
We assume that $q$ has a weaker growth rate at infinity than the Stark field in some appropriate sense.

We are going to present a  spectral analysis of $H$, 
and our  main results are Rellich's theorem, LAP (\emph{Limiting Absorption Principle}),
  radiation condition bounds and Sommerfeld's uniqueness result.
The precise statements will be given in Section~\ref{sec:Setting and results}. 
These results are known for perturbations of the free Laplacian but
seem to a substantial  degree  to be missing
for Stark Hamiltonians even for  the $1$-body case,
definitely   in the  sharp form as derived here. 
We refer to \cite{AH,H,Y1,Y2,W} for directly related spectral results 
for $1$-body Stark Hamiltonians,
and to \cite{Si,Ta1,Ta2,Sk,HMS1} for $N$-body generalizations.

The stationary scattering theory for Stark Hamiltonians is not fully
developed,
although asymptotic completeness of the time-dependent wave operators
was established  long ago, even for 
  $N$-body Stark Hamiltonians 
 \cite{AT1,AT2,HMS2}. In our sequel paper \cite{AIIS2} we study the
 stationary scattering theory for the one-body problem for a  more
 restrictive class of potentials than considered here. In particular we
 shall derive detailed information on the scattering matrix using
  results from this paper (in particular Sommerfeld's uniqueness result).

We prove our results using the commutator scheme 
developed by Ito--Skibsted \cite{IS}, 
and  the choice of an escape function $f$, given by \eqref{eq:181215},
is a  novelty of the present paper. 
In the scheme of \cite{IS} 
an escape function plays an central role, 
generating the `conjugate operator' $A$ and the associated Besov spaces 
$\mathcal B,\mathcal B^*$ and $\mathcal B^*_0$, 
see \eqref{eq:def-A} and \eqref{eq:1812156}, respectively. 
Our escape function $f$ is intimately related to  parabolic coordinates, 
and it has several appealing  features from a classical mechanical viewpoint. 
As far as we know, it seems to be the first time  such $f$ is employed  
in commutator theory of Stark Hamiltonians, although   superficially 
there is some similarity  with a construction of \cite{HMS1} (for
example). We  refer to \cite{I1,I2} 
for applications of the scheme of  \cite{IS} 
to repulsive one-body Hamiltonians.

It is well known that  Mourre theory \cite{Mo}, under conditions on
the potential, yields LAP for Stark Hamiltonians. Although we call $A$
a `conjugate operator' it does not conform with the notion of
conjugate operator of \cite{Mo}, in fact our $A$ is bounded relatively
to the Hamiltonian (like the $A$ of \cite{IS}).
Nevertheless the commutator $\mathrm i[H,A]$ possesses some positivity
justifying our terminology (of course this positivity is very weak 
 and spatially non-uniform), 
see Lemma~\ref{lemma:commutator-calculation} with $\Theta\equiv 1$.

This paper is organized as follows. 
In Section~\ref{sec:Setting and results} we present all the assumptions
and all the main results of the paper.
Section~\ref{sec:Conjugate operator} is a short preliminary for proofs 
of the main results, where we implement a commutator computation.
Section~\ref{sec:180629} is devoted to the proof of Rellich's theorem,
and Section~\ref{sec:181010} to that of LAP bounds.
In Section~\ref{sec:181207} we first prove 
the radiation condition bounds for {complex} values of the spectral parameter,
and then we prove LAP, the radiation condition bounds for {real} values of the spectral parameter
and Sommerfeld's uniqueness result.

\section{Setting and results}\label{sec:Setting and results}

In this section we precisely formulate our setting, 
and then state all the main results of the paper. 

Throughout the paper we fix our \emph{escape function}: 
\begin{align}
f=f(x,y)
=\chi(r+x)+\bigl[1-\chi(r+x)\bigr](r+x)^{1/2}
;\quad
r=(x^2+y^2)^{1/2},
\label{eq:181215}
\end{align}
where $\chi \in C^\infty(\mathbb R)$ is a real-valued and smooth cut-off function satisfying 
\begin{equation}\label{eq:chidef}
\chi(s)
=\left\{\begin{array}{ll}
1 &\mbox{for } s \le 1, \\
0 &\mbox{for } s \ge 2,
\end{array}
\right.
\quad
\tfrac{\mathrm d}{\mathrm ds}\chi=\chi'\le 0.
\end{equation}
Such choice of $f$ is stimulated by \cite{HMS1}, 
but ours is completely different from theirs. 
One particular difference is that the level surfaces of $f$ are paraboloids, 
while those of \cite{HMS1} are distorted spheres. 
Actually $r+x$ is exactly one of the components of a choice of  parabolic coordinates in $\mathbb R^d$.
Thus the gradient vector field of $f$ is tangent to another family of 
paraboloids of the converse direction,
which asymptotically conforms better with the classical orbits of the Stark Hamiltonian.
It is well known that in the parabolic coordinates 
the method of separation of variables works for the free Stark
Hamiltonian, see e.g.\ \cite{Ti},
however our motivation is different.

Letting $F(S)$ be the characteristic function of a general subset $S\subseteq \mathbb R^d$,
we set
\begin{align}
F_n&=F\bigl(\bigl\{ (x,y)\in \mathbb R^d\,\big|\,2^n\le f(x,y)<2^{n+1}\bigr\} \bigr)
\quad \text{for }n\in\mathbb N_0=\mathbb N\cup\{0\}.
\label{eq:190316}
\end{align}
Then define the \emph{Besov spaces $\vB$, $\mathcal B^*$ and $\mathcal B^*_0$ associated with $f$} as 
\begin{align}
\begin{split}
\vB&=
\Bigl\{\psi\in L^2_{\mathrm{loc}}(\mathbb R^d)\,\Big|\, 
\sum_{n\in\N_0}2^{n/2}\|F_n\psi\|_{L^2}<\infty\Bigr\},
\\
\mathcal B^*&=
\Bigl\{\psi\in L^2_{\mathrm{loc}}(\mathbb R^d)\,\Big|\, 
\sup_{n\in\mathbb N_0}2^{-n/2}\|F_n\psi\|_{L^2}<\infty\Bigr\},
\\
\mathcal B^*_0
&=
\Bigl\{\psi\in \mathcal B^*\,\Big|\, \lim_{n\to\infty}2^{-n/2}\|F_n\psi\|_{L^2}=0\Bigr\}.
\end{split}
\label{eq:1812156}
\end{align}
Note that these are Banach spaces with respect to the norms
\begin{align*}
\|\psi\|_\vB
&=\sum_{n\in\N_0}2^{n/2}\|F_n\psi\|_{L^2},
\quad
\|\psi\|_{\mathcal B^*}=\|\psi\|_{\mathcal B^*_0}
=\sup_{n\in\mathbb N_0}2^{-n/2}\|F_n\psi\|_{L^2}.
\end{align*}
Note also that, if we introduce the \emph{$f$-weighted $L^2$-spaces of order $s\in\mathbb R$} as 
$$L^2_s=f^{-s}L^2,$$
then for any $s>1/2$ the following proper inclusions hold:
\begin{align}\label{eq:inclusion-relations}
L^2_s \subsetneq \vB \subsetneq L^2_{1/2} \subsetneq L^2 
\subsetneq L^2_{-1/2} \subsetneq \vB_0^* \subsetneq \vB^* \subsetneq L^2_{-s}.
\end{align}

Our first theorem is \emph{Rellich's theorem}, which asserts absence of 
generalized eigenfunctions in $\mathcal B^*_0$ for $H$  under the
following conditions on $q$,  which by \cite[Theorems X.29 and
X.38]{RS} are
sufficient for 
essentially self-adjointness of $H$ on $C^\infty_\c(\R^d)$.
Define a differential operator $\partial^f$ in direction to $\mathop{\mathrm{grad}} f$ as 
\begin{align}
\partial^f=(\partial f)\partial
=(\partial_xf)\partial_x+(\partial_yf)\partial_y.
\label{eq:180925}
\end{align}
 Let 
$H^s=H^s(\mathbb R^d)$ denote  the standard Sobolev space.
\begin{cond}\label{cond:1803131348}
There exists a splitting $q=q_1+q_2+q_3$ 
by real-valued measurable functions $q_j$, $j=1,2,3$, such that 
for some $\rho,C>0$:
\begin{enumerate}[(1)]
\item
$q_1$ is continuously differentiable, and satisfies for any $(x,y)\in \mathbb R^d$
\begin{align}
|q_1(x,y)|\le Cf^{-\rho}r,\quad 
\partial^f q_1(x,y)\le Cf^{-1-\rho}
;
\label{eq:1812151221}
\end{align}
\item
$q_2$ satisfies for any $(x,y)\in \mathbb R^d$ 
\begin{align*}
|q_2(x,y)|\le Cf^{-1-\rho}
;
\end{align*}
\item
$q_3$ is compactly supported, and 
the associated multiplication operator is compact as $H^2\to H^{0}$.
\end{enumerate}
\end{cond}
\begin{remark}\label{remark:fbndg}
  Note that $f^2\le 2r$ holds true outside some compact subset of
  $\mathbb R^d$, but the converse $cr\le f^2$ is false.  Note also
  that $|\partial f|^2=\tfrac12r^{-1}$ for $f$ large, cf.\
  \eqref{eq:180816}, and that in general $|\partial f|^2\leq Cr^{-1}$.
\end{remark}

\begin{cond}\label{cond:smooth2wea3n2} 
If $\phi\in L^2_{\mathrm{loc}}(\mathbb R^d)$ satisfies 
\begin{enumerate}
\item
$(H-\lambda)\phi=0$ for some $\lambda\in\mathbb R$ in the distributional sense,
\item
$\phi=0$ on a non-empty open subset of $\mathbb R^d$, 
\end{enumerate}
then $\phi=0$ on $\mathbb R^d$.
\end{cond}

\begin{remark*}
The property required in Condition~\ref{cond:smooth2wea3n2} is called the 
\emph{unique continuation property}. 
We consider it as a rather independent topic and will not discuss it in this paper, 
only referring to \cite{Wo} for some criteria. 
One sufficient condition in our setting is that, quite roughly speaking, 
`singularities' of $q_3$ do not separate 
the space $\mathbb R^d$ into plural components. 
In particular, if $q_3\equiv 0$, Condition~\ref{cond:smooth2wea3n2} holds automatically.
\end{remark*}

Using the function $\chi$ from \eqref{eq:chidef},
we define \emph{smooth cut-off functions} 
$\chi_m,\bar\chi_m,\chi_{m,n}\in C^\infty(\R^d)$ for $m,n\in\mathbb N_0$ as 
\begin{equation}\label{eq:chimn}
\chi_m=\chi(f/2^m), \quad \bar \chi_m=1-\chi_m, \quad \chi_{m,n}=\bar\chi_m\chi_n.
\end{equation}

\begin{thm}\label{thm:priori-decay-b_0} 
Assume Conditions~\ref{cond:1803131348} and \ref{cond:smooth2wea3n2}. Let $\lambda\in\mathbb R$.
If $\phi\in L^2_{\mathrm{loc}}(\mathbb R^d)$ satisfies  
\begin{enumerate}[(1)]
\item
$(H-\lambda)\phi=0$ in the distributional sense,
\item
$\bar\chi_{m_0}\phi\in \mathcal B_0^*$ for some $m_0\in\mathbb N_0$,
\end{enumerate}
then $\phi=0$ on $\mathbb R^d$.
\end{thm}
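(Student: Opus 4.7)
The plan is to combine the weighted commutator scheme of Ito--Skibsted with the unique continuation property of Condition~\ref{cond:smooth2wea3n2}. By the latter it suffices to prove that $\phi$ vanishes on some nonempty open subset of $\mathbb R^d$; the hypothesis $\bar\chi_{m_0}\phi\in\mathcal B_0^*$ already provides a weak form of decay at infinity, and the task is to amplify this into actual vanishing in an unbounded region.

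The heart of the argument is a commutator bootstrap driven by the forthcoming Lemma~\ref{lemma:commutator-calculation} (taken with $\Theta\equiv 1$), which extracts a positive lower bound from $\mathrm i[H,A]$ modulo terms controlled via Condition~\ref{cond:1803131348}. For a pair of cutoffs $(\chi_m,\chi_n)$ with $m_0\le m\le n$ and a weight $w=w(f)\ge 0$, one considers the symmetric observable $B=\chi_{m,n}\,w(f)A\chi_{m,n}+\mathrm{h.c.}$ and exploits the formal identity $\langle\phi,\mathrm i[H,B]\phi\rangle=0$. The principal part of the expansion is proportional to $w\,\mathrm i[H,A]$ and yields a positive quadratic form in $\bar\chi_m\phi$; the $\chi_n$-boundary remainder tends to $0$ as $n\to\infty$ thanks to $\bar\chi_{m_0}\phi\in\mathcal B_0^*$, while the potential contributions from $q_1,q_2$ are absorbed using the pointwise bounds of Condition~\ref{cond:1803131348} combined with $|\partial f|^2\le Cr^{-1}$ (Remark~\ref{remark:fbndg}). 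The resulting estimate of the form $\|w^{1/2}\bar\chi_m\phi\|_{L^2}<\infty$ is then iterated: starting from $w\equiv 1$ (which holds by hypothesis via the inclusions \eqref{eq:inclusion-relations}) one climbs up to polynomial weights $w=f^{2s}$ for every $s>0$, and then to an exponential weight of the form $w=\exp(\epsilon f^{\alpha})$ for some $\alpha>0$.

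Once $\phi$ is known to decay faster than every exponential $\exp(-\epsilon f^\alpha)$ in the region $\{f\ge 2^{m_0+1}\}$, a rigidity argument (for instance a Carleman-type estimate tailored to $H-\lambda$, or a direct ODE computation along the Stark direction in which the eigenvalue equation forces the tail of $\phi$ to grow back unless $\phi$ already vanishes there) produces vanishing of $\phi$ on a nonempty open set, and Condition~\ref{cond:smooth2wea3n2} then propagates this to all of $\mathbb R^d$. The delicate part will be the very first round of the bootstrap: as stressed in the introduction, the positivity of $\mathrm i[H,A]$ is weak and spatially non-uniform, the escape function $f\sim(r+x)^{1/2}$ degenerates with $|\partial f|\sim r^{-1/2}$, and the sub-linear perturbation $q_1$ (permitted to grow like $f^{-\rho}r$) only barely fails to overpower this positivity, so every error term in the weighted commutator identity must be tracked with the precise powers of $f$ and $|\partial f|$ in order for the estimate to close and for the iteration to proceed.
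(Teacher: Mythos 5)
Your overall architecture --- weighted commutator estimates bootstrapped to super\--exponential decay, followed by a rigidity step and unique continuation --- is the same as the paper's, but two of its load-bearing components are missing, and they are exactly the points where the argument could fail to close. First, the decay one needs is not ``$\exp(\epsilon f^{\alpha})$ for some $\alpha>0$'' but $\exp(\alpha f^3)$ for \emph{every} $\alpha\ge 0$: the cubic power of $f$ (i.e.\ $(r+x)^{3/2}$, the Agmon/Airy rate for a linear potential) is dictated by the Stark term, and it is what makes the key form inequalities close --- with $\theta\sim\alpha f^3$ the term $f^{-1}|\partial f|^4\theta'^2$ is comparable to $\alpha^2f^3r^{-2}$, which is the positivity ultimately exploited. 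Moreover, to reach all $\alpha$ you must show that each bootstrap step gains a \emph{fixed} increment $\beta>0$ in the coefficient of $f^3$, uniformly in $\alpha$ on compacts; the paper arranges this by regularizing the weight as $2\alpha f^3+6\beta\int_0^f s^2(1+s/2^\nu)^{-3-\delta}\,\mathrm ds$ and running a contradiction argument on $\alpha_0=\sup\{\alpha\,:\,\bar\chi_{m_0}\mathrm e^{\alpha f^3}\phi\in\mathcal B_0^*\}$ (Lemma~\ref{lem:14.10.4.1.17ffaabb} and Proposition~\ref{prop:Absence of super-exponentially decaying ef}). Your proposal specifies neither the exponent nor the mechanism guaranteeing a uniform gain per step.

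Second, passing from super-exponential decay to vanishing on an open set is not a soft consequence; it is a second quantitative estimate, and it is obtained from the \emph{same} commutator machinery rather than from an external black box. The paper's Proposition~\ref{prop:Absence of super-exponentially decaying efb} uses $\Theta=\chi_{m,n}\mathrm e^{2\alpha f^3}$ and the lower bound $c\alpha^2f^3r^{-2}\Theta$ of Lemma~\ref{lemma:alpha^2-estimate}; after normalizing the weight to $\mathrm e^{\alpha(f^3-2^{3(m+2)})}$ the inner boundary term stays bounded in $\alpha$, so letting $\alpha\to\infty$ forces $\bar\chi_{m+2}\phi\equiv 0$, and only then does Condition~\ref{cond:smooth2wea3n2} enter. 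Your suggested ``direct ODE computation along the Stark direction'' does not apply to a non-separable $q$, and the ``Carleman-type estimate'' you invoke \emph{is} this $\alpha^2$-estimate --- it has to be proved, not assumed. Finally, a small but real error: by \eqref{eq:inclusion-relations} one has $L^2\subsetneq\mathcal B_0^*$, not the reverse, so your base case ``$w\equiv 1$ holds by hypothesis'' is false; the first application of the commutator inequality must run on the $\mathcal B_0^*$ information alone (the $\mathcal B_0^*$ hypothesis kills the outer boundary term as $n\to\infty$), and it yields only a weighted conclusion of the form $\bar\chi_m f^{-1/2}\mathrm e^{\beta f^3}\phi\in L^2$ --- which is nevertheless enough to start the iteration, since the exponential gain absorbs the polynomial loss.
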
 

\begin{remark*} We show in \cite{AIIS2} that under more restrictive
  conditions on $q$ there are lots of generalized eigenfunctions in
  $\mathcal B^*$, see  Remark \ref{rem:1812248} below. Thus we can consider
  Theorem~\ref{thm:priori-decay-b_0} to be  optimal.
\end{remark*}

The  proof of Theorem~\ref{thm:priori-decay-b_0} will be given in Section~\ref{sec:180629}.
The following corollary is obvious by Theorem~\ref{thm:priori-decay-b_0}. 

\begin{corollary}\label{cor:19020718}
There is no pure point spectrum for $H$, that is  $\sigma_{\mathrm{pp}}(H)=\emptyset$.
\end{corollary}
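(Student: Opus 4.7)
The plan is to deduce this immediately from Rellich's theorem (Theorem~\ref{thm:priori-decay-b_0}), using the inclusion chain \eqref{eq:inclusion-relations} to place an $L^2$-eigenfunction inside the space $\mathcal B^*_0$ to which the theorem applies.

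First, I would argue by contradiction: suppose $\lambda\in\sigma_{\mathrm{pp}}(H)$, and pick a nonzero eigenfunction $\phi\in L^2(\mathbb R^d)$ satisfying $(H-\lambda)\phi=0$. Since Condition~\ref{cond:1803131348} (as noted before the condition is stated) guarantees essential self-adjointness of $H$ on $C^\infty_{\mathrm c}(\mathbb R^d)$ by \cite[Thms.~X.29, X.38]{RS}, the eigenvalue $\lambda$ is automatically real, so hypothesis (1) of Theorem~\ref{thm:priori-decay-b_0} is satisfied.

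Next I would verify hypothesis (2). By \eqref{eq:inclusion-relations} we have $L^2\subsetneq L^2_{-1/2}\subsetneq \mathcal B^*_0$, hence $\phi\in\mathcal B^*_0$. Since the cut-off $\bar\chi_{m_0}$ defined in \eqref{eq:chimn} is bounded by $1$, multiplication by $\bar\chi_{m_0}$ preserves $\mathcal B^*_0$ (indeed $\|F_n\bar\chi_{m_0}\phi\|_{L^2}\le \|F_n\phi\|_{L^2}$), so $\bar\chi_{m_0}\phi\in\mathcal B^*_0$ for every $m_0\in\mathbb N_0$. Theorem~\ref{thm:priori-decay-b_0} then forces $\phi=0$, which contradicts the choice of $\phi$. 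Therefore $\sigma_{\mathrm{pp}}(H)=\emptyset$.

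There is no genuine obstacle in this corollary; it is purely a matter of confirming that $L^2$-eigenfunctions automatically meet the decay hypothesis of the Rellich-type theorem through the embedding $L^2\hookrightarrow \mathcal B^*_0$. All the substantive work lies upstream in the proof of Theorem~\ref{thm:priori-decay-b_0}.
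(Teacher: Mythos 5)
Your proof is correct and is exactly the argument the paper intends (the paper simply calls the corollary ``obvious by Theorem~\ref{thm:priori-decay-b_0}''): an $L^2$-eigenfunction of the self-adjoint $H$ has a real eigenvalue and lies in $\mathcal B^*_0$ by \eqref{eq:inclusion-relations}, so Rellich's theorem forces it to vanish.
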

\begin{remark*}
Theorem~\ref{thm:priori-decay-b_0} and hence Corollary~\ref{cor:19020718}
hold true also for an escape function  
\begin{align} 
f_1=f_1(x,y) 
=\chi(x)+[1-\chi(x)]|x|^{1/2} 
\label{eq:190323} 
\end{align} 
instead of \eqref{eq:181215}. Obviously
Theorem~\ref{thm:priori-decay-b_0} is a  stronger statement with $f$
rather than  $f$ replaced by $f_1$. 
The setting with  $f_1$ is very similar to the one of \cite{Y1}. 
We note that \cite{Y1} does not discuss absence of eigenvalues,
and that the assumptions are not completely comparable. 
For example we allow a growing long-range part in the direction of  the Stark field,
while in \cite{Y1} the potential can only grow in the classically
forbidden region. In the direction of the field the potential  in
\cite{Y1} is assumed to be short-range. On the other hand
the  singular part in \cite{Y1} 
can have unbounded support. 
\end{remark*}

Our second theorem is \emph{LAP bounds} for the resolvent
\begin{equation*}
R(z)=(H-z)^{-1}\ \ \text{for }z\in\mathbb C\setminus\mathbb R.
\end{equation*}
We shall need  an additional
condition to treat the classical forbidden region.
\begin{cond}\label{cond:1803131348b}
Conditions~\ref{cond:1803131348} and \ref{cond:smooth2wea3n2} hold.
In addition, for any $f_0\ge 1$
\begin{align*}
\lim_{\mu\to-\infty}\Bigl(\inf\bigl\{-x+q(x,y)\,|\,x<\mu,\, f(x,y)\le f_0\bigr\}\Bigr)
=\infty
.
\end{align*}
\end{cond}

For a compact interval $I \subseteq \R$ we write 
\begin{equation*}
I_\pm 
= \bigl\{z\in\C \, \big|\, \mathop{\mathrm{Re}}z\in I, \ \pm\mathop{\mathrm{Im}}z\in (0,1)\bigr\},
\end{equation*}
respectively. 
In addition, we introduce a differential operator $p^f$ and a matrix $\ell$ as 
\begin{align}
p^f=-\mathrm i\partial^f,\quad 
\ell_{jk} 
&=|\partial f|^2\delta_{jk}-(\partial_j f)(\partial_k f),
\label{eq:180925b}
\end{align}
cf.\ \eqref{eq:180925}.
Note that $\ell$ represents a projection onto the orthogonal complement of 
$\mathop{\mathrm{grad}}f$, scaled by $|\partial f|^2$.
In particular, $\ell$ is non-negative. 

The {Einstein summation convention} is adopted throughout the paper,
although tensorial superscripts are avoided.
For a general linear operator $T$
we write $\langle T \rangle_\psi = \langle \psi, T\psi \rangle$.

\begin{thm}\label{thm:lap-bounds}
Assume Condition~\ref{cond:1803131348b}. Let $I\subseteq\R$ be a compact interval.
Then there exists $C>0$ such that for any $z\in I_\pm$ and $\psi\in\vB$ 
the state $\phi=R(z)\psi$ satisfies
\begin{equation*}
\|\phi\|_{\vB^*} 
+ \|(1-x/r)^{1/2}\phi\|_{L^2_{-1/2}}
+ \|p^f\phi\|_{\vB^*} 
+ \langle p_jf^{-1}\ell_{jk}p_k \rangle_{\phi} ^{1/2}
\le C\|\psi\|_\vB.
\end{equation*}
\end{thm}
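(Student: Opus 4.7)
Since $z \in I_\pm$ has non-zero imaginary part, $\phi = R(z)\psi\in L^2\cap\mathcal D(H)$ and all commutator manipulations below are rigorous. The plan is a two-step argument: first derive a weighted commutator a priori bound carrying an $L^2_{-s}$-remainder, then absorb the remainder by a contradiction argument that invokes Rellich's Theorem~\ref{thm:priori-decay-b_0}.

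For Step~1, the starting identity is
$$\langle\mathrm i[H,B]\rangle_{\phi}=2\,\mathrm{Im}\langle B\phi,(H-z)\phi\rangle-2(\mathrm{Im}\,z)\langle B\rangle_{\phi},$$
applied with $B$ a symmetric operator built from the conjugate operator $A$ (and a zeroth-order companion) weighted by the dyadic cut-offs $\chi_{m,n}$ from \eqref{eq:chimn}. The right-hand side is controlled by $C\|\psi\|_{\mathcal B}\|\phi\|_{\mathcal B^*}+C\|\phi\|_{L^2_{-s}}^{2}$ for some $s>1/2$, uniformly in $z\in I_\pm$ (using $|\mathrm{Im}\,z|\le 1$). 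By Lemma~\ref{lemma:commutator-calculation}, the left-hand side is bounded from below by the positive contributions
$$c\bigl(\|p^{f}\phi\|_{\mathcal B^*}^{2}+\langle p_j f^{-1}\ell_{jk}p_k\rangle_{\phi}+\|(1-x/r)^{1/2}\phi\|_{L^2_{-1/2}}^{2}\bigr),$$
up to absorbable errors. The $(1-x/r)^{1/2}$ weight originates in the Stark-field commutator $[-x,A]\sim -\partial^{f}x$, and the classically forbidden region $x\ll 0$ with $f$ bounded is handled by Condition~\ref{cond:1803131348b}; errors from the long-range part $q_1$ are controlled through $\partial^{f}q_1\le Cf^{-1-\rho}$ together with $|\partial f|^{2}\sim (2r)^{-1}$ (Remark~\ref{remark:fbndg}). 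A companion commutator computation with a scalar $f$-weight yields the corresponding estimate for $\|\phi\|_{\mathcal B^*}^{2}$. Dyadic summation in $(m,n)$ converts the weighted bounds into the Besov form, giving
$$X^{2}\le C\|\psi\|_{\mathcal B}X+C\|\phi\|_{L^2_{-s}}^{2},$$
where $X$ abbreviates the full left-hand side of the theorem.

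For Step~2, argue by contradiction. If the bound failed, there would exist sequences $z_n\in I_\pm$ and $\psi_n\in\mathcal B$ with $\|\psi_n\|_{\mathcal B}\to 0$ while $X_n=1$ for $\phi_n=R(z_n)\psi_n$. Extracting a subsequence, $z_n\to\lambda\in I$ (the closure of $I_\pm$ meets $\mathbb R$ in $I$), $\phi_n\rightharpoonup\phi$ in the weak-$*$ topology of $\mathcal B^*$, and $\phi_n\to\phi$ strongly in $L^2_{\mathrm{loc}}$ by elliptic regularity, so that $(H-\lambda)\phi=0$ distributionally. The Step~1 estimate applied to the tails $\bar\chi_{m_0}\phi_n$, combined with $\psi_n\to 0$, passes to the limit to give $\bar\chi_{m_0}\phi\in\mathcal B^{*}_{0}$ for some $m_0$. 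Theorem~\ref{thm:priori-decay-b_0} then forces $\phi=0$, hence $\phi_n\to 0$ in $L^2_{-s}$, contradicting $X_n=1$ in view of Step~1.

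The main obstacle is Step~1: performing the commutator calculation in a manner uniform in the dyadic parameters that extracts \emph{all four} non-trivial positive terms of the claimed bound simultaneously, while absorbing the errors from the cut-off commutators and from the unbounded potentials $-x$ and $q_1$. The choice of the parabolic escape function $f$ is essential, since its gradient structure $|\partial f|^{2}\sim(2r)^{-1}$ converts the $Cf^{-\rho}r$ bound on $q_1$ into the dyadically summable $Cf^{-\rho}$ needed to absorb the long-range error; likewise, the tangential projection $\ell$ and the Stark-direction control $(1-x/r)^{1/2}$ both arise naturally from the geometry of the level paraboloids of $f$. Step~2 is then routine given Step~1 and Rellich's theorem.
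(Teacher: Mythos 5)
Your overall architecture coincides with the paper's: a weighted commutator estimate (the paper's Proposition~\ref{prop:lap-bound-key-estimate}, built on Lemma~\ref{lemma:commutator-calculation} with the weight \eqref{eq:1810021842}) whose positive terms are exactly the four quantities in the theorem and whose remainder is a term supported at small $f$, followed by a contradiction argument that produces a $\vB_0^*$-generalized eigenfunction and invokes Theorem~\ref{thm:priori-decay-b_0}. Your dyadic summation in $(m,n)$ versus the paper's supremum over the regularization parameter $\nu$ is a cosmetic difference.

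The genuine gap is in your Step~2, which you declare ``routine''. You claim $\phi_n\to\phi$ strongly in $L^2_{\mathrm{loc}}$ ``by elliptic regularity'' and then conclude $\phi_n\to 0$ in $L^2_{-s}$. But $L^2_{-s}=f^{s}L^2$ is weighted by $f$, and the sublevel sets $\{f\le f_0\}$ are \emph{unbounded} paraboloidal regions extending to $x\to-\infty$; convergence on compact sets together with the uniform $\vB^*$-bound does not control the mass of $\phi_n$ in the classically forbidden region $\{x<\mu,\ f\le f_0\}$ as $\mu\to-\infty$. This is precisely where Condition~\ref{cond:1803131348b} enters the paper: Proposition~\ref{prop:lap-bound-key-estimateb} proves compactness of $\chi_lP_H(I)$ by combining the Rellich embedding on compacts with the uniform estimate $\|\eta_\nu\phi_k\|^2\le\epsilon\langle -x+q\rangle_{\eta_\nu\phi_k}\le\epsilon\langle H\rangle_{\eta_\nu\phi_k}$, where $\eta_\nu$ cuts off to $x\lesssim-2^\nu$. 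You invoke Condition~\ref{cond:1803131348b} only for commutator errors in Step~1, where the paper in fact does not need it (Proposition~\ref{prop:lap-bound-key-estimate} assumes only Condition~\ref{cond:1803131348}); without the compactness input, both the strong $L^2_{-s}$ convergence and the limit passage yielding $\phi\in\vB_0^*$ fail, and the contradiction cannot be closed. Once this local compactness is supplied, the rest of your argument matches the paper's.
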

\begin{remark*}

The finiteness of the second term on the left-hand side means that 
$\phi$ has a slightly stronger decay rate 
in directions not parallel to $x$, cf.\ \eqref{eq:inclusion-relations}. 
The bound  actually reproduces a result of \cite{A} for the $1$-body case.
Similarly, the derivatives $p\phi$ have slightly stronger decay rates
in directions orthogonal to $\mathop{\mathrm{grad}}f$, as expressed by
the finiteness of the fourth term,
 cf. \eqref{eq:18121514} below. 
\end{remark*}

The  proof of Theorem~\ref{thm:lap-bounds} will be given in Section~\ref{sec:181010}.
The following corollary follows directly from Theorem~\ref{thm:lap-bounds}. 

\begin{corollary}\label{cor:19020719}
There is no singularly continuous spectrum for $H$, that is  $\sigma_{\mathrm{sc}}(H)=\emptyset$.
\end{corollary}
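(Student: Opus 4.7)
The plan is to deduce the corollary from Theorem~\ref{thm:lap-bounds} via the standard Stone formula / LAP argument, which is essentially routine once the resolvent bound is in hand.

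First, observe that the inclusion chain \eqref{eq:inclusion-relations} gives $\mathcal B \subseteq \mathcal B^*$, and $\mathcal B$ is dense in $L^2(\mathbb R^d)$ (it contains $C^\infty_{\mathrm c}(\mathbb R^d)$, for instance). For any compact interval $I \subseteq \mathbb R$ and any $\psi\in\mathcal B$, Theorem~\ref{thm:lap-bounds} together with the $\mathcal B$-$\mathcal B^*$ duality yields
\begin{equation*}
\sup_{z\in I_+\cup I_-}\bigl|\langle \psi, R(z)\psi\rangle\bigr|
\le \|\psi\|_{\mathcal B}\sup_{z\in I_+\cup I_-}\|R(z)\psi\|_{\mathcal B^*}
\le C\|\psi\|_{\mathcal B}^2.
\end{equation*}

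Next I would invoke the standard Stone formula: for any bounded Borel set $J\subseteq I$ and $\psi\in\mathcal B$,
\begin{equation*}
\langle \psi, E_H(J)\psi\rangle
=\frac{1}{\pi}\lim_{\varepsilon \downarrow 0}\int_J \mathop{\mathrm{Im}}\langle \psi, R(\lambda+\mathrm i\varepsilon)\psi\rangle\,\mathrm d\lambda.
\end{equation*}
Combined with the uniform bound above and the dominated convergence theorem, this gives
\begin{equation*}
\langle \psi, E_H(J)\psi\rangle\le \tfrac{C}{\pi}\,\|\psi\|_{\mathcal B}^2\,|J|,
\end{equation*}
so that the spectral measure $\mu_\psi=\langle \psi,E_H(\cdot)\psi\rangle$ is absolutely continuous on $I$ with locally bounded density. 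In particular, $E_H(I)\psi$ lies in the absolutely continuous subspace $\mathcal H_{\mathrm{ac}}(H)$ for every $\psi\in\mathcal B$.

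Finally I would close the argument using the closedness of $\mathcal H_{\mathrm{ac}}(H)$: since $\mathcal B$ is dense in $L^2(\mathbb R^d)$, the set $E_H(I)\mathcal B$ is dense in $E_H(I)L^2(\mathbb R^d)$; because $\mathcal H_{\mathrm{ac}}(H)$ is closed, we conclude $E_H(I)L^2(\mathbb R^d)\subseteq \mathcal H_{\mathrm{ac}}(H)$, and hence $\sigma_{\mathrm{sc}}(H)\cap I=\emptyset$. As $I$ was an arbitrary compact interval, $\sigma_{\mathrm{sc}}(H)=\emptyset$. There is really no obstacle here, as all the analytic work has been absorbed into Theorem~\ref{thm:lap-bounds}; the only thing worth flagging is that the argument crucially needs the boundary values $R(\lambda\pm\mathrm i 0)$ only in the very weak sense provided by the duality $\mathcal B\hookrightarrow \mathcal B^*$, which is already guaranteed by the stated LAP bound and does not require the extraction of genuine limits on the real axis.
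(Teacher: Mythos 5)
Your argument is correct and is precisely the standard deduction the paper has in mind when it states that the corollary ``follows directly from Theorem~\ref{thm:lap-bounds}'': the uniform bound $\sup_{z\in I_\pm}|\langle\psi,R(z)\psi\rangle|\le C\|\psi\|_{\mathcal B}^2$ via the $\mathcal B$--$\mathcal B^*$ pairing, Stone's formula giving $\mu_\psi(J)\le \tfrac{C}{\pi}\|\psi\|_{\mathcal B}^2|J|$, and then density of $\mathcal B$ together with closedness of $\mathcal H_{\mathrm{ac}}(H)$. Your closing remark that only the uniform bound, and not actual boundary values $R(\lambda\pm\mathrm i0)$, is needed is also accurate.
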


\begin{remark*}
Corollaries~\ref{cor:19020718} and \ref{cor:19020719} 
assert that  the spectrum $\sigma(H)$ is purely absolutely continuous. 
Although Theorems ~\ref{thm:priori-decay-b_0} and \ref{thm:lap-bounds} are much more detailed results,
stability of purely absolute continuous spectrum is of its own interest.
See e.g.\ \cite{BCDSSW,NP,K,Sa,CK} for related results, most of which depend
on $1$-dimensional techniques.
\end{remark*}

Thirdly, we provide  \emph{radiation condition bounds} for $R(z)$,
which describe the leading oscillation of the resolvent along $\mathop{\mathrm{grad}} f$.
Define a differential operator $A$ as 
\begin{equation} \label{eq:def-A}
A=[H_0,\mathrm if]=\mathop{\mathrm{Re}}p^f
=p^f-\tfrac{\mathrm i}2(\Delta f),
\end{equation}
cf.\ \eqref{eq:180925b}. Note that $A$ is essentially self-adjoint on
$C^\infty_\c(\R^d)$, and by  using
Condition \ref{cond:1803131348} and  Remark \ref{remark:fbndg} one
easily checks that $
\vD(A)\supseteq \vD(H)$. Let
$I\subseteq \mathbb R$ be a compact interval, and we choose an
asymptotic complex phase $a=a_z$ as
\begin{equation}\label{eq:phase-a}
a= \bar\chi_l
\Bigl[\sqrt{(r-q_1+z)/r} \pm \tfrac{\mathrm i}4fr^{-2}\Bigr]
\ \ \text{for }z\in I_\pm,
\end{equation}
respectively.
Here $l\in\mathbb N_0$ is chosen dependently on $I$ such that 
$\mathop{\mathrm{Re}}\bigl((r-q_1+z)/r\bigr)$ is uniformly positive 
for all $z\in I_\pm$ and $(x,y)\in\mathop{\mathrm{supp}}\bar\chi_l$. 
The branch of square root is fixed such that 
$\mathop{\mathrm{Re}}\sqrt w>0$ for $w\in\mathbb C\setminus (-\infty,0]$.

Let us further impose an additional condition 
that slightly strengthens the second bound of \eqref{eq:1812151221}. 
Let us use shorthand notation 
\begin{align}
\widetilde \partial=|\partial f|\partial,\quad 
\widetilde p=|\partial f|p.
\label{eq:18121512}
\end{align}
Note that then in particular  we have 
\begin{align}
(p^f)^*p^f+p_j\ell_{jk}p_k=p|\partial f|^2p=(\widetilde p)^*\widetilde p.
\label{eq:18121514}
\end{align}

\begin{cond}\label{cond:additional-condition}
Condition~\ref{cond:1803131348b} holds.
In addition, 
there exist $\tilde\rho, C>0$ such that
\begin{equation*}
|\widetilde\partial q_1|\le Cf^{-1-\tilde\rho}.
\end{equation*}
\end{cond}

With $\rho,\tilde\rho>0$ from Condition~\ref{cond:1803131348} and \ref{cond:additional-condition} we set 
\begin{equation*}
\beta_c = \min\{1/2,\rho,\tilde\rho\}.
\end{equation*}

\begin{thm}\label{thm:RC-bound}
Assume Condition~\ref{cond:additional-condition}.
Let $I\subseteq\R$ be a compact interval, 
and choose $l\in\mathbb N_0$ as above.
Then for all 
$\beta\in [0,\beta_c)$ there exists $C>0$ such that 
for any $z\in I_\pm$ and $\psi\in f^{-\beta}\vB$ 
the states $\phi=R(z)\psi$ satisfy
\begin{equation*}
\|f^\beta(1-x/r)^{1/2}\phi\|_{L^2_{-1/2}} 
+\|f^\beta(A\mp a)\phi\|_{\vB^*} 
+ \langle p_jf^{2\beta-1}\ell_{jk}p_k \rangle_{\phi}^{1/2} 
\le 
C\|f^\beta\psi\|_\vB,
\end{equation*}
respectively.
\end{thm}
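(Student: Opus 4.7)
The plan is to prove the bound by induction on $\beta \in [0, \beta_c)$ via a propagation-observable argument at complex spectral parameter $z$, followed by passage to real $z$. The base case $\beta = 0$ follows from Theorem \ref{thm:lap-bounds} together with the observation that $a$ is uniformly bounded for $z \in I_\pm$. The inductive step raises $\beta$ by a small increment $\delta < \beta_c - \beta$. Throughout, one works with $z \in I_\pm$ satisfying $\mathrm{Im}(z) \neq 0$ so that $\phi = R(z)\psi \in \vD(H)$ and all manipulations are a priori justified, and one derives bounds with constants uniform in $z$; the real-$z$ statement then follows by weak-$\ast$ compactness in $\vB^*$ together with the density of $\vB$.

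The working tool is the Heisenberg identity
\[
\langle \mathrm i[H, \Phi]\rangle_\phi
= 2\,\mathrm{Im}\,\langle \psi, \Phi\phi\rangle \;\mp\; 2\,\mathrm{Im}(z)\,\langle \Phi\rangle_\phi,
\]
applied to a non-negative propagation observable of the schematic form
\[
\Phi = \chi_{m,n}\bigl[(A \mp a)^* f^{2\beta-1}|\partial f|^{-2}(A \mp a) + \alpha\, f^{2\beta-1}(1 - x/r) + \gamma\, p_j f^{2\beta-1}\ell_{jk} p_k\bigr]\chi_{m,n},
\]
with small positive constants $\alpha, \gamma$ and the dyadic cut-offs $\chi_{m,n}$ from \eqref{eq:chimn}. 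Non-negativity of $\Phi$ makes the boundary term $\mp 2\,\mathrm{Im}(z)\langle \Phi\rangle_\phi$ have the favourable sign, so the core task is to extract from $\mathrm i[H, \Phi]$ a positive quadratic form whose diagonal entries recover the three squared densities on the left-hand side of the theorem. The defining property of the phase $a$ is crucial: its square matches the eikonal of the Stark Hamiltonian along $\mathop{\mathrm{grad}} f$, so that $\mathrm i[H, A \mp a]$ is of lower order on the relevant states, while the imaginary correction $\pm \mathrm i f r^{-2}/4$ supplies additional positivity wherever $(r - q_1 + \mathrm{Re}\,z)/r$ degenerates and in the classically forbidden region.

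The body of the argument is a careful weight analysis of the commutator remainders using Lemma \ref{lemma:commutator-calculation}. Kinetic remainders from commuting $H_0$ past $f^{2\beta-1}|\partial f|^{-2}$ scale as $f^{2\beta-5/2}$ and are absorbed by a small multiple of the principal positive terms. The $q_1$-remainders split into a longitudinal contribution of order $f^{2\beta-1-\rho}$, controlled by the $\partial^f q_1$-bound of Condition \ref{cond:1803131348}, and a transverse contribution of order $f^{2\beta-1-\tilde\rho}$, controlled by the $\widetilde\partial q_1$-bound of Condition \ref{cond:additional-condition}; their absorption via the inductive hypothesis at $\beta - \delta$ is precisely what enforces the threshold $\beta_c = \min\{1/2, \rho, \tilde\rho\}$. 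The short-range term $q_2$ is harmless, $q_3$ is treated by local compactness as in Section \ref{sec:181010}, and the classically forbidden region $x \ll 0$ is controlled by Condition \ref{cond:1803131348b} through Agmon-type exponential decay. The dyadic cut-offs are then removed by partial summation over $n$ as in the proof of Theorem \ref{thm:lap-bounds}, followed by the limit $m \to \infty$.

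The hard part will be the simultaneous balancing of the three positive terms in $\Phi$: the constants $\alpha$ and $\gamma$ must be tuned so that the cross-coupling remainders between the three squared densities are dominated by a small fraction of the principal positive contribution while still leaving room to absorb the $q_1$-induced errors. This delicate algebraic balance, forced by the paraboloid geometry of $f$ and in particular by the presence of the transverse projection $\ell_{jk}$, is what makes the radiation-condition bounds genuinely more subtle than LAP.
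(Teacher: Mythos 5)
Your proposal is in the right family of arguments (weighted positive commutator with weight $f^{2\beta-1}$, positivity from $\mathop{\mathrm{Im}}a$, dyadic cut-offs removed at the end), but it has two genuine gaps. First, the sign claim for the boundary term is wrong for one half-plane: with $\Phi\ge 0$ and $\phi=R(z)\psi$, the Heisenberg identity gives $\langle\mathrm i[H,\Phi]\rangle_\phi=-2\mathop{\mathrm{Im}}\langle\psi,\Phi\phi\rangle+2\mathop{\mathrm{Im}}(z)\langle\Phi\rangle_\phi$, so for $z\in I_+$ the term $2\mathop{\mathrm{Im}}(z)\langle\Phi\rangle_\phi$ sits on the \emph{wrong} side of the inequality and must be controlled, not discarded; $\Phi$ is a second-order operator with an unbounded coefficient family, so this is not automatic. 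The paper sidesteps this entirely by working with the first-order form $\mathop{\mathrm{Im}}\bigl((A\mp a)^*\Theta(H-z)\bigr)$ (Lemma~\ref{lem:rc-bound-key-estimate}): evaluated at $\phi$ the factor $(H-z)$ becomes $\psi$ exactly, and no $\mathop{\mathrm{Im}}(z)$-term appears. Second, the key algebraic input is only gestured at. Saying that ``$\mathrm i[H,A\mp a]$ is of lower order'' is not enough; what is actually needed, and what the paper proves as Lemma~\ref{lem:1812101541}(2), is the exact factorization $\bar\chi_m(H-z)=\bar\chi_m\bigl[(A+a)r(A-a)+p_jr\ell_{jk}p_k+r-x+q_6\bigr]$ with $|q_6|\lesssim f^{-1-\min\{2,\rho,\tilde\rho\}}$. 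Substituting this into $\mathop{\mathrm{Im}}\bigl((A-a)^*\Theta(H-z)\bigr)$ is what produces the quadratic form $(A-a)^*(\cdots)(A-a)$ whose positivity rests on $\mathop{\mathrm{Im}}a\ge\tfrac14\bar\chi_lfr^{-2}$ and on $\Theta'$. Without stating and verifying this Riccati-type identity (which is where Condition~\ref{cond:additional-condition} on $\widetilde\partial q_1$ enters), the ``careful weight analysis'' cannot be carried out, and Lemma~\ref{lemma:commutator-calculation} alone does not supply it since it computes $[H,\mathrm iA]_\Theta$, not commutators with $(A\mp a)^*w(A\mp a)$.

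Two further points. Your attribution of the threshold $\beta_c$ is off: the constraint $\beta<1/2$ does not come from absorbing $q_1$-errors by an inductive hypothesis, but from the algebra of the main commutator, where the coefficients $(\tfrac12-\beta)$ and $(1-\beta-\epsilon_1)$ multiplying $f^{-1}(1-x/r)\Theta$ and $p_jf^{-1}\Theta\ell_{jk}p_k$ must stay positive (see \eqref{eq:rc-bound-key-estimatebbb} and \eqref{eq:rc-bound-key-estimatebbbb}); only $\beta<\min\{\rho,\tilde\rho\}$ comes from the error terms. Moreover no induction on $\beta$ is needed: the paper controls the right-hand side of the key estimate directly by the $\beta=0$ LAP bounds of Theorem~\ref{thm:lap-bounds}, together with an absorption of $\|\bar\chi_m^{1/2}f^\beta(A-a)\phi\|_{\vB^*}$ into the left-hand side --- a step that requires one to first justify that this quantity is finite for each fixed $z\in I_\pm$ (by commuting $R(z)$ with powers of $f$) and to use the bounded regularized weights $\theta_\nu^\delta$ with a supremum over $\nu$ and monotone convergence. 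Your proposal does not address this a priori finiteness, and its closing paragraph on passage to real $z$ concerns Corollary~\ref{cor:RC-bound-real} rather than the theorem, which is stated for complex $z$.
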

\begin{remark}\label{rem:1812248}
Our choice   of $a$ is partly taken for technical convenience. It is not
claimed to be canonical and we do not 
 consider Theorem \ref{cond:additional-condition} to be  optimal for
$\rho,\tilde \rho>1/2$. In fact we show in \cite{AIIS2} that in some cases,
 some $\beta>1/2$ are  allowed for a  different choice  of $a$ still having
$\|f^\beta(A\mp  a)\phi\|_{\vB^*}$ 
 finite for $\psi$ in a `good
space'. We take below (see Corollary \ref{cor:RC-bound-real}) the
spectral parameter $z$ to the real axis. Considering  for simplicity
only $z=0$ and
$q=0$ indeed $a=a_\pm^{{\rm sim}}:= \tfrac {f^2}{2r}\pm \tfrac{\mathrm i}4fr^{-2}$
 is a better choice in the sense that in fact any $\beta\in [0,4)$ can be
chosen in that case (note that intuitively $\tfrac {f^2}{2r}\approx
1$).  Note for comparison that Corollary \ref{cor:RC-bound-real} in
this case implies the bounds
\begin{align*}
 \|f^\beta(A\mp a_\pm^{{\rm sim}})\phi\|_{\vB^*} 
\le 
C_\beta\|f^\beta\psi\|_\vB\text{ for }\beta<1/2, 
\end{align*} but the result does not  imply this  bound if  $\beta\geq 1/2$.

In
\cite{AIIS2} we construct   WKB
approximations. For the above simple case these read
\begin{align*}
  \bigl(f^{d-2}r\bigr)^{-1/2}\exp\bigl(\pm\mathrm i\tfrac13f^3\bigr)\xi(y/f)\in\vB^*\setminus\vB_0^*
\end{align*} for a dense set of functions $\xi\in L^2(\R^{d-1})$ in the  other parabolic
coordinates
 $g=y/f$. Radiation bounds are  related to   WKB
approximations. Thus manifestly 
\begin{align*}
  (A\mp a_\pm^{{\rm sim}})\bigl(f^{d-2}r\bigr)^{-1/2}\exp\bigl(\pm\mathrm
  i\tfrac13f^3\bigr)\xi(g) =0\text{ for }f>\sqrt 2.
\end{align*} Of course this assertion relies on the particular form of
$ a_\pm^{{\rm sim}}$ (including the particular 
imaginary part).

\end{remark}

A proof of Theorem~\ref{thm:RC-bound} will be given in Section~\ref{sec:181207}.

Finally we present applications of Theorems~\ref{thm:priori-decay-b_0}, 
\ref{thm:lap-bounds} and \ref{thm:RC-bound}. 
The first application is LAP. (We distinguish between `LAP bounds' and `LAP'.)

\begin{corollary}\label{cor:Limiting-Absorption-Principle-Stark}
Assume Condition~\ref{cond:additional-condition}.
Let $I\subseteq\R$ be a compact interval.
For any $k=0,1$, $s>1/2$ and $\epsilon\in(0,\min\set{\beta_c,s-1/2})$ 
there exists $C>0$ such that for any $z,z'\in I_+$ or $z,z'\in I_-$
\begin{equation}\label{eq:Holder-continuity}
\begin{split}
\|\widetilde p^kR(z)-\widetilde p^kR(z')\|_{\vL(L^2_s,L^2_{-s})}\le C|z-z'|^\epsilon.
\end{split}
\end{equation}
In particular, $\widetilde p^kR(z)$ for $k=0,1$ have uniform limits 
as $I_\pm\ni z\to\lambda\in I$ in the norm topology of $\vL(L^2_s,L^2_{-s})$, 
which one denotes by 
\begin{equation}\label{eq:uniform-limit-z-to-lambda}
\begin{split}
\widetilde p^kR(\lambda\pm \mathrm i0) = \lim_{z\to \lambda\pm \mathrm i0}\widetilde p^kR(z)
\ \ \text{in }\vL(L^2_s,L^2_{-s}), 
\end{split}
\end{equation}
respectively.
Moreover, these limits $\widetilde p^kR(\lambda\pm \mathrm i0)$ belong to $\vL(\vB,\vB^*)$.
\end{corollary}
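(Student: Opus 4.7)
The plan is a standard extraction of Hölder-regularity LAP from the a priori bounds of Theorems~\ref{thm:lap-bounds} and \ref{thm:RC-bound}, followed by a density argument for the $\vL(\vB, \vB^*)$ statement. First I would read off from Theorem~\ref{thm:lap-bounds} and \eqref{eq:inclusion-relations} the uniform bound
\[ \sup_{z \in I_\pm} \bigl( \|R(z)\|_{\vL(L^2_s,L^2_{-s})} + \|\widetilde p R(z)\|_{\vL(L^2_s,L^2_{-s})} \bigr) < \infty \quad \text{for } s > 1/2, \]
using the continuous inclusions $L^2_s \subset \vB$ and $\vB^* \subset L^2_{-s}$. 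The $\widetilde p R(z)$-part is obtained by combining identity \eqref{eq:18121514}, the $\vB^*$-bound on $p^f R(z)\psi$, and the quadratic-form bound on $\langle p_j f^{-1}\ell_{jk}p_k\rangle_{R(z)\psi}$, the harmless factor $f^{-1}$ being absorbed via $\vB^* \hookrightarrow L^2_{-s}$.

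Next I would apply the first resolvent identity $R(z) - R(z') = (z-z')R(z)R(z')$ to reduce \eqref{eq:Holder-continuity} with $k=0$ to an estimate of the form
\[ \|R(z)R(z')\psi\|_{L^2_{-s}} \le C_\epsilon |z-z'|^{\epsilon-1}\|\psi\|_{L^2_s}. \]
To extract the extra factor $|z-z'|^\epsilon$ I would split $R(z')\psi = \chi_m R(z')\psi + \bar\chi_m R(z')\psi$ with $m \in \mathbb N_0$ to be optimized against $|z-z'|$. On the compactly supported piece, a local compactness application of the LAP bound from Theorem~\ref{thm:lap-bounds} (relying on Condition~\ref{cond:1803131348}~(3)) yields an $L^2$-bound of polynomial growth in $2^m$. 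For the tail $\bar\chi_m R(z')\psi$, Theorem~\ref{thm:RC-bound} applied with weight $f^\beta$ --- legitimate since $L^2_s \subset f^{-\beta}\vB$ for $s > 1/2 + \beta$ --- yields a gain factor $2^{-m\beta}$ when fed back into $R(z)$. Balancing by choosing $2^m$ an appropriate negative power of $|z-z'|$ and letting $\beta \uparrow \epsilon$ produces the claim for any $\epsilon < \min\{\beta_c, s - 1/2\}$. The $k=1$ case follows by the same split, invoking now the quadratic-form bound $\langle p_j f^{2\beta - 1}\ell_{jk}p_k\rangle_\phi^{1/2}$ of Theorem~\ref{thm:RC-bound} to supply the needed $f^\beta$-gain on $\widetilde p \bar\chi_m R(z')\psi$.

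Once \eqref{eq:Holder-continuity} is in hand, the uniform limits \eqref{eq:uniform-limit-z-to-lambda} follow from the Cauchy criterion and completeness of $\vL(L^2_s, L^2_{-s})$. For the final claim, given $\psi \in \vB$ I would approximate in $\vB$ by $\psi_n \in L^2_s$ (any fixed $s > 1/2$): Theorem~\ref{thm:lap-bounds} yields $\|R(z)\psi_n\|_{\vB^*} \le C\|\psi_n\|_{\vB}$ uniformly in $z, n$, while \eqref{eq:uniform-limit-z-to-lambda} gives $R(z)\psi_n \to R(\lambda \pm \mathrm i0)\psi$ in $L^2_{-s}$; the $\vB^*$-bound passes to the limit by weak-$*$ lower semicontinuity evaluated on each $F_n$, proving $R(\lambda \pm \mathrm i0) \in \vL(\vB, \vB^*)$, and similarly for $\widetilde p R(\lambda \pm \mathrm i0)$.

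The main obstacle is the second step: the commutator $R(z)[H, \chi_m]R(z')$ produced by the cutoff split has its derivative factor $\partial \chi_m$ supported on the transition annulus $f \sim 2^m$, and controlling it with the correct power of $2^m$ forces an application of Theorem~\ref{thm:RC-bound} at the boundary itself, not only on the tail. The radiation condition is what prevents the resolvent from transferring mass across $f \sim 2^m$ with loss exceeding $2^{-m\beta}$, and the balance of this transfer against the polynomial growth from the interior piece is what fixes the achievable exponent $\epsilon$.
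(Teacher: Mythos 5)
Your overall architecture is the same as the paper's: uniform bounds from Theorem~\ref{thm:lap-bounds}, a dyadic cut-off at a scale balanced against $|z-z'|$, the radiation condition bounds supplying the gain at that scale, and a density argument for the $\vL(\vB,\vB^*)$ statement. Your first, third and fourth steps are fine. The gap is in the central reduction. After the first resolvent identity you split $R(z)R(z')\psi=R(z)\chi_mR(z')\psi+R(z)\bar\chi_mR(z')\psi$ and propose to bound the two pieces separately. The first piece is indeed $O(2^m)$. But the second piece cannot be bounded at all by the available estimates: $\bar\chi_mR(z')\psi$ lies only in $\vB^*$, not in $\vB$, and the only uniform-in-$z$ mapping property of $R(z)$ is $\vB\to\vB^*$. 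Theorem~\ref{thm:RC-bound} gives a gain of $f^{-\beta}$ on $(A\mp a)R(z')\psi$, \emph{not} on $R(z')\psi$ itself, so it does not place $\bar\chi_mR(z')\psi$ in $2^{-m\beta}$ times a space on which $R(z)$ acts uniformly. Dually, $\langle\bar\chi_mR(z')\psi,R(\bar z)\varphi\rangle$ pairs two $\vB^*$-functions over $\{f\ge 2^m\}$, and shell-by-shell Cauchy--Schwarz gives a divergent series $\sum_{n\ge m}2^n$; convergence can only come from the joint oscillation of the outgoing factor $R(z')\psi$ and the incoming factor $R(\bar z)\varphi$, which requires an integration by parts exploiting the radiation conditions on \emph{both} resolvents. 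You only invoke Theorem~\ref{thm:RC-bound} for $R(z')$.

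The paper avoids the term $R(z)\bar\chi_mR(z')$ altogether. It writes
$\chi_nR(z)\chi_n-\chi_nR(z')\chi_n=\chi_nR(z)\bigl\{\chi_{n+1}(H-z')-(H-z)\chi_{n+1}\bigr\}R(z')\chi_n$,
so that only the localized commutator $\mathrm i[H,\chi_{n+1}]=\mathop{\mathrm{Re}}(\chi_{n+1}'A)$ and a term $(z-z')\chi_{n+1}$ supported in the compact region appear; the commutator is then completed to $(A-a_{z'})$ acting on $R(z')$ and $(A+a_{\bar z})^*$ acting on $R(z)$ from the right (the adjoint radiation condition, using $\overline{a_{\bar z}}=a_z$ and $\bar z\in I_-$), and the identity $1=(a_z+a_{z'})^{-1}\{(A+a_z)-(A-a_{z'})\}$ plays the same role inside the $(z-z')$ term. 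This yields $C2^{-\epsilon n}+C2^{(1-\epsilon)n}|z-z'|$ and the optimization $2^n\sim|z-z'|^{-1}$ that you anticipate. Your closing paragraph shows you sense that a commutator at $f\sim 2^m$ with both radiation conditions is the crux, but your written reduction does not produce that commutator, and the piece $R(z)\bar\chi_mR(z')\psi$ as you set it up is genuinely out of reach. You would need to replace the split of $R(z')\psi$ by the paper's operator identity (or an equivalent integration by parts) to close the argument.
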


Combining Theorem~\ref{thm:RC-bound} 
and Corollary~\ref{cor:Limiting-Absorption-Principle-Stark} 
we obtain  radiation condition bounds for real spectral parameters by
taking limits. 
Thus we need  respective limits
$$a_\pm:=\lim_{z\to\lambda\pm\mathrm i0}a_z\ \ \text{for } \lambda\in I.$$

\begin{corollary}\label{cor:RC-bound-real}
Assume Condition~\ref{cond:additional-condition}. 
Let $I\subseteq\R$ be a compact interval,
and choose $l\in\mathbb N_0$ as above.
Then for all $\beta\in [0,\beta_c)$ there exists $C>0$ such that 
for any $\lambda\in I$ and $\psi\in f^{-\beta}\vB$ 
the states $\phi=R(\lambda\pm\mathrm i0)\psi$ satisfy
\begin{equation*}
\|f^\beta(1-x/r)^{1/2}\phi\|_{L^2_{-1/2}} 
+\|f^\beta(A\mp a_\pm)\phi\|_{\vB^*} 
+ \langle p_jf^{2\beta-1}\ell_{jk}p_k \rangle_{\phi}^{1/2} 
\le 
C\|f^\beta\psi\|_\vB,
\end{equation*}
respectively.
\end{corollary}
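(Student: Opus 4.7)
The plan is to approximate $\phi = R(\lambda\pm\mathrm i 0)\psi$ by $\phi_n = R(z_n)\psi$ for a sequence $z_n \in I_\pm$ with $z_n \to \lambda$, apply Theorem~\ref{thm:RC-bound} to obtain a uniform bound in $n$ on the three quantities on the left, and then pass to the limit $n\to\infty$ term by term using lower semi-continuity. Corollary~\ref{cor:Limiting-Absorption-Principle-Stark} supplies $\phi_n \to \phi$ and $\widetilde p\phi_n\to\widetilde p\phi$ in $L^2_{-s}$ for each $s>1/2$; after extracting a subsequence these convergences hold pointwise almost everywhere. Elliptic regularity for $(H-z_n)\phi_n = \psi$ yields uniform local $H^2$-bounds on $\phi_n$, so along a further subsequence $p\phi_n\to p\phi$ in $L^2_{\mathrm{loc}}$ and a.e. Finally, $a_{z_n}\to a_\pm$ locally uniformly, which is immediate from~\eqref{eq:phase-a} since the choice of $l$ keeps $\mathop{\mathrm{Re}}\bigl((r-q_1+z)/r\bigr)$ uniformly positive on $\supp\bar\chi_l$ for $z\in\overline{I_\pm}$, keeping the square root continuous down to the real axis.

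I would then transfer the three terms individually. The first, whose square equals $\int f^{2\beta-1}(1-x/r)|\phi|^2$, follows from Fatou's lemma applied to the a.e.\ convergence $\phi_n\to\phi$. For the third, $\langle p_jf^{2\beta-1}\ell_{jk}p_k\rangle_\phi = \int f^{2\beta-1}\bigl(|\partial f|^2|p\phi|^2-|p^f\phi|^2\bigr)$ has a non-negative integrand by Cauchy--Schwarz, and Fatou applies to the a.e.\ convergence of $p\phi_n$. For the Besov term, use $\|g\|_{\vB^*}=\sup_m 2^{-m/2}\|F_mg\|_{L^2}$: fix each $m$ and note that on $\supp F_m$ the operator $A=p^f-\tfrac{\mathrm i}{2}\Delta f$ involves only $\phi_n$ and $p\phi_n$ (converging a.e.), while $a_{z_n}\to a_\pm$ uniformly; Fatou gives $\|F_m f^\beta(A\mp a_\pm)\phi\|_{L^2}\le \liminf_n \|F_m f^\beta(A\mp a_{z_n})\phi_n\|_{L^2}\le C\|f^\beta\psi\|_\vB$ for each $m$, and taking $\sup_m$ completes the estimate.

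The main obstacle is the supremum-over-annuli structure of the $\vB^*$-norm: LAP supplies only weighted $L^2$-convergence, which is not a priori strong enough to preserve a $\vB^*$-bound under limits. This is resolved by working one $F_m$ at a time---convergence in $L^2_{-s}$ forces $L^2(F_m)$-convergence for each fixed $m$ since $f$ is comparable to $2^m$ on $F_m$---after which the uniform bound $\sup_m 2^{-m/2}\|F_m(\cdot)_n\|_{L^2}\le C$ transfers to $\phi$ via the elementary inequality $\sup_m\liminf_n\le \liminf_n \sup_m$. A secondary technical point---passing from $\widetilde p\phi_n\to\widetilde p\phi$ to $p\phi_n\to p\phi$ locally, which is problematic where $|\partial f|$ vanishes---is bypassed by the elliptic-regularity bootstrap above, ensuring strong local convergence of $p\phi_n$ on all of $\R^d$.
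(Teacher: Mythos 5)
Your proposal is correct and follows essentially the same route as the paper, whose proof consists precisely of invoking Theorem~\ref{thm:RC-bound}, Corollary~\ref{cor:Limiting-Absorption-Principle-Stark} and an approximation argument, with the key observation being that the $\vB^*$-norm is recovered as a supremum over the localizations $F_m$ (resp.\ $\chi_m$) so that the uniform bound survives the limit $z\to\lambda\pm\mathrm i0$ annulus by annulus. The only point you elide (as does the paper) is that for general $\psi\in f^{-\beta}\vB\setminus L^2_s$ one first treats, say, compactly supported $\psi$ and then extends by density using the uniform bound.
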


As the last application, we provide \emph{Sommerfeld's uniqueness result}.

\begin{corollary}\label{cor:Sommerfeld-unique-result}
Assume Condition~\ref{cond:additional-condition}.
Let $\lambda\in\R$, 
$\phi\in f^{\beta}\vB^*$ and $\psi\in f^{-\beta}\vB$ with $\beta\in[0,\beta_c)$.
Then $\phi=R(\lambda\pm \mathrm i0)\psi$ holds if and only if both of the following hold:
\begin{enumerate}[(1)]
\item\label{item:18122818}
$(H-\lambda)\phi=\psi$ in the distributional sense;
\item\label{item:18122819}
$(A\mp a_\pm)\phi\in f^{-\beta}\vB_0^*$.
\end{enumerate}
\end{corollary}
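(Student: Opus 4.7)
My plan is to prove the two directions separately. The ``only if'' direction follows from Corollaries~\ref{cor:Limiting-Absorption-Principle-Stark} and~\ref{cor:RC-bound-real} via a density argument, while the ``if'' direction---the uniqueness statement proper---reduces to Rellich's Theorem~\ref{thm:priori-decay-b_0} through a commutator bootstrap.

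For ``only if'', let $\phi=R(\lambda\pm\mathrm i0)\psi$. Condition \eqref{item:18122818} is automatic from the construction of the boundary resolvent via LAP (Corollary~\ref{cor:Limiting-Absorption-Principle-Stark}), and Corollary~\ref{cor:RC-bound-real} already delivers $(A\mp a_\pm)\phi\in f^{-\beta}\vB^*$. To upgrade to $f^{-\beta}\vB_0^*$ I approximate: choose $\psi_n\in C^\infty_\c(\mathbb R^d)$ with $\psi_n\to\psi$ in $f^{-\beta}\vB$. For each $n$, LAP places $R(\lambda\pm\mathrm i0)\psi_n\in L^2_{-s}$ for every $s>1/2$, and since $a_\pm$ is bounded on $\supp\bar\chi_l$, also $(A\mp a_\pm)R(\lambda\pm\mathrm i0)\psi_n\in f^{-\beta}\vB_0^*$ via the chain~\eqref{eq:inclusion-relations}. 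Applying the uniform estimate of Corollary~\ref{cor:RC-bound-real} to $\psi-\psi_n$ and using that $f^{-\beta}\vB_0^*$ is closed in $f^{-\beta}\vB^*$ yields~\eqref{item:18122819}.

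For ``if'', set $\tilde\phi=\phi-R(\lambda\pm\mathrm i0)\psi$. Combining \eqref{item:18122818}, \eqref{item:18122819}, and the ``only if'' direction, $\tilde\phi\in f^\beta\vB^*$ satisfies $(H-\lambda)\tilde\phi=0$ and $(A\mp a_\pm)\tilde\phi\in f^{-\beta}\vB_0^*$. My target is $\tilde\phi\in\vB_0^*$, after which Theorem~\ref{thm:priori-decay-b_0} immediately forces $\tilde\phi=0$. I would execute this upgrade by replaying the commutator scheme of Section~\ref{sec:181207} in reverse: from the formal identity $\langle\mathrm i[H,\chi_m A]\rangle_{\tilde\phi}=0$ (valid, after justification, for $\tilde\phi$ annihilated by $H-\lambda$), expand $[H,\chi_m A]=\chi_m[H,A]+[H,\chi_m]A$, and then substitute $A\tilde\phi=\pm a_\pm\tilde\phi+(A\mp a_\pm)\tilde\phi$, with the last piece in $f^{-\beta}\vB_0^*$. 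The principal commutator $\mathrm i[H,A]$ supplies the positivity of Lemma~\ref{lemma:commutator-calculation} with $\Theta\equiv 1$, reinforced by the definite sign of $\Im a_\pm=\pm fr^{-2}/4$; the cross terms involving $(A\mp a_\pm)\tilde\phi$ are $o(1)$ as $m\to\infty$ because its $f^{-\beta}$ decay compensates the $f^\beta$ growth of $\tilde\phi$. Collecting, $2^{-n}\|F_n\tilde\phi\|_{L^2}^2\to 0$, so $\tilde\phi\in\vB_0^*$.

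The main obstacle is the sharp bookkeeping of the annular boundary contributions at $f\sim 2^m$: the $f^\beta$ growth of $\tilde\phi$ must be absorbed exactly by the $f^{-\beta}$ decay of $(A\mp a_\pm)\tilde\phi$, and this is where the use of $\vB_0^*$ (rather than merely $\vB^*$) in condition \eqref{item:18122819} is decisive. Modulo this technicality, the argument is the reverse of the commutator analysis that produces Corollary~\ref{cor:RC-bound-real} itself.
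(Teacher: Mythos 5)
Your overall architecture is the paper's: the ``only if'' direction by approximation from Corollaries~\ref{cor:Limiting-Absorption-Principle-Stark} and \ref{cor:RC-bound-real}, and the ``if'' direction by showing that $\phi'=\phi-R(\lambda\pm\mathrm i0)\psi$ lies in $\vB_0^*$ and then invoking Theorem~\ref{thm:priori-decay-b_0}. Two of your steps, however, do not go through as written. In the ``only if'' direction the inclusion you use is backwards: by \eqref{eq:inclusion-relations} one has $\vB_0^*\subsetneq\vB^*\subsetneq L^2_{-s}$ for $s>1/2$, so placing $R(\lambda\pm\mathrm i0)\psi_n$ in $L^2_{-s}$ gives no $\vB_0^*$ information whatsoever. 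The correct mechanism for the approximants is to apply Corollary~\ref{cor:RC-bound-real} with a \emph{larger} exponent $\beta'\in(\beta,\beta_c)$ (legitimate since $\psi_n\in C^\infty_\c\subseteq f^{-\beta'}\vB$), which yields $(A\mp a_\pm)R(\lambda\pm\mathrm i0)\psi_n\in f^{-\beta'}\vB^*\subseteq f^{-\beta}\vB_0^*$; the rest of your density argument then works.

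More seriously, in the ``if'' direction the positivity you invoke cannot deliver $2^{-n}\|F_n\phi'\|^2\to 0$. The lower bound for $\mathrm i[H,A]$ in Lemma~\ref{lemma:commutator-calculation} with $\Theta\equiv 1$ consists of terms such as $p_jf^{-1}\ell_{jk}p_k$ and $\tfrac12f^{-1}(1-x/r)$, which degenerate in the forward region $x\approx r$ and hence do not control $\|F_n\phi'\|^2$; and the sign of $\Im a_\pm$ is irrelevant to this step. Moreover, the weighted commutator $[H,\mathrm i\chi_m A]$ you propose produces the boundary term $A\chi_m'A\le 0$, of exactly the same order as the positive contribution you hope to extract from $[H,\chi_m]A$, so your sign bookkeeping is genuinely in doubt. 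The paper avoids all of this with the elementary identity $2\Im\bigl(\chi_\nu(H-\lambda)\bigr)=(\Re a_\pm)\chi_\nu'+\Re\bigl(\chi_\nu'(A\mp a_\pm)\bigr)$, cf.\ \eqref{eq:def-A}, evaluated on $\phi'$: the left-hand side vanishes, the cross term is $o(1)$ by precisely the $f^{\pm\beta}$ cancellation you describe, and the decisive positivity is $(\Re a_\pm)(-\chi_\nu')$, which is nonnegative and of size $2^{-\nu}$ on $\{f\sim 2^\nu\}$ because $\Re a_\pm$ is uniformly positive there --- not because of any positivity of $\mathrm i[H,A]$. Replacing your $[H,\chi_mA]$ computation by this one-line identity repairs the argument.
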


The proofs of Corollaries~\ref{cor:Limiting-Absorption-Principle-Stark}, \ref{cor:RC-bound-real} 
and \ref{cor:Sommerfeld-unique-result} will be given in Section~\ref{sec:181207}.

\section{Conjugate operator}\label{sec:Conjugate operator}

This is a short preliminary section for the proofs of our main theorems
in the following sections. 
Here we compute an explicit expression for a {weighted commutator} 
\begin{equation*}
[H, \mathrm iA]_{\Theta} :=\mathrm i( H\Theta A - A\Theta H).
\end{equation*}
 For various choices  of the \emph{weight function}
 $\Theta\in C^\infty(\mathbb R^d)$ (see 
 \eqref{eq:15.2.15.5.8bb}, \eqref{eq:15.2.15.5.8bbb}, 
\eqref{eq:1810021842} and \eqref{eq:theta-rc} for concrete
expressions)  this `commutator', with  $A$  given  as in
\eqref{eq:def-A}, tends to  be
 positive (for this reason $A$
 is referred to as a  \emph{conjugate operator}).
Implementation of a commutator is always haunted by the `domain problem',
however, as long as there is a common  core for operators involved, 
in the present case  $C^\infty_{\mathrm c}(\mathbb R^d)$, an approximation argument works easily.
In this paper we do not elaborate further on domains for readability. 
Actually we have rigorously treated such problems 
in  previous works in more complicated situations (like in cases with boundaries), cf. \cite{IS}.

For the moment we only assume that $\Theta\in C^\infty(\mathbb R^d)$ 
is a function only of $f$, and that for some $m\in\mathbb N_0$
and for all $k\in\mathbb N_0$
\begin{equation}\label{eq:theta-derivative}
\mathop{\mathrm{supp}}\Theta\subseteq\bigl\{(x,y)\in\mathbb R^d\,\big|\,f(x,y)\ge 2^m\bigr\}
,\quad 
|\Theta^{(k)}| \le C_k
,
\end{equation}
where $\Theta^{(k)}$ denotes the $k$-th derivative of $\Theta$ in $f$. 
In the later arguments we may let $m\in\mathbb N_0$ be sufficiently large, so that 
\begin{align}
\mathop{\mathrm{supp}}\Theta\cap \mathop{\mathrm{supp}}q_3=\emptyset.
\label{eq:190315}
\end{align}
We note that on $\mathop{\mathrm{supp}}\Theta$ we can write derivatives of $f$ as 
\begin{align}
\begin{split}
\partial_xf
&=
\tfrac12 fr^{-1}
,\\
\partial_yf&
=\tfrac12f^{-1}r^{-1}y, \\
\partial_x\partial_xf
&=
\tfrac12 f^{-1}r^{-1}
-\tfrac14 fr^{-2}
-\tfrac12 f^{-1}r^{-3}x^2
,\\
\partial_{y_j} \partial_{y_k}f
&
=
\tfrac12f^{-1}r^{-1}\delta_{jk}
-\tfrac14f^{-3}r^{-2}y_jy_k
-\tfrac12f^{-1}r^{-3}y_jy_k
,\\
\partial_x\partial_yf
&=
\partial_y\partial_xf
=
-\tfrac14f^{-1}r^{-2}y
-\tfrac12 f^{-1}r^{-3}xy
.
\end{split}
\label{180814}
\end{align}
In particular, we also have expressions on $\mathop{\mathrm{supp}}\Theta$:
\begin{align}
\begin{split}
\partial^f
&=\tfrac12fr^{-1}\partial_x+\tfrac12f^{-1}r^{-1} y\partial_y,\\
(\partial^f r)&=\tfrac12fr^{-1} ,\\
|\partial f|^2&=
\tfrac12 r^{-1}
,\\ 
\Delta f
&
=\tfrac12(d-2)f^{-1}r^{-1}
,\\
\partial_j\partial_kf
&=
f^{-1}\ell_{jk}
-f^{-1}|\partial f|^2 (\partial_jr)(\partial_kr),
\end{split}
\label{eq:180816}
\end{align}
see \eqref{eq:180925} and \eqref{eq:180925b} for $\partial^f$ and $\ell$, respectively.

\begin{lemma}\label{lemma:commutator-calculation}
Assume Condition~\ref{cond:1803131348}. 
Then, as quadratic forms on $C_{\mathrm c}^{\infty}({\R}^d)$,
\begin{align*}
[H,\mathrm iA]_\Theta
&
=
A\Theta' A
+p_jf^{-1}\Theta \ell_{jk}p_k
+p_jf^{-1}|\partial f|^2\bigl(\delta_{jk}-(\partial_jr)(\partial_kr)\bigr)\Theta p_k
\\&\phantom{{}={}}{}
+\tfrac12 f^{-1}(1-x/r)\Theta 
-\tfrac14|\partial f|^4\Theta'''
-\tfrac12(\partial^f|\partial f|^2)\Theta''
-\tfrac12f^{-1}|\partial f|^4\Theta'' 
\\&\phantom{{}={}}{}
+ q_4\Theta'
+ q_5\Theta 
-2\mathop{\mathrm{Im}}\bigl(q_2\Theta p^f\bigr)
-2\mathop{\mathrm{Re}}\bigl(f^{-1}|\partial f|^2\Theta H\bigr)
-\mathop{\mathrm{Re}}\bigl(|\partial f|^2\Theta' H\bigr)
\end{align*}
with 
\begin{align*}
 q_4
&=
-\tfrac14(\Delta|\partial f|^2)
-\tfrac12f^{-1}|\partial f|^2(\Delta f) 
-f^{-1}(\partial^f |\partial f|^2)
+f^{-2}|\partial f|^4
+|\partial f|^2q_2 
,\\
 q_5
&=
-\tfrac14(\Delta^2f) 
-\tfrac12f^{-1}(\Delta |\partial f|^2) 
+\tfrac12f^{-2}|\partial f|^2(\Delta f) 
+f^{-2}(\partial^f|\partial f|^2) 
-f^{-3}|\partial f|^4 
\\&\phantom{{}={}}{}
+2f^{-1}|\partial f|^2q  
-(\partial^f q_1) 
+(\Delta f)q_2 
.
\end{align*}
In particular, 
\begin{align*}
| q_4|
\le Cf^{-1-\min\{3,\rho\}}r^{-1}
,\quad 
 q_5
\ge -Cf^{-1-\min\{6,\rho\}}
.
\end{align*}
\end{lemma}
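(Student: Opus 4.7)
The proof is a direct computation organized around a standard Mourre-type decomposition. By \eqref{eq:190315} I may choose $m$ large enough that $q_3$ plays no role on $\supp\Theta$, so effectively $H = H_0+q_1+q_2$ inside the commutator and
\begin{equation*}
[H,\mathrm iA]_\Theta = [H_0,\mathrm iA]_\Theta+[q_1,\mathrm iA]_\Theta+[q_2,\mathrm iA]_\Theta;
\end{equation*}
the plan is to compute the kinetic and potential pieces separately and then match them against the target form.

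\textbf{Potential part.} For a multiplication $q$ commuting with $\Theta$, since $A = p^f-(\mathrm i/2)\Delta f$ the imaginary $\Delta f$-pieces cancel, and the operator identity $p^f\Theta q = \Theta q p^f-\mathrm i\partial^f(\Theta q)$ yields
\begin{equation*}
[q,\mathrm iA]_\Theta = -\partial^f(\Theta q) = -\Theta'|\partial f|^2 q-\Theta\partial^f q.
\end{equation*}
For $q_1$ this delivers the $-\Theta\partial^f q_1$ summand of $q_5\Theta$ directly. For $q_2$, I would use $(p^f)^* = p^f-\mathrm i\Delta f$ to verify
\begin{equation*}
-2\mathop{\mathrm{Im}}(q_2\Theta p^f) = -\partial^f(\Theta q_2)-(\Delta f)\Theta q_2,
\end{equation*}
so that $[q_2,\mathrm iA]_\Theta = -2\mathop{\mathrm{Im}}(q_2\Theta p^f)+(\Delta f)\Theta q_2$, the second piece entering the $(\Delta f)q_2$ summand of $q_5\Theta$. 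The residual multiplication-operator pieces $-\Theta'|\partial f|^2 q_j$ are then balanced against the $q$-content of $-\mathop{\mathrm{Re}}(|\partial f|^2\Theta' H)$ combined with the $|\partial f|^2 q_2$ summand of $q_4$, and the $2f^{-1}|\partial f|^2 q$ summand of $q_5\Theta$ is balanced against the $q$-content of $-2\mathop{\mathrm{Re}}(f^{-1}|\partial f|^2\Theta H)$; this bookkeeping dictates the explicit form of $q_4$ and $q_5$.

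\textbf{Free part (the main calculation).} For the kinetic piece I would use the Leibniz identity
\begin{equation*}
[H_0,\mathrm iA]_\Theta = \mathrm i[H_0,\Theta]A+\mathrm i\Theta[H_0,A]+\mathrm i[\Theta,A]H_0,
\end{equation*}
together with $[H_0,\Theta] = -\mathrm i\Theta'p^f-\tfrac12\Theta''|\partial f|^2-\tfrac12\Theta'\Delta f$ (since $[x,\Theta]=0$), $[\Theta,A] = \mathrm i\Theta'|\partial f|^2$, and the double commutator
\begin{equation*}
[H_0,A] = -\mathrm ip_j(\partial_j\partial_k f)p_k-\mathrm i\partial_x f+\tfrac{\mathrm i}{4}\Delta^2 f,
\end{equation*}
derived from $[p^2,p^f] = -2\mathrm ip_j(\partial_j\partial_k f)p_k+p^{\Delta f}$, the Stark contribution $[x,p^f] = \mathrm i\partial_x f$, and $[H_0,\Delta f] = -\mathrm ip^{\Delta f}-\tfrac12\Delta^2 f$. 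Substituting the Hessian identity $\partial_j\partial_k f = f^{-1}\ell_{jk}-f^{-1}|\partial f|^2(\partial_j r)(\partial_k r)$ from \eqref{eq:180816}, and noting $(\partial_j f)\ell_{jk}=0$ so that $\Theta$ commutes freely through the $\ell$-summand, the quadratic-in-$p$ pieces reorganize into two $p\cdots p$ summands, one of which is combined with the kinetic part of $-2\mathop{\mathrm{Re}}(f^{-1}|\partial f|^2\Theta H)$ via $\mathop{\mathrm{Re}}(gp^2) = p_j g p_j-\tfrac12\Delta g$ to match the target's second $p\cdots p$ summand. The Stark scalar $\Theta\partial_x f = \tfrac12 fr^{-1}\Theta$ then combines with $2f^{-1}|\partial f|^2\Theta x$ from $-2\mathop{\mathrm{Re}}(f^{-1}|\partial f|^2\Theta H)$ and with further scalar byproducts of $\tfrac12\Delta(f^{-1}|\partial f|^2\Theta)$ arising in the same reduction, yielding the decisive Stark-positivity $\tfrac12 f^{-1}(1-x/r)\Theta$ after cancellations dependent on the parabolic identity $f^2 = r+x$ on $\supp\Theta$. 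The outer Leibniz terms, after using $p^f = A+(\mathrm i/2)\Delta f$ and the Weyl symmetrization $\Theta'A^2 = A\Theta'A+\mathrm i\Theta''|\partial f|^2 A$, reduce to $A\Theta'A$ plus the residual $(\mathrm i/2)\Theta''|\partial f|^2 A-\Theta'|\partial f|^2 H_0$; combining these with the complementary anti-self-adjoint pieces from the middle term's reorderings, and rewriting $-\Theta'|\partial f|^2 H_0 = -\mathop{\mathrm{Re}}(|\partial f|^2\Theta' H_0)-\tfrac12[\Theta'|\partial f|^2,H_0]$, produces the $-\mathop{\mathrm{Re}}(|\partial f|^2\Theta' H)$ summand (after reabsorbing the $q$-pieces) together with the corrections $-\tfrac14|\partial f|^4\Theta'''$, $-\tfrac12(\partial^f|\partial f|^2)\Theta''$, and $-\tfrac12 f^{-1}|\partial f|^4\Theta''$. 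All remaining scalar byproducts, together with $-\tfrac14\Theta\Delta^2 f$, are assembled into $q_4\Theta'+q_5\Theta$.

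\textbf{Estimates and main obstacle.} The bounds on $q_4,q_5$ follow by power-counting the $f$- and $r^{-1}$-weights, using $|\partial f|^2\le Cr^{-1}$ (Remark~\ref{remark:fbndg}), $\Delta f=\tfrac12(d-2)f^{-1}r^{-1}$, and Condition~\ref{cond:1803131348} (notably $\partial^f q_1\le Cf^{-1-\rho}$ and $|q_2|\le Cf^{-1-\rho}$); the one-sided character of the $q_5$ lower bound simply reflects the one-sided bound on $\partial^f q_1$. The main obstacle is the organizational bookkeeping of the free part: numerous scalar and operator corrections arise from each reordering of $f$-dependent functions past the momenta, and the decisive Stark-positivity $\tfrac12 f^{-1}(1-x/r)\Theta$ emerges only through a delicate cancellation between $\Theta\partial_x f$, a scalar byproduct of $-2\mathop{\mathrm{Re}}(f^{-1}|\partial f|^2\Theta H)$, and further normal-ordering corrections---a cancellation hinging on the parabolic identity $f^2 = r+x$ and unavailable for spherically adapted escape functions.
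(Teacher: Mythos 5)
Your proposal is correct and follows essentially the same route as the paper: a direct expansion of the weighted commutator using the Hessian identity from \eqref{eq:180816}, the insertion of the ``zero'' term built from $-2\mathop{\mathrm{Re}}(f^{-1}|\partial f|^2\Theta H)$ (which you phrase as balancing against the target), the parabolic identity $f^2=r+x$ on $\supp\Theta$ to extract $\tfrac12 f^{-1}(1-x/r)\Theta$, and power counting with $|\partial f|^2\le Cr^{-1}$ and $f^2\le 2r$ for the bounds on $q_4,q_5$. The only difference is organizational: you split $\mathrm i(H\Theta A-A\Theta H)$ by the Leibniz rule into $[H_0,\Theta]A+\Theta[H_0,A]+[\Theta,A]H_0$ plus potential pieces, whereas the paper expands $2\mathop{\mathrm{Im}}(A\Theta H)$ directly, and your intermediate commutator formulas check out.
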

\begin{proof}
Using the adjoint of the expression \eqref{eq:def-A}, we can compute 
\begin{align}
\begin{split}
[H,\mathrm iA]_\Theta
&
=
\mathop{\mathrm{Im}}\bigl((p^f)^*\Theta p^2\bigr)
+2\mathop{\mathrm{Im}}\bigl((p^f)^*\Theta (-x+q)\bigr)
+\mathop{\mathrm{Re}}\bigl((\Delta f)\Theta H\bigr)
\\&
=
(p^f)^*\Theta' p^f
+p_j(\partial_j\partial_kf)\Theta p_k
-\tfrac12p(\Delta f)\Theta p
-\tfrac12p|\partial f|^2\Theta' p
\\&\phantom{{}={}}{}
+(x-q_1)(\Delta f)\Theta 
+(x-q_1)|\partial f|^2 \Theta'
+(\partial_xf)\Theta 
-(\partial^f q_1)\Theta 
\\&\phantom{{}={}}{}
-2\mathop{\mathrm{Im}}\bigl(q_2\Theta p^f\bigr)
+\mathop{\mathrm{Re}}\bigl((\Delta f)\Theta H\bigr).
\end{split}
\label{eq:180628}
\end{align}
We combine the third, fifth and tenth terms of \eqref{eq:180628} as 
\begin{align}
\begin{split}
&
-\tfrac12p(\Delta f)\Theta p
+(x-q_1)(\Delta f)\Theta 
+\mathop{\mathrm{Re}}\bigl((\Delta f)\Theta H\bigr)
\\&
=
-\tfrac14\bigl(\Delta(\Delta f)\Theta \bigr)
+(\Delta f)q_2\Theta 
,
\end{split}
\label{eq:180628c}
\end{align}
and, similarly, the fourth and sixth terms of \eqref{eq:180628} as 
\begin{align}
\begin{split}
&
-\tfrac12p|\partial f|^2\Theta' p
+(x-q_1)|\partial f|^2 \Theta'
\\&
=
-\tfrac14\bigl(\Delta|\partial f|^2\Theta'\bigr)
+|\partial f|^2q_2 \Theta'
-\mathop{\mathrm{Re}}\bigl(|\partial f|^2\Theta' H\bigr).
\end{split}
\label{eq:180628d}
\end{align}
In addition, let us add to the right-hand side of \eqref{eq:180628} the following ``zero'' term:
\begin{align}
\begin{split}
0
&=
pf^{-1}|\partial f|^2\Theta p
-2xf^{-1}|\partial f|^2\Theta 
-\tfrac12\bigl(\Delta f^{-1}|\partial f|^2\Theta\bigr)
\\&\phantom{{}={}}{}
+2f^{-1}|\partial f|^2q\Theta 
-2\mathop{\mathrm{Re}}\bigl(f^{-1}|\partial f|^2\Theta H\bigr)
\end{split}
\label{eq:180628e}
\end{align}
Then by \eqref{eq:180628}, \eqref{eq:180628c}, \eqref{eq:180628d} and \eqref{eq:180628e}
we obtain 
\begin{align}
\begin{split}
[H,\mathrm iA]_\Theta
&
=
(p^f)^*\Theta' p^f
+pf^{-1}|\partial f|^2\Theta p
+p_j(\partial_j\partial_kf)\Theta p_k
+(\partial_xf)\Theta 
\\&\phantom{{}={}}{}
-2xf^{-1}|\partial f|^2\Theta 
-\tfrac14\bigl(\Delta(\Delta f)\Theta \bigr)
-\tfrac14\bigl(\Delta|\partial f|^2\Theta'\bigr)
-\tfrac12\bigl(\Delta f^{-1}|\partial f|^2\Theta\bigr)
\\&\phantom{{}={}}{}
+|\partial f|^2q_2 \Theta'
+2f^{-1}|\partial f|^2q\Theta 
-(\partial^f q_1)\Theta 
+(\Delta f)q_2\Theta 
\\&\phantom{{}={}}{}
-2\mathop{\mathrm{Im}}\bigl(q_2\Theta p^f\bigr)
-2\mathop{\mathrm{Re}}\bigl(f^{-1}|\partial f|^2\Theta H\bigr)
-\mathop{\mathrm{Re}}\bigl(|\partial f|^2\Theta' H\bigr)
.
\end{split}
\label{1808145}
\end{align}

Next, we expand the sixth to eighth terms of \eqref{1808145} as 
\begin{align}
\begin{split}
&
-\tfrac14\bigl(\Delta(\Delta f)\Theta \bigr)
-\tfrac14\bigl(\Delta|\partial f|^2\Theta'\bigr)
-\tfrac12\bigl(\Delta f^{-1}|\partial f|^2\Theta\bigr)
\\&
=
-\tfrac14|\partial f|^4\Theta'''
-\tfrac12|\partial f|^2(\Delta f)\Theta''
-\tfrac12(\partial^f|\partial f|^2)\Theta''
-\tfrac12f^{-1}|\partial f|^4\Theta''
\\&\phantom{{}={}}{}
-\tfrac12(\partial^f\Delta f)\Theta'
-\tfrac14(\Delta|\partial f|^2)\Theta'
-\tfrac14(\Delta f)^2\Theta'
-\tfrac12f^{-1}|\partial f|^2(\Delta f)\Theta'
\\&\phantom{{}={}}{}
-f^{-1}(\partial^f|\partial f|^2)\Theta'
+f^{-2}|\partial f|^4\Theta'
-\tfrac14(\Delta^2f)\Theta 
-\tfrac12f^{-1} (\Delta |\partial f|^2)\Theta
\\&\phantom{{}={}}{}
+\tfrac12f^{-2}|\partial f|^2(\Delta f)\Theta
+f^{-2}(\partial^f|\partial f|^2)\Theta
-f^{-3}|\partial f|^4\Theta
.
\end{split}
\label{180925}
\end{align}
Then the first term of \eqref{1808145}
combined with the second, fifth and seventh terms of \eqref{180925}
makes the first term of asserted identity. 
Inserting expressions from \eqref{180814} and \eqref{eq:180816}
into the second to fourth terms of \eqref{1808145},
we obtain the second to third terms of the asserted identity.
The rest terms of \eqref{1808145} and \eqref{180925}
clearly correspond to the rest terms of the asserted identity.
Hence we are done.
\end{proof}

\section{Rellich's theorem}\label{sec:180629}

In this section we prove Theorem~\ref{thm:priori-decay-b_0}. 
The proof reduces to the following two propositions. 
We basically proceed along the lines of \cite{IS}, but here
we need to discuss \emph{cubic} exponential decay estimates
while in \cite{IS} \emph{linear} exponential decay estimates suffice. 
This appears as a unique   feature for the Stark Hamiltonian.

Throughout the section we impose  Conditions~\ref{cond:1803131348} and
\ref{cond:smooth2wea3n2}.

\begin{proposition}\label{prop:Absence of super-exponentially decaying ef}
If a function $\phi\in L^2_{\mathrm{loc}}(\mathbb R^d)$ satisfies for
some $m_0\in\mathbb N_0$:
\begin{enumerate}
\item
$(H-\lambda)\phi=0$ in the distributional sense, 
\item
$\bar\chi_{m_0}\phi\in \mathcal B^*_0$,
\end{enumerate}
then $\bar\chi_{m_0}\exp(\alpha f^3)\phi\in \mathcal B_0^*$ for any $\alpha\ge 0$.
\end{proposition}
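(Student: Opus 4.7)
My approach is a bootstrap on the cubic-exponential decay rate. Define
\begin{equation*}
\mathcal I = \bigl\{\alpha\ge 0 \,\big|\, \bar\chi_{m_0}\exp(\alpha f^3)\phi \in \mathcal B_0^*\bigr\}.
\end{equation*}
This set contains $0$ by assumption (ii), is trivially downward-closed, and reduces to $[0,\infty)$ as soon as one shows that $\mathcal I$ is closed under monotone-increasing sequences $\alpha_k\uparrow\alpha$ (a Fatou-type argument, provided one first has uniform-in-$\alpha$ $\mathcal B^*$-bounds) and that whenever $\alpha_0\in\mathcal I$ there exists $\delta_0>0$ with $\alpha_0+\delta_0\in\mathcal I$, independently of $\alpha_0$. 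The whole content of the proposition is this uniform propagation step.

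To drive the propagation, I would apply the weighted commutator of Lemma~\ref{lemma:commutator-calculation} with the weight
\begin{equation*}
\Theta=\Theta_{\alpha,m,n}=\exp(2\alpha f^3)\,\bar\chi_{m}\,\chi_n\qquad(m_0<m<n),
\end{equation*}
possibly regularized by an auxiliary polynomial factor in $f$, and exploit the identity $\langle [H-\lambda,\mathrm i A]_\Theta\rangle_\phi=0$, which follows from the eigenvalue equation and turns every $H$-factor in Lemma~\ref{lemma:commutator-calculation} into $\lambda$. The dominant positive pieces are $\langle A\Theta' A\rangle_\phi$ and $\langle p_jf^{-1}\Theta\ell_{jk}p_k\rangle_\phi$, and the dangerous sign-indefinite contribution is $-\tfrac14|\partial f|^4\Theta'''$. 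Direct differentiation gives $\Theta'\sim 6\alpha f^2\Theta$, $\Theta''\sim 36\alpha^2 f^4\Theta$, $\Theta'''\sim 216\alpha^3 f^6\Theta$, and the parabolic-geometry bound $|\partial f|^4 f^6\le Cf^2$ (a consequence of $r\ge f^2/2$; see Remark~\ref{remark:fbndg}) shows this competitor is of size at most $\alpha^3 f^2\Theta$. The compensating positivity comes from the conjugated eigenvalue equation: a short computation gives
\begin{equation*}
\exp(\alpha f^3)\,H\,\exp(-\alpha f^3)=H+3\alpha f|\partial f|^2+3\mathrm i\alpha f^2 A-\tfrac{9\alpha^2}{2}f^4|\partial f|^2,
\end{equation*}
so that on $\psi:=\exp(\alpha f^3)\phi$ one has $3\mathrm i\alpha f^2 A\psi=(\lambda-H)\psi-3\alpha f|\partial f|^2\psi+\tfrac{9\alpha^2}{2}f^4|\partial f|^2\psi$. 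This encodes an $\mathcal O(\alpha)$ lower bound on $A$ (in the relevant weighted sense) and should bring $\langle A\Theta' A\rangle_\phi$ up to order $\alpha^3 f^2\Theta$, matching the negative competitor. Once the balance of constants is favorable, sending $n\to\infty$ removes the outer cutoff (the $\chi_n'$-boundary contributions are controlled by the current-level hypothesis $\bar\chi_{m_0}\exp(\alpha_0 f^3)\phi\in\mathcal B_0^*$) and yields $\bar\chi_m\exp((\alpha_0+\delta_0)f^3)\phi\in\mathcal B^*$; the upgrade to $\mathcal B_0^*$ then follows by repeating the argument with a slightly smaller exponent $\alpha_0+\delta_0-\epsilon$ and taking $\epsilon\downarrow 0$.

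The main obstacle is precisely the delicate pointwise balance at the critical scale $\alpha^3 f^2\Theta$. The negative piece and the enhanced positive piece both live at exactly this order, so obtaining the strict inequality needed to close the bootstrap requires careful tracking of the constants coming from the parabolic factors $f^4/r$, $f^2|\partial f|^2$, etc., not just order-of-magnitude estimates. In parallel one has to absorb the perturbative contributions $q_4\Theta'$, $q_5\Theta$, and $-2\mathop{\mathrm{Im}}(q_2\Theta p^f)$ using Condition~\ref{cond:1803131348}, and---most subtly---justify the removal of the cutoff $\chi_n$ despite its derivatives being dressed by the astronomically large weight $\exp(2\alpha(2^n)^3)$. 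It is exactly this last interplay of super-exponential weights with the outer cutoff that makes the cubic Stark situation significantly more delicate than the linear-exponential framework of~\cite{IS}.
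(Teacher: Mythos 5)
Your overall strategy (a bootstrap on the cubic-exponential rate driven by the weighted commutator of Lemma~\ref{lemma:commutator-calculation}) is the same as the paper's, but the proposal has a genuine gap exactly at the point you yourself flag as ``most subtle'': you give no mechanism for gaining a strictly larger exponent while keeping the outer boundary terms under control. If the weight is $\exp(2(\alpha_0+\delta_0)f^3)\bar\chi_m\chi_n$, the $\chi_n'$-contributions carry the factor $\exp(2(\alpha_0+\delta_0)2^{3n})$, which the level-$\alpha_0$ hypothesis cannot absorb (the ratio $\exp(2\delta_0 2^{3n})$ blows up); if instead the weight stays at level $\alpha_0$, nothing in your scheme produces the improved exponent. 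A ``polynomial regularization'' cannot resolve this, since it does not change the exponential growth rate. The paper's resolution is the interpolating weight \eqref{eq:15.2.15.5.8bb},
\begin{equation*}
\theta=2\alpha f^3+6\beta\int_0^f s^{2}(1+s/2^\nu)^{-3-\delta}\,\mathrm ds ,
\end{equation*}
whose second summand is \emph{bounded} for each fixed $\nu$ (so every term is finite and the $n\to\infty$ limit kills the boundary using only the level-$\alpha$ information, $\alpha<\alpha_0$ with $\alpha+\beta>\alpha_0$) and increases to $2\beta f^3$ as $\nu\to\infty$; two successive monotone-convergence passages ($n\to\infty$, then $\nu\to\infty$) then yield $\bar\chi_m\exp(\kappa f^3)\phi\in\mathcal B_0^*$ for all $\kappa<\alpha+\beta$, contradicting the maximality of $\alpha_0$. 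Crucially, the increment $\beta$ furnished by Lemma~\ref{lem:14.10.4.1.17ffaabb} is uniform in $\alpha\in[0,\alpha_0]$, which is what makes the bootstrap close.

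Your description of the positivity is also off the mark, and the substitute you propose is unverified. In the paper there is no ``delicate balance of constants at the critical scale $\alpha^3f^2\Theta$'': after completing the square with $A\mp\tfrac{\mathrm i}{2}|\partial f|^2\theta'$ (and using the extra $f^{-1}$-weighted gradient term built into Lemma~\ref{lemma:commutator-calculation} via the zero-identity), the dangerous $O(\alpha^3)$ and $O(\alpha^2)$ contributions reduce to $\tfrac14|\partial f|^4\theta'(\theta''-2f^{-1}\theta')\Theta$, which \emph{vanishes identically} for the pure cubic $\theta=2\alpha f^3$ since then $\theta''=2f^{-1}\theta'$; the surviving positivity is the $\alpha$-independent term $\tfrac12 f^{-1}\Theta$ coming from the Stark field ($\tfrac12 f^{-1}(1-x/r)+f^{-1}|\partial f|^2x=\tfrac12 f^{-1}$). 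Your alternative route --- extracting an $O(\alpha)$ lower bound on $A$ from the conjugated eigenvalue equation and hoping it ``should'' match the $-\tfrac14|\partial f|^4\Theta'''$ competitor --- is precisely the kind of constant comparison that can fail, and you do not carry it out. Without the exact cancellation (or a verified quantitative replacement) and without the $\nu$-regularized weight, the propagation step is not established.
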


\begin{proposition}\label{prop:Absence of super-exponentially decaying efb}
If a function $\phi\in L^2_{\mathrm{loc}}(\mathbb R^d)$ satisfies for
some $m_0\in\mathbb N_0$:
\begin{enumerate}
\item
$(H-\lambda)\phi=0$ in the distributional sense, 
\item
$\bar\chi_{m_0}\exp(\alpha f^3)\phi\in\mathcal B_0^*$ for any $\alpha\ge 0$,
\end{enumerate}
then $\phi=0$ on $\mathbb R^d$.
\end{proposition}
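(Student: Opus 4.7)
The plan is to combine the super-cubic exponential decay hypothesis with the weighted commutator identity of Lemma~\ref{lemma:commutator-calculation} to conclude that $\phi$ vanishes on the open set $\{f\ge f_0\}$ for some large $f_0$, and then to invoke the unique continuation hypothesis (Condition~\ref{cond:smooth2wea3n2}) to extend the vanishing to all of $\mathbb R^d$. The argument should mirror the scheme of \cite{IS} but upgraded from linear to cubic exponential weights, which is the genuinely new aspect here and reflects the characteristic asymptotics of classical Stark orbits along $\mathop{\mathrm{grad}}f$.

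First I would fix a parameter $\alpha\ge 1$ and a dyadic scale $n\gg m_0$, and apply Lemma~\ref{lemma:commutator-calculation} with weight $\Theta=\Theta_{n,\alpha}$ of the form
\begin{equation*}
\Theta_{n,\alpha}=\bar\chi_{m_0+1}^2\chi_n^2\,\exp(2\alpha f^3),
\end{equation*}
which is a smooth function of $f$ supported in $\{2^{m_0+1}\le f\le 2^{n+1}\}$. Evaluating the resulting operator identity as a quadratic form on the eigenfunction $\phi$, the two right-most terms of Lemma~\ref{lemma:commutator-calculation} involving $\Theta H$ and $\Theta'H$ collapse to real multiples of $\lambda$ (since $(H-\lambda)\phi=0$ in the distributional sense and $\Theta$ is $C^\infty$ with compact $f$-support), so in particular
\begin{equation*}
\langle [H,\mathrm iA]_{\Theta_{n,\alpha}}\rangle_\phi=0.
\end{equation*}
The key positive contribution comes from $\langle A\Theta'A\rangle_\phi$, since on $\mathop{\mathrm{supp}}\Theta$
\begin{equation*}
\Theta'_{n,\alpha}=6\alpha f^2\Theta_{n,\alpha}+(\text{cutoff-derivative terms}),
\end{equation*}
so the principal piece yields $6\alpha\,\langle A\,f^2\Theta A\rangle_\phi$, a quantity proportional to $\alpha$. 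The other bulk terms $p_jf^{-1}\Theta \ell_{jk}p_k$, $p_jf^{-1}|\partial f|^2(\delta_{jk}-(\partial_jr)(\partial_kr))\Theta p_k$ and $\tfrac12 f^{-1}(1-x/r)\Theta$ are all nonnegative and can be kept on the positive side of the inequality.

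Second, I would control the competing error terms. The critical observation is that on $\mathop{\mathrm{supp}}\Theta$ we have $|\partial f|^2\le Cr^{-1}\le Cf^{-2}$ by Remark~\ref{remark:fbndg}, so the dangerous-looking contributions $-\tfrac14|\partial f|^4\Theta'''$ and $-\tfrac12f^{-1}|\partial f|^4\Theta''$, whose derivatives naively produce $\alpha^3 f^6\Theta$ and $\alpha^2 f^4\Theta$ respectively, are in fact dominated by $\alpha^3 f^{-2}\Theta$ and $\alpha^2 f\Theta$. These can be absorbed into the positive $\alpha\langle A f^2\Theta A\rangle_\phi$-term using Cauchy--Schwarz on $p^f=A+\tfrac{\mathrm i}{2}(\Delta f)$, together with the bounds $|q_4|\le Cf^{-1-\rho}r^{-1}$ and $q_5\ge -Cf^{-1-\rho}$ furnished at the end of Lemma~\ref{lemma:commutator-calculation}, and the term $-2\mathop{\mathrm{Im}}(q_2\Theta p^f)$ handled similarly since $|q_2|\le Cf^{-1-\rho}$. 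The cutoff-derivative contributions localize on the shells $\{f\approx 2^{m_0+1}\}$ and $\{f\approx 2^{n+1}\}$; the inner shell is bounded in $n$ and handled by the $L^2_{\mathrm{loc}}$ regularity of $\phi$, whereas the outer shell is controlled by the hypothesis $\bar\chi_{m_0}\exp(\alpha f^3)\phi\in\vB_0^*$, which makes these terms $o(1)$ as $n\to\infty$ uniformly in $\alpha$ on any bounded $\alpha$-interval.

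Combining these bounds should produce, after sending $n\to\infty$, an estimate of the shape
\begin{equation*}
\alpha \bigl\langle A f^2\bar\chi_{m_0+1}^2\exp(2\alpha f^3)A\bigr\rangle_\phi
\le C\bigl\langle \bar\chi_{m_0+1}^2\exp(2\alpha f^3)\bigr\rangle_\phi
\end{equation*}
with $C$ independent of $\alpha$. Since $\alpha$ is arbitrary, the only way this can survive is that the right-hand side itself vanishes, which (together with the positivity of the unused $p_jf^{-1}\Theta \ell_{jk}p_k+\tfrac12f^{-1}(1-x/r)\Theta$ pieces obtained by the same bootstrap) forces $\phi=0$ on $\{f\ge 2^{m_0+2}\}$. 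Condition~\ref{cond:smooth2wea3n2} then delivers $\phi\equiv 0$ on $\mathbb R^d$.

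The main obstacle I anticipate is the careful bookkeeping at Step two: the weight $\exp(2\alpha f^3)$ differentiates into coefficients growing like $\alpha, \alpha^2, \alpha^3$ times powers of $f$, and one must verify that the gain $|\partial f|^{2k}\lesssim f^{-2k}$ exactly offsets the polynomial factors, so that the $A\Theta'A$-positivity strictly dominates everything at each order of $\alpha$. A secondary issue is the treatment of the lowest-order error $-(\partial^fq_1)\Theta$ coming from the $q_5$-term, where only $\rho$-decay is available; this must be absorbed not into the $A\Theta'A$-term but into the $\tfrac12f^{-1}(1-x/r)\Theta$ or $p_jf^{-1}\Theta\ell_{jk}p_k$ positivity, which requires a small geometric argument on how $\partial^f q_1$ relates to the transverse derivatives. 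Once these estimates are nailed down, the $\alpha\to\infty$ step is automatic.
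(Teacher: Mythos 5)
Your overall skeleton (weighted commutator with weight $\chi_{m,n}\mathrm e^{2\alpha f^3}$, evaluation on the eigenfunction, letting $n\to\infty$ using the $\mathcal B_0^*$ hypothesis, then exploiting the arbitrariness of $\alpha$, then unique continuation) matches the paper's proof. But the decisive step --- how the arbitrariness of $\alpha$ produces a contradiction --- is not correct as you have set it up, for two related reasons.

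First, your final inequality
\begin{equation*}
\alpha\bigl\langle A f^2\bar\chi_{m_0+1}^2\exp(2\alpha f^3)A\bigr\rangle_\phi
\le C\bigl\langle \bar\chi_{m_0+1}^2\exp(2\alpha f^3)\bigr\rangle_\phi
\end{equation*}
cannot force the right-hand side to vanish: both sides depend on $\alpha$, both may grow with $\alpha$, and the left-hand side is a \emph{derivative} form which does not dominate the weighted $L^2$-norm of $\phi$ appearing on the right. ``$\alpha X_\alpha\le CY_\alpha$ for all $\alpha$'' implies nothing about $Y_\alpha$ unless $X_\alpha\gtrsim Y_\alpha$, which you have not arranged. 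The positivity you need is a \emph{multiplication-operator} term of size $\alpha^2$. In the paper (Lemma~\ref{lemma:alpha^2-estimate}) this is extracted by completing the square, $A\theta'\Theta A\to(A+\tfrac{\mathrm i}2|\partial f|^2\theta')\theta'\Theta(A-\tfrac{\mathrm i}2|\partial f|^2\theta')$, peeling off $\tfrac12 AF^{-1}\Theta A$, converting the resulting kinetic term $\tfrac12 pf^{-1}|\partial f|^2\Theta p$ back into potential terms via $\mathop{\mathrm{Re}}(f^{-1}|\partial f|^2\Theta(H-\lambda))$, and retaining $\tfrac18 f^{-1}|\partial f|^4\theta'^2\Theta=\tfrac98\alpha^2f^3r^{-2}\Theta$. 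This $c\,\alpha^2 f^3r^{-2}\Theta$, not $\alpha\,A f^2\Theta A$, is what bounds $\|f^{3/2}r^{-1}\chi_{m,n}^{1/2}\exp(\alpha f^3)\phi\|^2$ from below.

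Second, even with the correct lower bound, the right-hand side of the final estimate must be a \emph{boundary} term supported on the fixed inner shell $\{2^{m-1}\le f\le 2^{m+2}\}$ (the outer shell having been killed by the hypothesis as $n\to\infty$), not the full interior integral you wrote. The contradiction is then a comparison of exponential growth rates in $\alpha$: normalizing, one gets
\begin{equation*}
\bigl\| f^{3/2}r^{-1}\bar\chi_{m}^{1/2}\exp\bigl[\alpha(f^3-2^{3(m+2)})\bigr]\phi\bigr\|^2\le C_m\|\chi_{m-1,m+1}\phi\|^2,
\end{equation*}
whose right-hand side is independent of $\alpha$ while the left-hand side blows up as $\alpha\to\infty$ unless $\phi$ vanishes on $\{f>2^{m+2}\}$; only then does Condition~\ref{cond:smooth2wea3n2} finish the argument. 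Your proposal never makes this normalization or this rate comparison, and the assertion that ``the only way this can survive is that the right-hand side itself vanishes'' is a non sequitur. The remaining ingredients you describe (absorbing the $\Theta''',\Theta''$ and $q_4,q_5$ errors, handling the shells) are consistent with the paper, but without the $\alpha^2$ multiplication positivity and the boundary-versus-interior bookkeeping the proof does not close.
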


Propositions~\ref{prop:Absence of super-exponentially decaying ef} 
and \ref{prop:Absence of super-exponentially decaying efb} will be proved 
in Subsections~\ref{subsec:180629} and \ref{subsec:18062923}, respectively.

\subsection{A priori super-cubic-exponential decay estimate}\label{subsec:180629}

Here we are going to 
prove Proposition~\ref{prop:Absence of super-exponentially decaying ef}. 
Choose a weight function $\Theta$ to be of the form
\begin{align}
\begin{split}
\Theta&= \Theta_{m,n,\nu}^{\alpha,\beta,\delta}
=\chi_{m,n}\mathrm e^\theta;\\
\theta&=\theta_\nu^{\alpha,\beta,\delta}
=2\alpha f^3+6\beta\int_0^fs^{2}(1+s/2^\nu)^{-3-\delta}\,\mathrm ds
\end{split}
\label{eq:15.2.15.5.8bb}
\end{align}
with parameters $\alpha,\beta\ge 0,\ \delta>0$ and $m,n,\nu\in\mathbb N$.
Note that $\Theta$ actually satisfies \eqref{eq:theta-derivative}
for large $m$.
We denote the derivatives in $f$ by primes as before.
If we set 
\begin{align*}
\theta_0=1+f/2^\nu
\end{align*} 
for notational simplicity, then 
\begin{align*}
\begin{split}
\theta'=6\alpha f^2+6\beta f^2\theta_0^{-3-\delta},\quad
\theta''=12\alpha f+12\beta f\theta_0^{-3-\delta}
-6\beta (3+\delta)2^{-\nu}f^2\theta_0^{-4-\delta}, \ldots.
\end{split}
\end{align*}
Noting that $2^{-\nu}\theta_0^{-1}\leq f^{-1}$, we have
\begin{align*}
\bigl|(\theta-2\alpha f^3)^{(k)}\bigr|\leq C_{\delta,k} \beta f^{3-k}\theta_0^{-3-\delta}
\quad \text{for }k=1,2,\dots.
\end{align*}

The following form inequality plays an essential role in the proof of 
Proposition~\ref{prop:Absence of super-exponentially decaying ef}.
\begin{lemma}\label{lem:14.10.4.1.17ffaabb}
Fix any $\alpha_0\ge 0$ and $\delta\in(0,\min\{2,\rho\})$ 
in the definition \eqref{eq:15.2.15.5.8bb} of $\Theta$.
Then there exist $\beta,c,C>0$ and $n_0\in\mathbb N$ 
such that uniformly in 
$\alpha\in[0,\alpha_0]$, $n>m\ge n_0$ and $\nu\ge n_0$, 
as quadratic forms on $\mathcal D(H)$,
\begin{align}
\begin{split}
\mathop{\mathrm{Im}}\bigl(A\Theta (H-\lambda)\bigr)
&
\ge 
c f^{-1}\theta_0^{-\delta}\Theta 
-Cf^{-1}\bigl(\chi_{m-1,m+1}^2+\chi_{n-1,n+1}^2\bigr)\mathrm e^\theta
\\&\phantom{{}={}}{}
+\mathop{\mathrm{Re}}\bigl(\gamma (H-\lambda)\bigr)
,
\end{split}
\label{eq:14.9.26.9.53ffaabb}
\end{align}
where $\gamma=\gamma_{m,n,\nu}$ is a certain function
satisfying $\mathop{\mathrm{supp}}\gamma\subseteq\mathop{\mathrm{supp}}\chi_{m,n}$ 
and $|\gamma|\le C\mathrm e^\theta$
uniformly in $n>m\ge n_0$ and $\nu\ge n_0$.
\end{lemma}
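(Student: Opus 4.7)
The starting point is the identity
\begin{equation*}
2\mathop{\mathrm{Im}}\bigl(A\Theta(H-\lambda)\bigr) = [H,\mathrm iA]_\Theta + \lambda|\partial f|^2\Theta',
\end{equation*}
valid as quadratic forms on $C^\infty_{\mathrm c}(\mathbb R^d)$; it follows from self-adjointness of $H$ and $A$ and from $[A,\Theta]=-\mathrm i|\partial f|^2\Theta'$. I then substitute the expansion from Lemma~\ref{lemma:commutator-calculation} and rewrite its two $H$-terms via $H=(H-\lambda)+\lambda$: the resulting $-\lambda|\partial f|^2\Theta'$-scalar exactly cancels the $+\lambda|\partial f|^2\Theta'$ above, and the leftover scalar $-2\lambda f^{-1}|\partial f|^2\Theta = O(f^{-3}\Theta)$ (using $|\partial f|^2\leq Cf^{-2}$ on $\mathop{\mathrm{supp}}\Theta$, cf.\ Remark~\ref{remark:fbndg}) is harmless. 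All remaining $(H-\lambda)$-dependence is packaged into one $\mathop{\mathrm{Re}}(\gamma(H-\lambda))$-term with $\gamma=-2f^{-1}|\partial f|^2\Theta-|\partial f|^2\Theta'$, for which $|\gamma|\leq Ce^\theta$ follows from the pointwise bounds on $|\partial f|^2$ and $|\Theta'|$.

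Next I substitute $\Theta=\chi_{m,n}e^\theta$ and expand every $\Theta^{(k)}$ by Leibniz, splitting each resulting term into a \emph{bulk} part (with $\chi_{m,n}$ un-differentiated) and \emph{boundary} parts (carrying at least one derivative of $\chi_{m,n}$). The boundary parts are supported in the shells $\{2^{m-1}\leq f\leq 2^{m+1}\}\cup\{2^{n-1}\leq f\leq 2^{n+1}\}$, and using $|\chi_{m,n}^{(j)}|\leq C_jf^{-j}(\chi_{m-1,m+1}+\chi_{n-1,n+1})$ together with Cauchy--Schwarz (applied to the $A\Theta'A$- and the two kinetic $p\cdot p$-terms to symmetrize sandwiched boundary derivatives), they are all dominated pointwise by $Cf^{-1}(\chi_{m-1,m+1}^2+\chi_{n-1,n+1}^2)e^\theta$, matching the second term of \eqref{eq:14.9.26.9.53ffaabb}.

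The core of the proof is the bulk positivity. On $\mathop{\mathrm{supp}}\Theta$ one has $r\geq f^2/2$, hence $|\partial f|^2\leq f^{-2}$, and $\theta'=6(\alpha+\beta\theta_0^{-3-\delta})f^2$ together with $|\theta^{(k+1)}|\leq C_k(\alpha+\beta\theta_0^{-3-\delta})f^{2-k}$ show that the leading bulk contribution of $\Theta^{(k)}$ is $(\theta')^k\chi_{m,n}e^\theta$. Plugging these into Lemma~\ref{lemma:commutator-calculation}, the most dangerous negative bulk scalar comes from $-\tfrac14|\partial f|^4\Theta'''$ and has size $O\bigl((\alpha+\beta\theta_0^{-3-\delta})^3 f^2\Theta\bigr)$. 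I would absorb it against the positive form $A\Theta'A$ via a completion-of-square hierarchy: the first step
\begin{equation*}
A\Theta' A\geq \mathop{\mathrm{Re}}\bigl(|\partial f|^2\theta'\Theta' A\bigr) - \tfrac14|\partial f|^4(\theta')^2\Theta'
\end{equation*}
consumes one power of $\theta'$, and iterating while converting the residual $A$-cross-terms to multiplications by $x$ via $|p|^2=2(x-q+\lambda)+2(H-\lambda)$ (with the $(H-\lambda)$-terms absorbed into $\gamma$) cancels the entire $\alpha^k$-hierarchy exactly. What remains of the bulk is a linear-in-$\beta$ contribution $c\beta f^{-1}\theta_0^{-\delta}\Theta$, generated by the single differentiation that lands on the $\theta_0^{-3-\delta}$-integrand. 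Choosing $\beta$ small depending on $\alpha_0$ beats the $\beta^2$- and $\beta^3$-losses (of orders $\beta^k f^{-1}\theta_0^{-k(3+\delta)}$), $\delta<\rho$ controls the $O(f^{-1-\rho})$-error from $\partial^f q_1$ in $q_5$ of Lemma~\ref{lemma:commutator-calculation}, and taking $n_0$ and $\nu$ large finishes the uniformity.

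\textbf{Main obstacle.} The decisive step is the cancellation of the cubic-in-$\theta'$ scalar from $-\tfrac14|\partial f|^4\Theta'''$ against $A\Theta'A$. The hierarchy closes cleanly only because $2\alpha f^3$ is a cube---so $\theta'$ has exactly the $f^2$-growth matching $A^2\simeq 2x$ modulo $(H-\lambda)$---and because the $\beta$-part of $\theta$ is tailored via the specific weight $\theta_0^{-3-\delta}$ to produce the required residual positivity upon one differentiation, without contaminating the higher-$\beta$-power compensations.
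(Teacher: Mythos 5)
Your skeleton matches the paper's: start from $2\mathop{\mathrm{Im}}(A\Theta(H-\lambda))=[H,\mathrm iA]_\Theta+\lambda|\partial f|^2\Theta'$, insert Lemma~\ref{lemma:commutator-calculation}, split bulk from shell-supported boundary terms, and you correctly single out $-\tfrac14|\partial f|^4\Theta'''\sim -\tfrac14|\partial f|^4\theta'^3\Theta=O(\alpha^3f^2\Theta)$ as the dangerous scalar and the cubic form of $\theta$ as the reason it can be removed. But your mechanism for removing it does not close. Your first step $A\Theta'A\ge \mathop{\mathrm{Re}}(|\partial f|^2\theta'\Theta'A)-\tfrac14|\partial f|^4\theta'^2\Theta'$ is the \emph{real}-shift square $\|\Theta'^{1/2}(A-\tfrac12|\partial f|^2\theta')\phi\|^2\ge0$; its cost $-\tfrac14|\partial f|^4\theta'^2\Theta'\approx-\tfrac14|\partial f|^4\theta'^3\Theta$ doubles the dangerous scalar instead of cancelling it, and the leftover $\mathop{\mathrm{Re}}(|\partial f|^2\theta'\Theta'A)$ is a genuine first-order operator with no scalar part: it cannot be "converted to $x$" through $p^2=2(x-q+\lambda)+2(H-\lambda)$ (that identity needs something quadratic in $p$), and any Cauchy--Schwarz step applied to it only produces further \emph{negative} scalars, never the positive $+\tfrac12|\partial f|^4\theta'^3\Theta$ you need. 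The paper instead uses the \emph{imaginary} shift, $\bigl(A+\tfrac{\mathrm i}2|\partial f|^2\theta'\bigr)\Theta'\bigl(A-\tfrac{\mathrm i}2|\partial f|^2\theta'\bigr)\ge0$, whose cross term is the commutator $-\tfrac{\mathrm i}2[A,|\partial f|^2\theta'\Theta']=-\tfrac12\partial^f(|\partial f|^2\theta'\Theta')$; this derivative supplies exactly the positive scalar $+\tfrac12|\partial f|^4\theta'^3\Theta$ (plus $|\partial f|^4\theta'\theta''\Theta$, etc.), and after combining with $-\tfrac34|\partial f|^4\theta'\theta''\Theta$, $-\tfrac12(\partial^f|\partial f|^2)\theta'^2\Theta$ and $-\tfrac12f^{-1}|\partial f|^4\theta'^2\Theta$ one is left with $\tfrac14|\partial f|^4\theta'(\theta''-2f^{-1}\theta')\Theta$, which vanishes identically for the pure $2\alpha f^3$ part because $(f^3)''=2f^{-1}(f^3)'$. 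That commutator-generated scalar is the whole point of the imaginary shift, and your version loses it.

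The second gap is the origin of the main positive term. The single differentiation landing on the $\theta_0^{-3-\delta}$-integrand gives $\theta''-2f^{-1}\theta'=-6\beta(3+\delta)2^{-\nu}f^2\theta_0^{-4-\delta}\le0$, so that contribution is \emph{negative} of size $O(\beta f^{-1}\theta_0^{-\delta}\Theta)$ --- it is an error to be beaten by taking $\beta$ small, not the source of positivity (and it could not be, since the lemma must also hold for $\alpha=0$ with a $c$ that does not degenerate). In the paper the term $cf^{-1}\theta_0^{-\delta}\Theta$ arises from the term $\tfrac12f^{-1}(1-x/r)\Theta$ of Lemma~\ref{lemma:commutator-calculation} combined with $f^{-1}|\partial f|^2x\,\theta_0^{-\delta}\Theta=\tfrac12f^{-1}(x/r)\theta_0^{-\delta}\Theta$, the latter produced by running the square a second time with weight $\tfrac12f^{-1}\theta_0^{-\delta}\Theta\le\theta'\Theta$ (this is where $\beta>0$ and $n_0=n_{0,\beta}$ large are used) down to $\tfrac12pf^{-1}|\partial f|^2\theta_0^{-\delta}\Theta p$ and then applying $\tfrac12p^2=(H-\lambda)+x-q+\lambda$; the identity $(1-x/r)+x/r=1$ then yields uniform positivity in all directions, including the forward cone where $1-x/r$ degenerates. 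Your proposal never invokes the $(1-x/r)$ term, so even granting the cancellation of the $\alpha$-hierarchy you would be left with no strictly positive scalar at all.
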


\begin{proof}
Fix any $\alpha_0\ge 0$ and $\delta\in(0,\min\{2,\rho\})$. 
We will fix small $\beta\in(0,1]$ and large $n_0\in\mathbb N$ 
in the last step of the proof.
For the moment we only assume $n_0\in\mathbb N$ is large enough that 
\eqref{eq:190315} holds for all $m\ge n_0$.
All the estimates below 
are uniform in $\alpha\in[0,\alpha_0]$, $\beta\in (0,1]$, $n>m\ge n_0>0$ and $\nu\ge n_0$.

By Lemma~\ref{lemma:commutator-calculation} we can bound  
\begin{align}
\begin{split}
&2\mathop{\mathrm{Im}}\bigl(A\Theta (H-\lambda)\bigr)
\\&=
[H,\mathrm iA]_\Theta
+\lambda|\partial f|^{2}\Theta'
\\&
\ge 
A\theta'\Theta A
+p_jf^{-1}\Theta\ell_{jk}p_k
+\tfrac12 f^{-1}(1-x/r)\Theta 
\\&\phantom{{}={}}{}
-\tfrac34|\partial f|^4\theta'\theta''\Theta
-\tfrac14|\partial f|^4\theta'^3\Theta
-\tfrac12(\partial^f|\partial f|^2)\theta'^2\Theta
-\tfrac12f^{-1}|\partial f|^4\theta'^2\Theta
\\&\phantom{{}={}}{}
-C_1Q
-2\mathop{\mathrm{Re}}\bigl(f^{-1}|\partial f|^2\Theta (H-\lambda)\bigr)
-\mathop{\mathrm{Re}}\bigl(|\partial f|^2\Theta' (H-\lambda)\bigr)
\\&
\ge 
\bigl(A+\tfrac{\mathrm i}2|\partial f|^2\theta'\bigr)\theta'\Theta \bigl(A-\tfrac{\mathrm i}2|\partial f|^2\theta'\bigr)
+p_jf^{-1}\Theta\ell_{jk}p_k
+\tfrac12 f^{-1}r^{-1}(r-x)\theta_0^{-\delta}\Theta 
\\&\phantom{{}={}}{}
+\tfrac14|\partial f|^4\theta'\bigl(\theta''-2f^{-1}\theta'\bigr)\Theta
-C_2Q
+\mathop{\mathrm{Re}}\bigl(\gamma_1(H-\lambda)\bigr)
.
\end{split}
\label{eq:18062820c}
\end{align}
Here and below we gather \emph{admissible} error terms in $Q$,
which is of the form
\begin{align*}
Q
&=
f^{-4}|\chi_{m,n}'''|\mathrm e^\theta
+f^{-2}|\chi_{m,n}''|\mathrm e^\theta
+|\chi_{m,n}'|\mathrm e^\theta
+f^{-1-\min\{2,\rho\}}\Theta
\\&\phantom{{}={}}{}
+pr^{-1}|\chi_{m,n}'|\mathrm e^\theta p
+pf^{-1-\rho}r^{-1}\Theta p
.
\end{align*}
Actually $-Q$ can be bounded below as 
\begin{align}
\begin{split}
-Q
&\ge 
-C_3f^{-1-\min\{2,\rho\}}\Theta
-C_3f^{-1}\bigl(\chi_{m-1,m+1}^2+\chi_{n-1,n+1}^2\bigr)\mathrm e^\theta
\\&\phantom{{}={}}{}
-2\mathop{\mathrm{Re}}\bigl(r^{-1}|\chi_{m,n}'|\mathrm e^\theta  (H-\lambda)\bigr)
-2\mathop{\mathrm{Re}}\bigl(f^{-1-\rho}r^{-1}\Theta(H-\lambda)\bigr)
,
\end{split}
\label{eq:180630c}
\end{align}
and we will see that this will be negligibly absorbed 
by other terms of \eqref{eq:18062820c}. 

Let us further combine and bound the first and second terms of \eqref{eq:18062820c} in the following manner.
Choose $n_0=n_{0,\beta}$ large enough depending on $\beta\in(0,1]$,
so that 
$$
\theta'\ge 
6\beta (f\theta_0^{-1})^3f^{-1}\theta_0^{-\delta}
\ge 
6\beta 2^{3(n_0-1)}f^{-1}\theta_0^{-\delta}
\ge \tfrac12f^{-1}\theta_0^{-\delta}\ \ 
\text{on }\mathop{\mathrm{supp}}\Theta.$$
Then we have the first and second terms of \eqref{eq:18062820c} bounded as 
\begin{align}
\begin{split}
&
\bigl(A+\tfrac{\mathrm i}2|\partial f|^2\theta'\bigr)\theta'\Theta \bigl(A-\tfrac{\mathrm i}2|\partial f|^2\theta'\bigr)
+p_jf^{-1}\Theta\ell_{jk}p_k
\\&
\ge 
\tfrac12\bigl(A+\tfrac{\mathrm i}2|\partial f|^2\theta'\bigr)
f^{-1}\theta_0^{-\delta}\Theta \bigl(A-\tfrac{\mathrm i}2|\partial f|^2\theta'\bigr)
+\tfrac12p_jf^{-1}\theta_0^{-\delta}\Theta\ell_{jk} p_k
\\&
\ge 
\tfrac12(p^f)^*f^{-1}\theta_0^{-\delta}\Theta p^f
-\tfrac18f^{-1}|\partial f|^4\theta'^2\theta_0^{-\delta}\Theta 
+\tfrac12p_jf^{-1}\theta_0^{-\delta}\Theta\ell_{jk} p_k
-C_4Q
\\&
\ge 
\tfrac12pf^{-1}|\partial f|^2\theta_0^{-\delta}\Theta p
-\tfrac18f^{-1}|\partial f|^4\theta'^2\theta_0^{-\delta}\Theta 
-C_4Q
\\&
\ge 
f^{-1}|\partial f|^2x\theta_0^{-\delta}\Theta 
+\tfrac18f^{-1}|\partial f|^4\theta'^2\theta_0^{-\delta}\Theta 
-C_5Q
+\mathop{\mathrm{Re}}\bigl(f^{-1}|\partial f|^2\theta_0^{-\delta}\Theta (H-\lambda)\bigr).
\end{split}
\label{eq:1806302c}
\end{align}
On the other hand, it is clear that the fourth term of \eqref{eq:18062820c}
is bounded as 
\begin{align}
\tfrac14|\partial f|^4\theta'\bigl(\theta''-2f^{-1}\theta'\bigr)\Theta
\ge 
-C_6\beta f^{-1}\theta_0^{-\delta}\Theta.
\label{eq:1806302d}
\end{align}
Now by \eqref{eq:18062820c}, \eqref{eq:1806302c}, \eqref{eq:1806302d} 
and \eqref{eq:180630c} we obtain 
\begin{align*}
\begin{split}
2\mathop{\mathrm{Im}}\bigl(A\Theta (H-\lambda)\bigr)
&\ge 
\tfrac12 (1-C_7\beta)f^{-1}\theta_0^{-\delta}\Theta 
-C_7f^{-1-\min\{2,\rho\}}\Theta
\\&\phantom{{}={}}{}
-C_7f^{-1}\bigl(\chi_{m-1,m+1}^2+\chi_{n-1,n+1}^2\bigr)\mathrm e^\theta
+\mathop{\mathrm{Re}}\bigl(\gamma_2(H-\lambda)\bigr)
.
\end{split}
\end{align*}
By choosing $\beta\in (0,1]$ small enough, 
and then $n_0\in\mathbb N$ large enough
we obtain the asserted inequality.
\end{proof}

\begin{proof}[Proof of Proposition~\ref{prop:Absence of super-exponentially decaying ef}]
Let $\phi\in  L^2_{\mathrm{loc}}(\mathbb R^d)$ and $m_0\in \N_0$
satisfy the assumptions of the assertion, and set
\begin{align*}
\alpha_0=\sup\bigl\{\alpha\ge 0\,\big|\,{\bar\chi_{m_0}}\exp(\alpha f^3)\phi\in \mathcal B_0^*\bigr\}.
\end{align*}
Assume $\alpha_0<\infty$, and let us deduce a contradiction.
Fix any $\delta\in (0,\min\{2,\rho\})$,
and choose $\beta>0$ and $n_0\ge 0$ as in Lemma~\ref{lem:14.10.4.1.17ffaabb}.
We may let $n_0\ge m_0+3$ without loss of generality.
If $\alpha_0=0$, let $\alpha=0$ so that we automatically have $\alpha+\beta>\alpha_0$.
Otherwise, we choose $\alpha\in[0,\alpha_0)$ such that $\alpha+\beta>\alpha_0$.
With such $\alpha$ and $\beta$
we evaluate the inequality \eqref{eq:14.9.26.9.53ffaabb}
in the state $\chi_{m-2,n+2}\phi$, 
and then we obtain for any $n>m\ge n_0$ and $\nu\ge n_0$
\begin{align}
\begin{split}
\bigl\|(f^{-1}\theta_0^{-\delta}\Theta)^{1/2}\phi\bigr\|^2
&
\le 
C_m\bigl\|\chi_{m-1,m+1}\phi\bigr\|^2
+C_\nu 2^{-n/2}\bigl\|\chi_{n-1,n+1}\exp(\alpha f^3)\phi\bigr\|^2
.
\end{split}
\label{eq:11.7.16.3.22a}
\end{align}
The second term on the right-hand side of (\ref{eq:11.7.16.3.22a})
  vanishes in the limit $n\to\infty$ since 
$\bar\chi_{m_0}\exp(\alpha f^3)\phi\in \mathcal B^*_0$, 
and hence by Lebesgue's monotone  convergence
theorem
\begin{align}
\bigl\|(\bar\chi_m f^{-1}\theta_0^{-\delta}
\mathrm{e}^{\theta})^{1/2}\phi\bigr\|^2
 &\le 
C_m\bigl\|\chi_{m-1,m+1}\phi\bigr\|^2.
\label{eq:11.7.16.3.43a}
\end{align}
Next we let $\nu \to\infty$ in \eqref{eq:11.7.16.3.43a}
    invoking again Lebesgue's monotone  convergence
theorem,
and then it follows that 
 $$\bar\chi_m^{1/2} f^{-1/2}\exp\bigl((\alpha+\beta)f^3\bigr)\phi
\in L^2(\mathbb R^d).$$ 
 This implies $\bar\chi_m^{1/2}\exp(\kappa f^3)\phi\in 
\mathcal B_0^*$
for any $\kappa\in(0,\alpha+\beta)$, which 
 contradicts that  $\alpha+\beta>\alpha_0$. 
\end{proof}

\subsection{Absence of super-cubic-exponentially decaying eigenfunctions}\label{subsec:18062923}

Next we prove Proposition~\ref{prop:Absence of super-exponentially decaying efb}. 
In order to prove it we choose $\Theta$ to be 
\begin{align}
\Theta= \Theta_{m,n}^{\alpha}
=\chi_{m,n}\mathrm e^\theta;\quad 
\theta=\theta^\alpha=2\alpha f^3
\label{eq:15.2.15.5.8bbb}
\end{align}
with parameters $\alpha\ge 1$ and $m,n\in\mathbb N$.
We first prove the following form inequality similar to Lemma~\ref{lem:14.10.4.1.17ffaabb},
however, focusing on different parameters.
We remark that  Lemma~\ref{lemma:alpha^2-estimate} will be implemented 
similarly to Lemma~\ref{lem:14.10.4.1.17ffaabb}.

\begin{lemma}\label{lemma:alpha^2-estimate}
There exist $c,C>0$ and $n_0\in\mathbb N$  
such that uniformly in 
$\alpha\ge 1$ and $n>m\ge n_0$, 
as quadratic forms on $\mathcal D(H)$,
\begin{align}
\begin{split}
\mathop{\mathrm{Im}}\bigl(A\Theta (H-\lambda)\bigr)
&
\ge 
c\alpha^2 f^3r^{-2}\Theta 
-C\alpha^2 f^{-1}\bigl(\chi_{m-1,m+1}^2+\chi_{n-1,n+1}^2\bigr)\mathrm e^\theta
\\&\phantom{{}={}}{}
+\mathop{\mathrm{Re}}\bigl(\gamma (H-\lambda)\bigr)
,
\end{split}
\label{eq:keyest}
\end{align}
where $\gamma=\gamma_{m,n}^\alpha$ is a certain function
satisfying $\mathop{\mathrm{supp}}\gamma\subseteq\mathop{\mathrm{supp}}\chi_{m,n}$ 
and $|\gamma|\le C\alpha\mathrm e^\theta$
uniformly in $\alpha\ge 1$ and $n>m\ge n_0$.
\end{lemma}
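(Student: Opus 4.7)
The plan is to follow the structure of the proof of Lemma~\ref{lem:14.10.4.1.17ffaabb}, but with careful tracking of the powers of $\alpha$ that now appear because the weight function $\theta=2\alpha f^3$ carries no damping auxiliary. The first step is to apply Lemma~\ref{lemma:commutator-calculation} to $\Theta=\chi_{m,n}\mathrm e^{2\alpha f^3}$. With $\theta'=6\alpha f^2$, $\theta''=12\alpha f$, $\theta'''=12\alpha$, the bulk parts of the derivatives of $\Theta$ satisfy $\Theta^{(k)}\sim(\theta')^k\Theta=O(\alpha^kf^{2k})\Theta$, while the contributions involving $\chi_{m,n}^{(k)}$ are localised on the annular region $\{f\sim 2^m\}\cup\{f\sim 2^n\}$ and carry at most $\alpha^2$ factors (the worst arising from $3\chi_{m,n}'(\theta')^2e^\theta$, since $|\chi_{m,n}^{(k)}|\lesssim f^{-k}$ on support). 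These cut-off pieces, after being combined with $|\partial f|^4\le Cr^{-2}$ and $f^3/r^2\le C f^{-1}$, produce exactly the asserted boundary error $-C\alpha^2 f^{-1}(\chi_{m-1,m+1}^2+\chi_{n-1,n+1}^2)e^\theta$.

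The crucial step is to cancel the $\alpha^3$-scaling bulk contribution $-\tfrac14|\partial f|^4(\theta')^3\Theta=-\tfrac{27\alpha^3 f^6}{2r^2}\Theta$ coming from $-\tfrac14|\partial f|^4\Theta'''$. One completes squares in the $A\Theta'A$ term using the shift $W=\tfrac12|\partial f|^2\theta'$, invoking the identity
\begin{align*}
  ACA=(A+iW)C(A-iW)+\partial^f(WC)-W^2 C\qquad(C\ge 0, \ W \text{ real scalar}),
\end{align*}
with $C=\theta'\Theta$ and the non-negativity $(A+iW)C(A-iW)=\bigl((A-iW)^*\bigr)^*\,C\,\bigl((A-iW)^*\bigr)\ge 0$. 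This yields $A\Theta'A\ge \tfrac12\partial^f(|\partial f|^2(\theta')^2\Theta)-\tfrac14|\partial f|^4(\theta')^3\Theta$, and expanding the gradient term in the bulk via the chain rule gives an additional $+\tfrac12|\partial f|^4(\theta')^3\Theta$ at leading order; combined with the $-\tfrac14|\partial f|^4(\theta')^3\Theta$ from the identity and the original $-\tfrac14|\partial f|^4(\theta')^3\Theta$ inside $\Theta'''$, the net $\alpha^3$ contribution vanishes. The coefficient $c=\tfrac12$ in $W=c|\partial f|^2\theta'$ is the unique choice making this cancellation work (it is the double root of $c^2-c+\tfrac14=0$).

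Next I continue along the lines of (4.10)--(4.11) of the previous proof: the identity $(p^f)^*p^f+p_j\ell_{jk}p_k=p|\partial f|^2p$ lets one absorb the remaining kinetic-energy terms together with $(A+iW)\theta'\Theta(A-iW)$ into a $\tfrac12 pW'p$-form with $W'=f^{-1}|\partial f|^2\Theta$. Applying the formula $\tfrac12 p^2=H+x-q$ then produces the key positive contribution $\tfrac18 f^{-1}|\partial f|^4(\theta')^2\Theta=\tfrac{9}{8}\alpha^2 f^3 r^{-2}\Theta$ (a Cauchy--Schwarz-type gradient-squared term $\propto|\nabla W'|^2/W'$ from the exponential growth of $\Theta$), while the scalar residues reduce modulo $f^{-1-\rho}\Theta$-errors to $\tfrac{1}{2f}\Theta$ (the combination $\tfrac12 f^{-1}(1-x/r)\Theta+f^{-1}|\partial f|^2 x\Theta$). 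The $\alpha^2$-order pieces coming from $-\tfrac12(\partial^f|\partial f|^2)\Theta''$, $-\tfrac12 f^{-1}|\partial f|^4\Theta''$, $-\tfrac34|\partial f|^4\theta'\theta''\Theta$, and the expanded gradient from the completed squares combine precisely so that the surviving net positive contribution is indeed of order $\alpha^2 f^3 r^{-2}\Theta$.

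Finally, all terms with an explicit factor $(H-\lambda)$ (coming from the $\mathop{\mathrm{Re}}(\cdot H)$ terms in Lemma~\ref{lemma:commutator-calculation} after subtracting $\lambda$, plus from the $pW'p\to W'(H+x-q)$ step) are collected into $\mathop{\mathrm{Re}}(\gamma(H-\lambda))$ with $\gamma=O(\alpha e^\theta)$, in line with the asserted bound $|\gamma|\le C\alpha e^\theta$. Lower-order errors of $\alpha$-orders $\le 1$ (from $q_4\Theta'$, $q_5\Theta$, $-2\mathop{\mathrm{Im}}(q_2\Theta p^f)$, and the divergence $\partial^f|\partial f|^2$-type pieces) are dominated by the positive $c\alpha^2 f^3 r^{-2}\Theta$ by choosing $n_0$ sufficiently large so that $f^{-1-\min\{2,\rho\}}\le \epsilon\cdot\alpha^2 f^3 r^{-2}$ on the support of $\Theta$, uniformly in $\alpha\ge 1$. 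The main obstacle is verifying the delicate $\alpha^3$ cancellation together with the positive emergence of the $\alpha^2$ term; the rest is a tracking exercise analogous to the previous lemma.
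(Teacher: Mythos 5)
Your plan follows the paper's proof essentially step by step: expand via Lemma~\ref{lemma:commutator-calculation}, complete the square in $A$ with the shift $W=\tfrac12|\partial f|^2\theta'$ (and indeed $c=\tfrac12$ is forced, since the net $\theta'^3$-coefficient is $-(c-\tfrac12)^2$), note that for $\theta=2\alpha f^3$ one has $\theta''=2f^{-1}\theta'$ so the residual $\tfrac14|\partial f|^4\theta'(\theta''-2f^{-1}\theta')\Theta$ vanishes identically, then trade half of the completed square plus half of the $\ell$-term for $\tfrac12\,p f^{-1}|\partial f|^2\Theta\,p$ via \eqref{eq:18121514} and convert that to $(H-\lambda)$-terms plus the key positive scalar $\tfrac18 f^{-1}|\partial f|^4\theta'^2\Theta=\tfrac98\alpha^2f^3r^{-2}\Theta$ and the combination $\tfrac12f^{-1}\Theta$. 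All of these computations, including the $\alpha^3$-cancellation, the value $\tfrac98$, and the bound $|\gamma|\le C\alpha\mathrm e^\theta$, are correct and coincide with the paper's argument \eqref{eq:18062820b}--\eqref{eq:18062820bbb}.

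There is, however, one concrete error in your final absorption step. You propose to dominate the residual scalar errors by choosing $n_0$ so large that $f^{-1-\min\{2,\rho\}}\le\epsilon\,\alpha^2f^3r^{-2}$ on $\mathop{\mathrm{supp}}\Theta$. This inequality is false: it is equivalent to $r^2\le\epsilon\,\alpha^2 f^{4+\min\{2,\rho\}}$, but each level set $\{f=\mathrm{const}\}$ is a paraboloid on which $r$ is unbounded (take $x\to-\infty$ with $r+x$ fixed), so for fixed $\alpha$ and fixed $f$-range the right-hand side tends to $0$ while the left-hand side stays bounded below. Since the lemma must hold for all $n>m\ge n_0$, the support of $\Theta$ does contain such points, and the $\alpha^2f^3r^{-2}\Theta$ term alone cannot absorb errors that decay only in $f$. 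The correct bookkeeping — which the paper carries out in \eqref{eq:18062820bb} and its final display — is to retain \emph{both} positive scalar terms, $\tfrac98\alpha^2f^3r^{-2}\Theta$ and $\tfrac12f^{-1}\Theta$ (you derive the latter but then do not use it), absorb errors of type $f^{-1-\min\{2,\rho\}}\Theta$ into the second, errors of type $\alpha^2f^{3-\min\{2,\rho\}}r^{-2}\Theta$ into the first, and split the mixed error $\alpha f^{1-\min\{2,\rho\}}r^{-1}\Theta$ by the elementary inequality $\alpha f^{1-\rho'}r^{-1}\le\tfrac12\alpha^2f^{3-\rho'}r^{-2}+\tfrac12f^{-1-\rho'}$ (with $\rho'=\min\{2,\rho\}$) before absorbing. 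With that correction the argument closes exactly as in the paper.
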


\begin{proof}
In this proof all the estimates are uniform in 
$\alpha\ge 1$ and $n>m\ge n_0$.
We will retake $n_0\in\mathbb N$ larger, if necessary, each time it appears below.

By Lemma~\ref{lemma:commutator-calculation} we bound 
\begin{align}
\begin{split}
&2\mathop{\mathrm{Im}}\bigl(A\Theta (H-\lambda)\bigr)
\\&
=
[H,\mathrm iA]_\Theta+\lambda|\partial f|^2\Theta'
\\&
\ge 
A\theta'\Theta A
+p_jf^{-1}\Theta\ell_{jk}p_k
+\tfrac12 f^{-1}(1-x/r)\Theta 
\\&\phantom{{}={}}{}
-\tfrac34|\partial f|^4\theta'\theta''\Theta
-\tfrac14|\partial f|^4\theta'^3\Theta
-\tfrac12(\partial^f|\partial f|^2)\theta'^2\Theta
-\tfrac12f^{-1}|\partial f|^4\theta'^2\Theta
\\&\phantom{{}={}}{}
-C_1Q
-2\mathop{\mathrm{Re}}\bigl(f^{-1}|\partial f|^2\Theta (H-\lambda)\bigr)
-\mathop{\mathrm{Re}}\bigl(|\partial f|^2\Theta' (H-\lambda)\bigr)
\\&
\ge 
\bigl(A+\tfrac{\mathrm i}2|\partial f|^2\theta'\bigr)\theta'\Theta \bigl(A-\tfrac{\mathrm i}2|\partial f|^2\theta'\bigr)
+p_jf^{-1}\Theta\ell_{jk} p_k
+\tfrac12 f^{-1}r^{-1}(r-x)\Theta 
\\&\phantom{{}={}}{}
-C_2Q
+\mathop{\mathrm{Re}}\bigl(\gamma_1(H-\lambda)\bigr)
,
\end{split}
\label{eq:18062820b}
\end{align}
where $Q$ consists of \emph{admissible} error terms:
\begin{align*}
Q
&=
f^{-4}|\chi_{m,n}'''|\mathrm e^\theta
+\alpha f^{-2}|\chi_{m,n}''|\mathrm e^\theta
+\alpha^2 |\chi_{m,n}'|\mathrm e^\theta
+\alpha f^{1-\min\{2,\rho\}}r^{-1}\Theta
\\&\phantom{{}={}}{}
+f^{-1-\min\{6,\rho\}}\Theta
+p r^{-1}|\chi_{m,n}'|\mathrm e^\theta p
+p f^{-1-\rho}r^{-1}\Theta p.
\end{align*}
Note that $Q$ satisfies 
\begin{align}
\begin{split}
-Q
&\ge 
-C_3\alpha^2 f^{3-\min\{2,\rho\}}r^{-2}\Theta
-C_3f^{-1-\min\{2,\rho\}}\Theta
\\&\phantom{{}={}}{}
-C_3\alpha^2 f^{-1}\bigl(\chi_{m-1,m+1}^2+\chi_{n-1,n+1}^2\bigr)\mathrm e^\theta
\\&\phantom{{}={}}{}
-2\mathop{\mathrm{Re}}\bigl(r^{-1}|\chi_{m,n}'|\mathrm e^\theta (H-\lambda)\bigr)
-2\mathop{\mathrm{Re}}\bigl(f^{-1-\rho}r^{-1}\Theta (H-\lambda)\bigr)
.
\end{split}
\label{eq:18062820bb}
\end{align}
Let us combine and bound the first and second terms of \eqref{eq:18062820b} as 
\begin{align}
\begin{split}
&
\bigl(A+\tfrac{\mathrm i}2|\partial f|^2\theta'\bigr)\theta'\Theta \bigl(A-\tfrac{\mathrm i}2|\partial f|^2\theta'\bigr)
+p_jf^{-1}\Theta\ell_{jk}p_k
\\&
\ge 
\tfrac12\bigl(A+\tfrac{\mathrm i}2|\partial f|^2\theta'\bigr)f^{-1}\Theta \bigl(A-\tfrac{\mathrm i}2|\partial f|^2\theta'\bigr)
+\tfrac12p_jf^{-1}\Theta\ell_{jk} p_k
\\&
\ge 
\tfrac12pf^{-1}|\partial f|^2\Theta p
-\tfrac18f^{-1}|\partial f|^4\theta'^2\Theta 
-C_4Q
\\&
\ge 
\tfrac18f^{-1}|\partial f|^4\theta'^2\Theta 
+f^{-1}|\partial f|^2x\Theta 
-C_5Q
+\mathop{\mathrm{Re}}\bigl(f^{-1}|\partial f|^2\Theta (H-\lambda)\bigr).
\end{split}
\label{eq:18062820bbb}
\end{align}
Now by \eqref{eq:18062820b}, \eqref{eq:18062820bb} and \eqref{eq:18062820bbb} 
we obtain 
\begin{align*}
\begin{split}
2\mathop{\mathrm{Im}}\bigl(A\Theta (H-\lambda)\bigr)
&\ge 
\bigl(\tfrac98-C_6f^{-\min\{2,\rho\}}\bigr)\alpha^2f^3r^{-2}\Theta 
+\bigl(\tfrac12-C_6f^{-\min\{2,\rho\}}\bigr) f^{-1}\Theta 
\\&\phantom{{}={}}{}
-C_6\alpha^2 f^{-1}\bigl(\chi_{m-1,m+1}^2+\chi_{n-1,n+1}^2\bigr)\mathrm e^\theta
+\mathop{\mathrm{Re}}\bigl(\gamma_2(H-\lambda)\bigr)
.
\end{split}
\end{align*}
By letting $n_0\in\mathbb N$ large enough 
we obtain the assertion.
\end{proof}

\begin{proof}[Proof of Proposition~\ref{prop:Absence of super-exponentially decaying efb}]
Let $\phi \in L^2_{\mathrm{loc}}(\mathbb R^d)$ and $m_0\in\mathbb N$
satisfy the assumptions of the assertion.
Choose $n_0 \ge 0$ in agreement with Lemma~\ref{lemma:alpha^2-estimate}.
We may let $n_0\ge m_0+3$.
We evaluate the inequality \eqref{eq:keyest} in the state 
$\chi_{m-2,n+2}\phi$,
and then it follows that for any $\alpha\ge 1$ and $n>m\ge n_0$
\begin{align}
 \label{eq:evaluate-in-the-state}
\begin{split}
\bigl\| f^{3/2}r^{-1}\chi_{m,n}^{1/2}\exp(\alpha f^3)\phi \bigr\|^2 
&\le 
C_m\bigl\| \chi_{m-1,m+1}\exp(\alpha f^3)\phi\bigr\|^2 
\\&\phantom{{}={}}{}
+ C_12^{-n/2}\bigl\| \chi_{n-1,n+1}\exp(\alpha f^3)\phi \bigr\|^2.
\end{split}
\end{align}
Since $\bar\chi_{m_0}\exp(\alpha f^3)\phi \in \vB_0^*$ for any $\alpha>0$, 
the second term on the right-hand side of \eqref{eq:evaluate-in-the-state} 
vanishes in the limit $n \to \infty$.
Hence by the Lebesgue monotone convergence theorem we obtain
\begin{equation*}
\bigl\| f^{3/2}r^{-1}\bar\chi_{m}^{1/2}\exp(\alpha f^3)\phi \bigr\|^2 
\le
 C_m\bigl\| \chi_{m-1,m+1}\exp(\alpha f^3)\phi \bigr\|^2,
\end{equation*}
or
\begin{equation} \label{eq:n-infty}
\bigl\| f^{3/2}r^{-1}\bar\chi_{m}^{1/2}\exp\bigl[\alpha (f^3-2^{3(m+2)})\bigr]\phi \bigr\|^2 
\le 
C_m\| \chi_{m-1,m+1}\phi \|^2.
\end{equation}
Now assume $\bar\chi_{m+2}\phi \not\equiv 0$.
The left-hand side of \eqref{eq:n-infty} grows exponentially as $\alpha \to \infty$ whereas the right-hand side remains bounded.
This is a contradiction.
Thus $\bar\chi_{m+2}\phi \equiv 0$. 
Then by Condition~\ref{cond:smooth2wea3n2} we obtain $\phi \equiv 0$ on $\R^d$.
\end{proof}

\section{LAP bounds}\label{sec:181010}

In this section we prove LAP bounds asserted in Theorem~\ref{thm:lap-bounds}. 
Technically, we split $\phi=R(z)\psi$ into two parts according to the size of $f$.
We bound the part of $\phi$ with large $f$ 
employing a commutator computation 
from Lemma~\ref{lemma:commutator-calculation} for a weight 
\begin{equation}\label{eq:1810021842}
\begin{split}
\Theta &=\Theta_{m,\nu}^\delta=\bar\chi_m\theta;\\
\theta&= \theta_\nu^\delta = \int_0^{f/2^\nu}(1+s)^{-1-\delta}\,\mathrm ds 
=\bigl[1-(1+f/2^\nu)^{-\delta}\bigr]\big/\delta
\end{split}
\end{equation}
with parameters $\delta>0$ and $m,\nu\in\N_0$.
On the other hand, the part of $\phi$ with small $f$ can be controlled by 
local compactness
for which we make use of 
Condition~\ref{cond:1803131348b}.
These preliminary arguments are given in Subsection~\ref{subsec:181216},
and the proof of Theorem~\ref{thm:lap-bounds} in Subsection~\ref{subsec:181216b}.

\subsection{Key bounds and local compactness}\label{subsec:181216}

Let us denote the derivatives of functions in $f$ by primes as 
in the previous sections.
Then we have
\begin{equation}\label{eq:lap-theta-derivative}
\theta'=(1+f/2^\nu)^{-1-\delta}\big/2^\nu, \quad
\theta''=-(1+\delta)(1+f/2^\nu)^{-2-\delta}\big/2^{2\nu}.
\end{equation}
The function $\theta$ has the following properties.

\begin{lemma}\label{lem:theta-inequality}
Fix any $\delta>0$ in \eqref{eq:1810021842}.
Then there exist $c, C, C_k>0,\ k=2, 3, \ldots$, such that for any $k=2,3,\ldots$ 
and uniformly in $\nu\in\N_0$
\begin{align*}
&
c/2^\nu \le \theta \le \min\{C, f/2^\nu\},
\\&
c(\min\{2^\nu, f\})^\delta f^{-1-\delta}\theta \le \theta' \le f^{-1}\theta,
\\&
0\le(-1)^{k-1}\theta^{(k)} \le C_kf^{-k}\theta.
\end{align*}
\end{lemma}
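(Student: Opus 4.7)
The plan is to exploit the explicit representation $\theta = \delta^{-1}[1-(1+x)^{-\delta}]$ with the shorthand $x = f/2^\nu$, together with the general formula
\begin{equation*}
\theta^{(k)} = (-1)^{k-1} 2^{-k\nu}(1+\delta)(2+\delta)\cdots(k-1+\delta)(1+f/2^\nu)^{-k-\delta} \text{ for } k \ge 1,
\end{equation*}
which follows by induction from \eqref{eq:lap-theta-derivative}. This formula immediately yields the sign assertion $(-1)^{k-1}\theta^{(k)} \ge 0$ in bound (3). I will also use the elementary fact that $f \ge 1$ pointwise on $\R^d$, which is clear from the definition \eqref{eq:181215} of $f$, so that $x \ge 1/2^\nu$ uniformly.

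For bound (1), the upper bounds $\theta \le 1/\delta$ and $\theta \le x$ are immediate from $\theta = \int_0^x (1+s)^{-1-\delta}\,\mathrm ds$ using $(1+s)^{-1-\delta} \le 1$ and the value of the improper integral. For the lower bound, monotonicity gives $\theta \ge 2^{-1-\delta}\min\{1,x\}$; combined with $x \ge 1/2^\nu$ this yields $\theta \ge c/2^\nu$.

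For bound (2), the upper bound $\theta' \le f^{-1}\theta$ uses monotone decay of the integrand: $\theta \ge x(1+x)^{-1-\delta} = f\theta'$. For the lower bound the argument splits by the sign of $x-1$, equivalently by whether $\min\{2^\nu,f\} = f$ or $= 2^\nu$. When $x \le 1$ one has $\theta \le x$ and $\theta' \ge 2^{-1-\delta}/2^\nu$, so $\theta'/\theta \ge c/f$, matching $c(\min\{2^\nu,f\})^\delta f^{-1-\delta} = c f^{-1}$. When $x \ge 1$ one uses $\theta \le 1/\delta$ together with $(1+x)^{-1-\delta} \ge 2^{-1-\delta} x^{-1-\delta}$ to obtain $\theta' \ge c\, 2^{\nu\delta} f^{-1-\delta}$, matching the claimed bound in that regime.

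For bound (3), the same case-split controls $|\theta^{(k)}| f^k / \theta = C_k x^k(1+x)^{-k-\delta}/\theta$. On $x \le 1$, the estimates $(1+x)^{-k-\delta} \le 1$ and $\theta \ge 2^{-1-\delta} x$ give a bound by $C_k' x^{k-1} \le C_k'$. On $x \ge 1$, the estimates $(1+x)^{-k-\delta} \le x^{-k-\delta}$ and $\theta \ge \theta|_{x=1} = \delta^{-1}(1-2^{-\delta})$ give a bound by $C_k'' x^{-\delta} \le C_k''$. Altogether, the argument is a routine chain of elementary estimates; the only point requiring care is keeping track of the two regimes $f \lessgtr 2^\nu$ uniformly in $\nu$, and no genuine obstacle is anticipated.
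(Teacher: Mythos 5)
Your proof is correct. The paper itself omits the proof of this lemma (it simply refers to \cite[Lemma~4.2]{IS}), and your argument — the explicit formula for $\theta^{(k)}$, the observation $f\ge 1$ so $x=f/2^\nu\ge 2^{-\nu}$, and the case split $f\lessgtr 2^\nu$ — is exactly the routine verification being delegated there; all the individual estimates check out.
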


We omit the proof of Lemma \ref{lem:theta-inequality}, see e.g. \cite[Lemma~4.2]{IS}.

The following proposition provides 
key bounds for the proof of Theorem~\ref{thm:lap-bounds}.

\begin{proposition}\label{prop:lap-bound-key-estimate}
Assume Condition~\ref{cond:1803131348}.
Let $I\subseteq\R$ be a compact interval, 
fix any $\delta\in(0, \min\{ 2, \rho\})$ in \eqref{eq:1810021842}.
Then there exist $C>0$ and $n \in\mathbb N_0$ such that 
for any $\nu\in\N_0$, $z\in I_\pm$ and $\psi \in \vB$
the states $\phi=R(z)\psi$ satisfy
\begin{align}
\begin{split}
&
\|\theta'^{1/2}\phi\|^2 
+\|(1-x/r)^{1/2}\theta^{1/2}\phi\|_{L^2_{-1/2}}^2 
+ \|\theta'^{1/2}A\phi\|^2 
+ \langle p_jf^{-1}\theta \ell_{jk}p_k\rangle_\phi 
\\&\le 
C\left( \|\phi\|_{\vB^*} \|\psi\|_\vB 
+ \|A\phi\|_{\vB^*} \|\psi\|_\vB 
+ \|\chi_n\theta^{1/2}\phi\|^2 \right).
\end{split}
\label{eq:181218}
\end{align}
\end{proposition}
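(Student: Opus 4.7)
The plan is to apply Lemma~\ref{lemma:commutator-calculation} with the weight $\Theta = \Theta_{m,\nu}^\delta = \bar\chi_m\theta$ from \eqref{eq:1810021842}, take the quadratic form expectation in $\phi = R(z)\psi$, and close the resulting inequality using the resolvent identity $(H-z)\phi = \psi$. This is the Mourre-style positive commutator approach of \cite{IS}, specialized to the present Stark setting.

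Fix $m \in \mathbb N_0$ large enough that $\mathop{\mathrm{supp}}\bar\chi_m \cap \mathop{\mathrm{supp}}q_3 = \emptyset$ so that $\Theta$ satisfies \eqref{eq:theta-derivative}. Lemma~\ref{lemma:commutator-calculation} produces the three principal positive contributions
\[
A\Theta'A, \qquad p_jf^{-1}\Theta\ell_{jk}p_k, \qquad \tfrac{1}{2}f^{-1}(1-x/r)\Theta,
\]
which, after taking expectation in $\phi$, generate the bounds $\|\theta'^{1/2}A\phi\|^2$, $\langle p_jf^{-1}\theta\ell_{jk}p_k\rangle_\phi$, and $\|(1-x/r)^{1/2}\theta^{1/2}\phi\|_{L^2_{-1/2}}^2$, respectively. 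The remaining terms of Lemma~\ref{lemma:commutator-calculation} are shown to be either lower order or absorbable: the higher-derivative weights $|\partial f|^4\Theta'''$, $(\partial^f|\partial f|^2)\Theta''$ and $f^{-1}|\partial f|^4\Theta''$ are of order $f^{-3-\delta}\theta$ by combining $|\partial f|^2 \lesssim f^{-2}$ (Remark~\ref{remark:fbndg}) with $|\theta^{(k)}| \lesssim f^{-k}\theta$ (Lemma~\ref{lem:theta-inequality}); the potential contributions $q_4\Theta'$, $q_5\Theta$ and $-2\mathop{\mathrm{Im}}(q_2\Theta p^f)$ are controlled via the decay rates in Condition~\ref{cond:1803131348}, where the restriction $\delta<\rho$ is decisive for absorption into the leading $f^{-1}\theta$-term; and the cut-off errors from $\bar\chi_m'$, being compactly supported, are (after multiplication by $\chi_n$ for compact-support localization) subsumed by the $\|\chi_n\theta^{1/2}\phi\|^2$ term on the right-hand side of \eqref{eq:181218}.

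Substituting $H\phi = z\phi + \psi$ produces the quadratic form identity
\[
\langle [H, iA]_\Theta\rangle_\phi = 2\mathop{\mathrm{Im}}\langle \Theta A\phi, \psi\rangle + (\mathop{\mathrm{Im}}z)\cdot(\text{symmetrized form in } A, \Theta, \phi).
\]
The first right-hand side term is bounded by $C\|A\phi\|_{\vB^*}\|\psi\|_\vB$ via Besov duality applied dyadic shell by dyadic shell, and the $\mathop{\mathrm{Im}}z$-contribution is handled via $\mathop{\mathrm{Im}}z \in (0,1]$ combined with the a priori estimate $|\mathop{\mathrm{Im}}z|\,\|\phi\|^2 = |\mathop{\mathrm{Im}}\langle \psi, \phi\rangle| \le \|\psi\|_\vB\|\phi\|_{\vB^*}$. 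The Hamiltonian-factor correction terms $-2\mathop{\mathrm{Re}}(f^{-1}|\partial f|^2\Theta H) - \mathop{\mathrm{Re}}(|\partial f|^2\Theta'H)$, once $H\phi = z\phi + \psi$ is substituted, become contributions bounded by $C(\|\phi\|_{\vB^*}\|\psi\|_\vB + \|\chi_n\theta^{1/2}\phi\|^2)$ using the boundedness of $\mathop{\mathrm{Re}}z$ on $I$ together with $|\partial f|^2 \lesssim f^{-2}$.

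The first bound, $\|\theta'^{1/2}\phi\|^2$, is not produced directly by the principal positive commutator terms but emerges from a separate ingredient: the $(1-x/r)\theta f^{-1}$-bound already gives the required control away from the $+x$-direction, while in the classically allowed $+x$-direction one uses the $A\Theta'A$-positivity (which in this regime is essentially $\|\theta'^{1/2}\partial_x\phi\|^2$) combined with a Hardy-type absorption to bound $\|\theta'^{1/2}\phi\|^2$ there. The main obstacle throughout is uniformity of all constants in $\nu \in \mathbb N_0$; this is ensured by the $\nu$-uniform bounds of Lemma~\ref{lem:theta-inequality}. A secondary delicate point is the careful balancing, via Cauchy-Schwarz, of the $\mathop{\mathrm{Im}}z$- and Hamiltonian-factor contributions against the principal positivity, so that the resulting estimate takes the product form $\|A\phi\|_{\vB^*}\|\psi\|_\vB$ + $\|\phi\|_{\vB^*}\|\psi\|_\vB$ displayed in \eqref{eq:181218}.
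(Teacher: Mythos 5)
Your overall framework is the same as the paper's: apply Lemma~\ref{lemma:commutator-calculation} with the weight \eqref{eq:1810021842}, take the expectation in $\phi=R(z)\psi$, bound the $\mathop{\mathrm{Re}}(\gamma(H-z))$ and $\mathop{\mathrm{Im}}z$ contributions by $\|\phi\|_{\vB^*}\|\psi\|_\vB+\|A\phi\|_{\vB^*}\|\psi\|_\vB$, and dump the cut-off errors into $\|\chi_n\theta^{1/2}\phi\|^2$. The treatment of the second, third and fourth terms on the left of \eqref{eq:181218} and of the admissible errors is essentially correct.

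There is, however, a genuine gap in how you produce the first term $\|\theta'^{1/2}\phi\|^2$. Your proposal is to get it from the $f^{-1}(1-x/r)\theta$-positivity away from the forward direction and from ``$A\Theta'A$-positivity combined with a Hardy-type absorption'' in the forward direction. The Hardy route cannot work: $A=\mathop{\mathrm{Re}}p^f$ differentiates along $\mathop{\mathrm{grad}}f$ with the small coefficient $|\partial f|\sim r^{-1/2}\sim f^{-1}$, so $\|\theta'^{1/2}A\phi\|^2$ carries the weight $\theta'|\partial f|^2\sim \theta' f^{-2}$ on the first derivative, whereas you want the weight $\theta'$ on $\phi$ itself; any one-dimensional Hardy inequality along the flow of $\mathop{\mathrm{grad}}f$ loses, not gains, powers of $f$ (constant functions already violate the inequality you would need on large forward regions). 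The mechanism the paper actually uses --- and which you are missing --- is to extract a portion $\tfrac12 p|\partial f|^2\Theta'p$ of the kinetic positivity (available because $p|\partial f|^2\Theta'p=(p^f)^*\Theta'p^f+p_j\Theta'\ell_{jk}p_k$ is dominated by $A\Theta'A+p_jf^{-1}\Theta\ell_{jk}p_k$ modulo errors) and convert it through the Hamiltonian, $\tfrac12 p|\partial f|^2\Theta'p=\mathop{\mathrm{Re}}\bigl(|\partial f|^2\Theta'(H-z)\bigr)+\tfrac12(x-q+\lambda)|\partial f|^2\Theta'+\tfrac14(\Delta|\partial f|^2\Theta')$, cf.\ \eqref{eq:1810022054}. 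Since $|\partial f|^2=\tfrac12 r^{-1}$, this yields the positive term $\tfrac14(x/r)\theta'$, which is exactly the Stark-field positivity in the classically allowed region; combined with $\tfrac12 f^{-1}(1-x/r)\theta\ge\tfrac12(1-x/r)\theta'$ (Lemma~\ref{lem:theta-inequality}) it gives $\tfrac14\theta'$ everywhere on $\mathop{\mathrm{supp}}\bar\chi_m$. Note also that this $\theta'$-positivity is needed anyway to absorb the spatially uniform errors such as $q_5\Theta\ge -Cf^{-1-\min\{6,\rho\}}\Theta$ (there is no positive $f^{-1}\theta$ term without the factor $(1-x/r)$, which degenerates in the forward direction), which is where the restriction $\delta<\min\{2,\rho\}$ enters via $\theta'\ge cf^{-1-\delta}\theta$. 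Without this kinetic-to-potential conversion the key estimate cannot be closed.
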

\begin{proof}
Fix $I$ and $\delta$ as in the assertion.
We choose $m\in\mathbb N_0$ in \eqref{eq:1810021842} large enough that \eqref{eq:190315} holds.
It suffices to show that 
there exist $c_1, C_1>0$ and $n\in\mathbb N_0$ such that uniformly in $z\in I_\pm$ 
and $\nu\in\N_0$
\begin{equation}
\begin{split}
\mathop{\mathrm{Im}}\bigl( A\Theta(H-z) \bigr) 
&\ge 
c_1\theta' +c_1f^{-1}(1-x/r)\theta +c_1A\theta'A +c_1p_jf^{-1}\theta\ell_{jk} p_k 
\\&\phantom{{}={}}{}
- C_1\chi_n^2\theta 
+\mathop{\mathrm{Re}}\bigl( \gamma_1(H-z) \bigr),
\end{split}
\label{eq:181218b}
\end{equation} 
where $\gamma_1=\gamma_{1,z,\nu}$ is a certain uniformly bounded complex-valued function: 
$|\gamma_1| \le C_1$. 
In fact, deduction of \eqref{eq:181218} from \eqref{eq:181218b} is 
straightforward by taking expectation of \eqref{eq:181218b} in the state $\phi=R(z)\psi$.
Hence we prove \eqref{eq:181218b} in what follows.

By Lemmas~\ref{lemma:commutator-calculation}, \ref{lem:theta-inequality} 
and the Cauchy--Schwarz inequality 
we can bound uniformly in $z=\lambda\pm \mathrm i\Gamma\in I_\pm$ and $\nu\in\N_0$
\begin{align}
\begin{split}
&
2\mathop{\mathrm{Im}}(A\Theta(H-z))
\\&\ge 
A\Theta' A
+p_jf^{-1}\Theta \ell_{jk}p_k
+p_jf^{-1}|\partial f|^2\bigl(\delta_{jk}-(\partial_jr)(\partial_kr)\bigr)\Theta p_k
\\&\phantom{{}={}}{}
+\tfrac12 f^{-1}(1-x/r)\Theta 
-\tfrac14|\partial f|^4\Theta'''
-\tfrac12(\partial^f|\partial f|^2)\Theta''
-\tfrac12f^{-1}|\partial f|^4\Theta'' 
\\&\phantom{{}={}}{}
+ q_4\Theta'
+ q_5\Theta 
-2\mathop{\mathrm{Im}}\bigl(q_2\Theta p^f\bigr)
-2\mathop{\mathrm{Re}}\bigl(f^{-1}|\partial f|^2\Theta H\bigr)
-\mathop{\mathrm{Re}}\bigl(|\partial f|^2\Theta' H\bigr)
\\&\phantom{{}={}}{}
-2\lambda \mathop{\mathrm{Im}}(A\Theta)
\mp 2\Gamma \mathop{\mathrm{Re}}(A\Theta)
\\&\ge 
\tfrac12A\Theta' A
+\tfrac12p_jf^{-1}\Theta \ell_{jk}p_k
+\tfrac12p|\partial f|^2\Theta' p
+\tfrac12 f^{-1}(1-x/r)\Theta 
-C_2Q
\\&\phantom{{}={}}{}
-2\mathop{\mathrm{Re}}\bigl(f^{-1}|\partial f|^2\Theta (H-z)\bigr)
-\mathop{\mathrm{Re}}\bigl(|\partial f|^2\Theta' (H-z)\bigr)
\mp 2\Gamma \mathop{\mathrm{Re}}(\Theta p^f)
,
\end{split}
\label{eq:181217}
\end{align}
where $Q$ is an \emph{admissible} error of the form
\begin{equation*}
Q = f^{-1-\min\{2,\rho\}}\theta + pr^{-1}f^{-1-\rho}\theta p.
\end{equation*}
We rewrite and bound the third term on the right-hand side of \eqref{eq:181217} as 
\begin{equation}\label{eq:1810022054}
\begin{split}
\tfrac12p|\partial f|^2\Theta'p 
&\ge  
\tfrac14p|\partial f|^2\Theta'p 
\\&= 
\tfrac12\mathop{\mathrm{Re}}\bigl(|\partial f|^2\Theta'(H-z)\bigr) 
+\tfrac12(x-q+\lambda)|\partial f|^2\Theta' 
+\tfrac18(\Delta|\partial f|^2\Theta')
\\&\ge 
\tfrac12\mathop{\mathrm{Re}}\bigl(|\partial f|^2\Theta'(H-z)\bigr) 
+\tfrac14 r^{-1}x\Theta' 
-C_3Q.
\end{split}
\end{equation}
As for the eighth term of \eqref{eq:181217}, 
we use the Cauchy-Schwarz inequality 
and Lemma~\ref{lem:theta-inequality}, 
and then
\begin{equation}
\begin{split}\label{eq:1810022238}
\mp 2\Gamma \mathop{\mathrm{Re}}(\Theta p^f)
&
\ge 
-C_4\Gamma p r^{-1}\Theta p 
-C_5\Gamma 
\\&
\ge 
-2C_4\Gamma\mathop{\mathrm{Re}}\bigl( r^{-1}\Theta (H-z) \bigr) 
- C_6\Gamma
\\&\ge 
-2C_4\Gamma\mathop{\mathrm{Re}}\bigl( r^{-1}\Theta (H-z) \bigr) 
\pm C_6\mathop{\mathrm{Im}}(H-z)
.
\end{split}
\end{equation}
By \eqref{eq:181217}, \eqref{eq:1810022054} and \eqref{eq:1810022238}
we obtain
\begin{equation}\label{eq:1810022258}
\begin{split}
2\mathop{\mathrm{Im}}(A\Theta(H-z))
&\ge 
\tfrac12A\theta' A
+\tfrac12p_jf^{-1}\theta \ell_{jk}p_k
+\tfrac14f^{-1}(1-x/r)\theta 
+\tfrac14 \theta' 
\\&\phantom{{}={}}{}
-C_7Q
+\mathop{\mathrm{Re}}\bigl(\gamma_2(H-z)\bigr) 
.
\end{split}
\end{equation}
Finally we combine and bound the fourth and fifth terms of \eqref{eq:1810022258} 
as, for large $n\in\mathbb N_0$,
\begin{equation}\label{eq:1810022302}
\tfrac14\theta' - C_7Q 
\ge 
\tfrac18\theta' 
- C_8\chi_n^2\theta 
- C_9\mathop{\mathrm{Re}}\bigl( r^{-1}f^{-1-\rho}\theta(H-z) \bigr).
\end{equation}
Hence by \eqref{eq:1810022258} and \eqref{eq:1810022302} the assertion follows.
\end{proof}

For the proof of Theorem~\ref{thm:lap-bounds}
we also use local compactness of the following form.

\begin{proposition}\label{prop:lap-bound-key-estimateb}
Assume Condition~\ref{cond:1803131348b}. 
Then for any $l\in\mathbb N_0$ and compact interval 
$I$ the mapping 
\begin{align*}
  \chi_lP_H(I)\colon L^2\to L^2
\end{align*}
is compact, where $P_H(I)$ denotes the spectral projection for $H$ onto $I$. 
\end{proposition}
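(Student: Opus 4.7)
The plan is first to reduce the problem to compactness of $\chi_l R(z_0)$ for a single $z_0\in\C\setminus\R$, say $z_0=\mathrm i$, and then to establish this by splitting $\chi_l$ into a compactly supported part (handled by local Rellich compactness) and a ``paraboloidal tail'' on which $V=-x+q$ is forced to infinity by Condition~\ref{cond:1803131348b}. The reduction is immediate: writing $\chi_l P_H(I)=\chi_l R(z_0)\cdot(H-z_0)P_H(I)$ and noting that $(H-z_0)P_H(I)$ is bounded on $L^2$ by the spectral theorem (with norm $\le\sup_{\lambda\in I}|\lambda-z_0|$), compactness of $\chi_l R(z_0)$ implies that of $\chi_l P_H(I)$.

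To show $\chi_l R(z_0)$ is compact, take $\{\phi_n\}$ bounded in $L^2$ and set $u_n=R(z_0)\phi_n\in\vD(H)$, so that $\|u_n\|$ and $\|Hu_n\|=\|z_0 u_n+\phi_n\|$ are uniformly bounded. I would pick smooth cutoffs $\eta_k\in C_{\mathrm c}^\infty(\R^d)$ with $\eta_k\equiv 1$ on the ball $B_{R_k/2}$, $\mathrm{supp}\,\eta_k\subseteq B_{R_k}$, $R_k\nearrow\infty$, and $\|\nabla\eta_k\|_\infty$ uniformly bounded, and split
\begin{equation*}
\chi_l u_n=\chi_l\eta_k u_n+\chi_l(1-\eta_k)u_n.
\end{equation*}
For the first, compactly supported piece, local elliptic regularity for $H$ (valid under the $L^2_{\mathrm{loc}}$ regularity on $q$, with more regularity for $d\geq 4$, assumed in Condition~\ref{cond:1803131348}) yields a uniform $H^1_{\mathrm{loc}}$-bound on $\eta_k u_n$ controlled by $\|u_n\|+\|Hu_n\|$. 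Then Rellich--Kondrachov produces, for each fixed $k$, an $L^2$-convergent subsequence of $\{\chi_l\eta_k u_n\}_n$; a diagonal extraction gives one subsequence $\{u_{n_j}\}$ for which $\{\chi_l\eta_k u_{n_j}\}_j$ is $L^2$-Cauchy for every $k$.

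For the tail piece, the key geometric input is that on $\mathrm{supp}\,\chi_l(1-\eta_k)$ one has $r+x\leq 4^{l+1}$ (from $f\leq 2^{l+1}$) and simultaneously $r\geq R_k/2$, forcing $x\leq 4^{l+1}-R_k/2=:-\mu_k$ with $\mu_k\to\infty$. Condition~\ref{cond:1803131348b} applied with $f_0=2^{l+1}$ then provides $N_k\to\infty$ such that $V=-x+q\geq N_k$ on this set. Setting $\eta=\chi_l(1-\eta_k)$ and applying the IMS localization identity $\eta H\eta=\tfrac12(\eta^2 H+H\eta^2)+\tfrac12|\nabla\eta|^2$ in the state $u_n$ yields
\begin{equation*}
\tfrac12\|p(\eta u_n)\|^2+\int V\eta^2|u_n|^2=\mathrm{Re}\langle Hu_n,\eta^2 u_n\rangle+\tfrac12\int|\nabla\eta|^2|u_n|^2\leq C
\end{equation*}
uniformly in $n$. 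Since $V\geq N_k>0$ on $\mathrm{supp}\,\eta$, this forces $\|\chi_l(1-\eta_k)u_n\|^2\leq C/N_k\to 0$ as $k\to\infty$, uniformly in $n$. Triangulating with the local piece, $\{\chi_l u_{n_j}\}$ is Cauchy in $L^2$.

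The main obstacle is making the IMS identity fully rigorous for $\eta=\chi_l(1-\eta_k)$, which is smooth with bounded derivatives but \emph{not} compactly supported (it extends along the paraboloid $r+x\leq 4^{l+1}$ as $x\to-\infty$, precisely where $V$ is unbounded below in a problematic way). The standard workaround is to truncate $\eta$ by a further compactly supported factor $\eta_M\in C_{\mathrm c}^\infty$ with $\eta_M\nearrow 1$, apply the identity for $\eta\eta_M$, and let $M\to\infty$; monotone convergence handles the $V$-integral because $V\geq N_k>0$ on the relevant support, while $u_n\in\vD(H)$ together with the uniform $\|Hu_n\|$ bound controls the remaining terms.
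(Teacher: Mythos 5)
Your proof is correct and follows essentially the same strategy as the paper's: a global/local $H^1$ bound feeding into Rellich--Kondrachov compactness on bounded sets, combined with uniform smallness of the remaining tail, where Condition~\ref{cond:1803131348b} forces $-x+q\to\infty$ on $\{f\le 2^{l+1}\}$ and an IMS-type energy identity converts this into the decay $\|\eta u_n\|^2\le C/N_k$. The only (cosmetic) differences are that you reduce to the resolvent $\chi_lR(z_0)$ rather than working with $\chi_lP_H(I)$ directly, and you cut off with balls in $r$ rather than in $x$ alone as the paper does with $\eta_\nu=1-\chi(-x/2^\nu)$; on $\mathop{\mathrm{supp}}\chi_l$ these are equivalent.
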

\begin{proof}
Fix any $l\in\mathbb N_0$
and any compact interval $I$.
We let $\{\psi_k\}_{k\in\mathbb N_0}\subseteq L^2$ be a bounded sequence,
and set $\phi_k=\chi_lP_H(I)\psi_k$. 
First, using Condition~\ref{cond:1803131348b}
we have 
\begin{align*}
\|\phi_k\|^2
+\|p\phi_k\|^2
&\le 
\|\phi_k\|^2
+2\langle H\rangle_{\phi_k}
-2\langle -x+q\rangle_{\phi_k}
\\&\le 
C_1
\|\phi_k\|^2
+
C_2\langle H\rangle_{\phi_k}
\\&
\le C_3\|\psi_k\|^2.
\end{align*}
Hence the sequence $\{\phi_k\}_{k\in\mathbb N_0}$ is bounded 
in the standard Sobolev space $H^1(\mathbb R^d)$.
Then by Rellich's compact embedding theorem
and the diagonal argument
it suffices to show that 
\begin{align}
\lim_{\nu\to\infty}\sup_k\|\eta_\nu\phi_k\|=0;\quad
\eta_\nu(x,y)=1-\chi(-x/2^\nu),
\label{eq:18121810}
\end{align}
see \eqref{eq:chidef} for $\chi$.
Let $\epsilon>0$. 
Using again  Condition~\ref{cond:1803131348b} we deduce that 
 for any large $\nu\in\mathbb N_0$,  independent of $k\in\mathbb N_0$, 
\begin{align*}
\|\eta_\nu\phi_k\|^2
&\le 
\epsilon\langle -x+q\rangle_{\eta_\nu\phi_k} 
\le 
\epsilon\langle H\rangle_{\eta_\nu\phi_k} 
\le 
C_4\epsilon\|\psi_k\|^2
,
\end{align*}
where $C_4> 0$ does not depend on $\epsilon>0$ or $k\in\mathbb N_0$.
This verifies \eqref{eq:18121810}, and hence we are done.
\end{proof}

\subsection{Proof of LAP bounds}\label{subsec:181216b}

Now we prove Theorem~\ref{thm:lap-bounds} 
employing Propositions~\ref{prop:lap-bound-key-estimate},  
\ref{prop:lap-bound-key-estimateb} and a contradiction argument.

\begin{proof}
Let $I $ be a compact interval.

\smallskip
\noindent
{\it Step 1.}
First we reduce the proof of Theorem~\ref{thm:lap-bounds} to the single bound
\begin{equation}\label{eq:lap-single-bound}
\|\phi\|_{\vB^*} \le C_1\|\psi\|_\vB;\quad 
\phi=R(z)\psi.
\end{equation}
Assume \eqref{eq:lap-single-bound} holds true.
Fix any $\delta\in(0,\min\{2,\rho\})$.
Then by Proposition~\ref{prop:lap-bound-key-estimate} and \eqref{eq:lap-single-bound} 
there exists $C_2>0$ such that uniformly in $\epsilon_1\in(0,1)$ and $\nu\in\N_0$
\begin{equation}\label{eq:1810031651}
\begin{split}
&
\|(1-x/r)^{1/2}\theta^{1/2}\phi\|_{L^2_{-1/2}}^2 
+\|\theta'^{1/2}A\phi\|^2 
+ \langle p_jf^{-1}\theta \ell_{jk}p_k \rangle_{\phi} 
\\&\le 
\epsilon_1\|A\phi\|_{\vB^*}^2 
+ \epsilon_1^{-1}C_2\|\psi\|_\vB^2.
\end{split}
\end{equation}
For each $\nu\ge 0$, 
restricting the integral region to $\{2^\nu\le f<2^{\nu+1}\}$, 
we can bound the second term on the left-hand side of \eqref{eq:1810031651} as
\begin{equation}\label{eq:1810031653}
\|\theta'^{1/2}A\phi \|^2 
\ge 
3^{-(1+\delta)}2^{-\nu}\|F_\nu A\phi\|^2, 
\end{equation}
where $F_\nu$ is from \eqref{eq:190316}.
As for the first and third terms on the same side, 
letting $\nu=0$ and using Lemma~\ref{lem:theta-inequality}, we have
\begin{equation}\label{eq:1810031700}
\begin{split}
&\|(1-x/r)^{1/2}\theta^{1/2}\phi\|_{L^2_{-1/2}}^2 
+\langle p_jf^{-1}\theta \ell_{jk}p_k \rangle_{\phi} 
\\&\ge 
c_1\|(1-x/r)^{1/2}\phi\|_{L^2_{-1/2}}^2 
+c_1\langle p_jf^{-1}\ell_{jk}p_k \rangle_{\phi}.
\end{split}
\end{equation}
We use \eqref{eq:1810031653} and \eqref{eq:1810031700} separately in  
 \eqref{eq:1810031651}. The bound with the right-hand side of  \eqref{eq:1810031700} is
 independent of $\nu$, and for the bound with  the right-hand side of
 \eqref{eq:1810031653} we take  the supremum in $\nu\in\N_0$.
Then we obtain uniformly in $\epsilon_1\in(0,1)$ 
\begin{equation*}
c_2\|(1-x/r)^{1/2}\phi\|_{L^2_{-1/2}}^2 
+c_2\|A\phi\|_{\vB^*}^2 
+ c_2\langle p_jf^{-1}\ell_{jk}p_k \rangle_{\phi} 
\le 
\epsilon_1\|A\phi\|_{\vB^*}^2 
+\epsilon_1^{-1}C_2\|\psi\|_\vB^2.
\end{equation*}
Therefore by letting $\epsilon_1\in(0, c_2/2)$ it follows that
\begin{equation*}
\|(1-x/r)^{1/2}\phi\|_{L^2_{-1/2}}^2 
+\|p^f\phi\|_{\vB^*} 
+ \langle p_jf^{-1}\ell_{jk}p_k \rangle^{1/2}_\phi 
\le 
C_3\|\psi\|_\vB.
\end{equation*}
Hence Theorem~\ref{thm:lap-bounds} reduces to 
the single bound \eqref{eq:lap-single-bound}.

\smallskip
\noindent
{\it Step 2.}
Next we prove \eqref{eq:lap-single-bound} arguing by 
contradiction. 
So assume there exist $z_k\in I_\pm$ and $\psi_k\in\vB$ such that 
\begin{equation}\label{eq:psi_k}
\lim_{k \to \infty}\|\psi_k\|_\vB=0, 
\quad 
\|\phi_k\|_{\vB^*}=1;\ \ 
\phi_k=R(z_k)\psi_k.
\end{equation}
 By the time-reversal property we may assume that  $z_k\in I_+$. 
In addition,
by choosing a subsequence 
we may assume that $z_k$ converges to some $z\in \overline I_+$.
If $\mathop{\mathrm{Im}}z>0$, then 
\eqref{eq:psi_k} contradicts the bounds
\begin{equation*}
\|\phi_k\|_{\vB^*} 
\le 
\|R(z_k)\psi_k\| 
\le 
\|R(z_k)\|_{\vL(L^2)}\|\psi_k\|
\le 
\|R(z_k)\|_{\vL(L^2)}\|\psi_k\|_\vB
\end{equation*}
as $k\to\infty$.
Hence  we have a  real limit
\begin{equation}\label{eq:lim-z_k}
\lim_{k\to\infty}z_k = z = \lambda\in I.
\end{equation}
Let $s>1/2$.
By choosing a further subsequence we may assume 
$\phi_k$ converges weakly to some $\phi\in L^2_{-s}$. 
Then, in fact, $\phi_k$ converges strongly to $\phi\in L^2_{-s}$.
To verify this let us fix $s'\in(1/2,s)$ and $h\in C_{\rm c}^\infty(\R)$ 
with $h=1$ on a neighborhood of $I$, and decompose for any $l\in\mathbb N_0$
\begin{align}
\begin{split}
f^{-s}\phi_k 
&= 
(f^{-s}h(H))(\chi_l f^s)(f^{-s}\phi_k) 
+ (f^{-s}h(H)f^s)(\bar\chi_l f^{s'-s})(f^{-s'}\phi_k) 
\\&\phantom{{}={}}{}
+ f^{-s}(1-h(H))R(z_k)\psi_k.
\end{split}
\label{eq:18121814}
\end{align}
By \eqref{eq:psi_k} we see that the last term on the right-hand side 
of \eqref{eq:18121814} converges to $0$ in $L^2$.
Since $f^{-s}h(H)f^s$ is a bounded operator, 
by choosing $l\in\mathbb N_0$ 
sufficiently large the second term of \eqref{eq:18121814}
can be arbitrarily small in $L^2$.
Lastly, since $f^{-s}h(H)$ is compact by 
Proposition~\ref{prop:lap-bound-key-estimateb}, 
the first term of \eqref{eq:18121814} converges strongly in $L^2$.
Therefore $\phi_k$ converges to $\phi$ in $L^2_{-s}$:
\begin{equation}\label{eq:lim-phi_k}
\lim_{k\to\infty}\phi_k 
= \phi \ \text{ in } L^2_{-s}.
\end{equation}
By \eqref{eq:psi_k}, \eqref{eq:lim-z_k} and \eqref{eq:lim-phi_k} it follows that
\begin{equation}\label{eq:1810031853}
(H-\lambda)\phi = 0 \ \text{ in the distributional sense}.
\end{equation}
In addition, we can verify $\phi\in\vB_0^*$.
In fact, let us fix $\delta=2s-1>0$ and apply Proposition~\ref{prop:lap-bound-key-estimate} 
to $\phi_k$.
Then, letting $k\to\infty$ and using \eqref{eq:psi_k}, \eqref{eq:lim-phi_k} and 
Lemma~\ref{lem:theta-inequality}, we obtain for all $\nu\in\N_0$
\begin{equation}\label{eq:1810031903}
\|\theta'^{1/2}\phi\| 
\le 
\|\chi_n\theta^{1/2}\phi\| 
\le 
C_42^{-\nu/2}\|\chi_nf^{1/2}\phi\|.
\end{equation}
Then by letting $\nu\to\infty$ in \eqref{eq:1810031903} 
we obtain $\phi\in\vB_0^*$.
Therefore by \eqref{eq:1810031853} and Theorem~\ref{thm:priori-decay-b_0}, 
we have $\phi=0$, but this is a contradiction.
In fact, we can prove, as in Step 1, 
\begin{align*}
1=\|\phi_k\|^2_{\mathcal B^*}
\le C_5\bigl(\|\psi_k\|_{\mathcal B}^2+\|\chi_n\phi_k\|^2\bigr).
\end{align*}
But the right-hand side can be made arbitrarily small (in particular
smaller than $1$) by taking $k$ big enough.
\end{proof}

\section{Radiation condition bounds}\label{sec:181207}

Here we prove Theorem~\ref{thm:RC-bound}
and Corollaries~\ref{cor:Limiting-Absorption-Principle-Stark}, \ref{cor:RC-bound-real}, 
and \ref{cor:Sommerfeld-unique-result}. 
For simplicity of arguments we prove the assertions only for the upper sign.
For the proof of Theorem~\ref{thm:RC-bound} 
the  form inequality \eqref{eq:181224} below is a key ingredient, 
cf.\ \eqref{eq:14.9.26.9.53ffaabb}, \eqref{eq:keyest} and \eqref{eq:181218b} 
in the former sections.

In this section we always assume Condition~\ref{cond:additional-condition}. 
Furthermore, we throughout the section fix a compact interval $I$ 
and $l\in\mathbb N_0$ as in \eqref{eq:phase-a},
so that the phase $a$ is always a fixed function. 
We may let $l\in\mathbb N_0$ be large without loss of generality,
so that the formulas from \eqref{180814} and \eqref{eq:180816} are available 
on $\mathop{\mathrm{supp}}a$,
and also that $\mathop{\mathrm{supp}}a\cap \mathop{\mathrm{supp}}q_3=\emptyset$.

\subsection{Key bounds}

We first present basic properties of $a$.

\begin{lemma}\label{lem:1812101541}
\begin{enumerate}
\item
There exists $C>0$ such that for any $z\in I_+$ and $(x,y)\in \mathbb R^d$ 
\begin{align*}
&|a| \le C, 
\quad 
\mathop{\mathrm{Im}}a\ge \tfrac14\bar\chi_lfr^{-2}, 
\quad 
|\widetilde\partial a|\le Cf^{-1-\min\{2,\rho,\tilde\rho\}}r^{-1}
,
\end{align*}
where $\widetilde\partial$ is from \eqref{eq:18121512}.

\item
Let $m\in\mathbb N_0$ with $m\ge l+2$.
Then for any $z\in I_+$ one can write 
\begin{align*}
\bar\chi_m(H-z )
&= 
\bar\chi_m
\bigl[
(A+ a)r(A- a) 
+p_jr\ell_{jk}p_k 
+r
-x 
+ q_6\bigr]
\end{align*}
with 
\begin{align}
q_6 
&=
\bigl(p^far\bigr)
+ra^2
-r+q-z
+\tfrac14r(\Delta f)^2
+\tfrac12\bigl(\partial^fr\Delta f\bigr)
.
\label{eq:190322}
\end{align}
The function $q_6$ in particular satisfies for some $C'>0$
\begin{equation}
\bar\chi_m|q_6|\le C'\bar\chi_mf^{-1-\min\{2,\rho,\tilde\rho\}}.
\label{eq:190322b}
\end{equation}
\end{enumerate}
\end{lemma}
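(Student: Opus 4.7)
My plan is to treat the two parts of the lemma separately. For part (i), the three bounds on $a$ all follow from pointwise computation on $\mathop{\mathrm{supp}}\bar\chi_l$. By the choice of $l$, the argument $(r-q_1+z)/r=1+(z-q_1)/r$ has uniformly positive real part and bounded modulus on this set, so $\sqrt{(r-q_1+z)/r}$ is a bounded smooth function; together with $fr^{-2}\le Cf^{-3}$ (a consequence of $f^2\le 2r$) this gives $|a|\le C$. For the imaginary part, the principal square root preserves non-negativity of imaginary part, so $\mathop{\mathrm{Im}}\sqrt{(r-q_1+z)/r}\ge 0$ for $z\in I_+$, leaving the explicit contribution $\tfrac14\bar\chi_lfr^{-2}$. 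For the derivative bound I would apply the chain rule to $a$, using Condition~\ref{cond:additional-condition} to bound $\widetilde\partial q_1$, the pointwise bound on $q_1$, and the derivative formulas \eqref{180814}--\eqref{eq:180816}; the inequality $r^{-1}\le 2f^{-2}$ converts excess powers of $r$ into powers of $f$, yielding the stated rate.

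For part (ii), I would first verify the factorization algebraically. Using $[A,r]=-\mathrm i(\partial^fr)$ and that $a$ commutes with $r$, the cross terms collapse to
\begin{equation*}
(A+a)r(A-a)=ArA+\mathrm i\partial^f(ra)-a^2r.
\end{equation*}
Expanding $A=p^f-\tfrac{\mathrm i}{2}\Delta f$ and using $p^f-(p^f)^*=\mathrm i\Delta f$ together with $p^fr\Delta f=r\Delta f\cdot p^f-\mathrm i\partial^f(r\Delta f)$ simplifies $ArA$ to $(p^f)^*rp^f-\tfrac12\partial^f(r\Delta f)-\tfrac14(\Delta f)^2r$. Meanwhile the definition of $\ell_{jk}$ gives $p_jr\ell_{jk}p_k=pr|\partial f|^2p-(p^f)^*rp^f$, and since $r|\partial f|^2=\tfrac12$ on $\mathop{\mathrm{supp}}\bar\chi_m$ by \eqref{eq:180816}, we have $pr|\partial f|^2p=\tfrac12p^2$. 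Cancellation of the unbounded $(p^f)^*rp^f$ pieces is exactly what makes the factorization work, and rearrangement recovers the formula \eqref{eq:190322} for $q_6$ with $(p^far)=-\mathrm i\partial^f(ar)$.

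The decisive step, and the real content of the lemma, is the bound on $q_6$. A naive estimate of the individual contributions $-\mathrm ia(\partial^fr)$ and the cross term in $ra^2$ produces terms of order $f\cdot r^{-1}\approx f^{-1}$, which is far too slow. The specific imaginary correction $\tfrac{\mathrm i}{4}fr^{-2}$ in the definition of $a$ was designed precisely so that the offending pair $\mp\tfrac{\mathrm i}{2}\sqrt{(r-q_1+z)/r}\,fr^{-1}$ cancels exactly. The bulk terms $r+z-q_1$ from expanding $ra^2$ then combine with $-r+q-z$ to give just $q_2$ on $\mathop{\mathrm{supp}}\bar\chi_m$, using that $q_3\equiv 0$ there for $m$ large. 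After these cancellations one is left with
\begin{equation*}
q_6=-\mathrm ir\partial^fa+q_2+\tfrac{1}{16}f^2r^{-3}+\tfrac14r(\Delta f)^2+\tfrac12\partial^f(r\Delta f)
\end{equation*}
on $\mathop{\mathrm{supp}}\bar\chi_m$. Each remaining term is bounded by $Cf^{-1-\min\{2,\rho,\tilde\rho\}}$: the first uses $|\partial^fa|\le|\widetilde\partial a|$ and part (i); the second uses Condition~\ref{cond:1803131348}(2); the last three are controlled by $Cf^{-3}$ via $r\ge f^2/2$ and \eqref{eq:180816}. Identifying this cancellation is the main nontrivial point; everything else is bookkeeping.
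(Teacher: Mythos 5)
Your proposal is correct and follows essentially the same route as the paper: part (1) by direct pointwise computation, and part (2) by rewriting $\tfrac12p^2=(p^f)^*rp^f+p_jr\ell_{jk}p_k$ via $r|\partial f|^2=\tfrac12$, converting $(p^f)^*rp^f$ to $ArA$ plus the $(\Delta f)$-corrections, factorizing, and then exhibiting the exact cancellation of the $O(f^{-1})$ terms $\mp\tfrac{\mathrm i}2\sqrt{(r-q_1+z)/r}\,fr^{-1}$ coming from $-\mathrm ia(\partial^fr)$ and the cross term of $ra^2$. The only (harmless) difference is that the paper expands $-\mathrm ir\partial^fa$ fully into an explicit formula, whereas you estimate it via $|\partial^fa|\le|\widetilde\partial a|$ and part (1); both yield \eqref{eq:190322b}.
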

\begin{proof}
The bounds in (1) follow from straightforward computations,  
and here we only do  (2).
Using the formulas from \eqref{180814} and \eqref{eq:180816},
we can rewrite 
\begin{align*}
\begin{split}
\bar\chi_m(H-z )
&= 
\bar\chi_m\bigl[(p^f)^*rp^f 
+p_jr\ell_{jk}p_k 
-x+q-z\bigr] 
\\&
= 
\bar\chi_m\bigl[ArA
+p_jr\ell_{jk}p_k 
-x+q-z 
+\tfrac14r(\Delta f)^2
+\tfrac12(\partial^fr\Delta f)
\bigr]
\\&
= 
\bar\chi_m\bigl[(A+ a)r(A- a)
+p_jr\ell_{jk}p_k 
+r-x+q_6
\bigr]
\end{split}
\end{align*}
with $q_6$ given as \eqref{eq:190322}.
The last two terms of \eqref{eq:190322} obviously satisfy \eqref{eq:190322b}. 
In addition  we can compute on $\mathop{\mathrm{supp}}\bar\chi_m$, 
using the formulas from \eqref{180814} and \eqref{eq:180816}, 
\begin{align*}
\bigl(p^far\bigr)+a^2r-r+q-z
&=
\tfrac{\mathrm i}4\bigl(zfr^{-2}+2(\partial^fq_1)-fr^{-2}q_1\bigr)\big/\sqrt{(z+r-q_1)/r}
\\&\phantom{{}={}}{}
+\tfrac18r^{-2}
-\tfrac3{16}f^2r^{-3}
+q_2
.
\end{align*}
Hence we can verify \eqref{eq:190322b}.
\end{proof}

We will employ the following weight functions:
\begin{align}\label{eq:theta-rc}
\begin{split}
\Theta&=\Theta_{m,\nu}^{\beta,\delta}=\bar\chi_m\theta^{2\beta};\\
\theta &= \theta_\nu^\delta 
= \int_0^{f/2^\nu}(1+s)^{-1-\delta}\,\mathrm ds 
= \bigl[1-(1+f/2^\nu)^{-\delta}\bigr]\big/\delta
\end{split}
\end{align}
with parameters $\beta,\delta>0$ and $m,\nu\in\mathbb N_0$.
Note that $\theta$ is the same as that in Section~\ref{sec:181010},
and hence Lemma~\ref{lem:theta-inequality} is available.
We denote derivatives in $f$ by primes as in \eqref{eq:lap-theta-derivative}.

\begin{lemma}\label{lem:rc-bound-key-estimate}
Let $\beta\in(0,1/2)$. 
Fix any $m\in\mathbb N_0$ with $m\ge l+2$, 
and fix any $\delta>0$ in \eqref{eq:theta-rc}.
Then there exist $c,C>0$ such that uniformly in $z\in I_+$ and $\nu\in\N_0$, 
as quadratic forms on $\vD(H)$,
\begin{equation}
\begin{split}
&\mathop{\mathrm{Im}}\bigl((A-a)^*\Theta(H-z)\bigr) 
\\&\ge 
cf^{-1}(1-x/r)\Theta
+c(A-a)^*\bar\chi_m\theta'\theta^{2\beta-1}(A-a) 
+ cp_jf^{-1}\Theta\ell_{jk}p_k 
\\&\phantom{{}={}}{}
- Cf^{-1-\min\{4,2\rho,2\tilde\rho\}+\delta}\theta^{2\beta} 
+\mathop{\mathrm{Re}}\bigl(\gamma\theta^{2\beta}(H-z)\bigr),
\end{split}
\label{eq:181224}
\end{equation}
where $\gamma=\gamma_{z,\nu}$ is a certain function 
satisfying $|\gamma|\le Cf^{-1-\min\{4,2\rho,2\tilde\rho\}+\delta}$.
\end{lemma}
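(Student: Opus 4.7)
The plan is a commutator calculation patterned after Lemma \ref{lemma:commutator-calculation} and Proposition \ref{prop:lap-bound-key-estimate}, but with the conjugate operator $A$ replaced by the outgoing radiation combination $A-a$; the essential new ingredient is the factorization of $H-z$ provided by Lemma \ref{lem:1812101541}(2). I would split
\begin{equation*}
  \mathop{\mathrm{Im}}\bigl((A-a)^*\Theta(H-z)\bigr) = \mathop{\mathrm{Im}}\bigl(A\Theta(H-z)\bigr) - \mathop{\mathrm{Im}}\bigl(\bar a\Theta(H-z)\bigr)
\end{equation*}
and handle the two summands by different methods.

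For the first summand I would apply Lemma \ref{lemma:commutator-calculation} with $\Theta = \bar\chi_m\theta^{2\beta}$; this yields $\tfrac12[H, \mathrm iA]_\Theta$ plus $z$-dependent corrections. Among the terms produced are the Mourre-type positivity $\tfrac12 f^{-1}(1-x/r)\Theta$, the quadratic positivity $A\Theta' A$, and the transverse positivity $p_j f^{-1}\Theta\ell_{jk}p_k$, together with error terms of the same type as those controlled in Proposition \ref{prop:lap-bound-key-estimate}. Since $\Theta' = 2\beta\bar\chi_m\theta'\theta^{2\beta-1} + (\bar\chi_m)'\theta^{2\beta}$, the $A\Theta' A$ piece essentially contributes the second positive term of \eqref{eq:181224} after completion-of-square manipulations that shift $A$ to $A-a$; the boundary contribution from $(\bar\chi_m)'$ is supported where $f\sim 2^m$ and can be absorbed into the remainder $\mathop{\mathrm{Re}}(\gamma\theta^{2\beta}(H-z))$.

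For the second summand I would substitute the factorization from Lemma \ref{lem:1812101541}(2), which is valid on the support of $\Theta$ since $m\ge l+2$. Term-by-term the imaginary part yields: from $(r-x)$, a positive contribution $(r-x)\Theta\,\mathop{\mathrm{Im}} a \ge \tfrac14 fr^{-1}(1-x/r)\Theta$ that complements the Mourre term from the first summand in producing the first target term of \eqref{eq:181224}; from $(A+a)r(A-a)$, using $(A+a) = (A-a)+2a$ together with $[A,\Theta] = -\mathrm i\partial^f\Theta$, contributions that reinforce the quadratic positivity $c(A-a)^*\bar\chi_m\theta'\theta^{2\beta-1}(A-a)$; from $p_jr\ell_{jk}p_k$, corrections to the transverse kinetic term combining with the first summand to give $cp_jf^{-1}\Theta\ell_{jk}p_k$; and from $q_6$, pure error contributions bounded by \eqref{eq:190322b}.

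The main obstacle will be the systematic bookkeeping of the numerous cross-terms and extracting clean positivity with correct signs. Several delicate points arise: contributions involving $\widetilde\partial a$ (bounded in Lemma \ref{lem:1812101541}(1)) must be absorbed into the positive radiation term via Cauchy--Schwarz, which is where the restriction $\beta < 1/2$ enters crucially to keep the weighted errors strictly subdominant; boundary terms from $(\bar\chi_m)'$ together with the $\mathop{\mathrm{Im}} z$-corrections from the first summand must be routed into $\mathop{\mathrm{Re}}(\gamma\theta^{2\beta}(H-z))$ with a $\gamma$ satisfying $|\gamma|\le Cf^{-1-\min\{4,2\rho,2\tilde\rho\}+\delta}$; and the $\delta$-loss in the final error term arises from applying Lemma \ref{lem:theta-inequality} to trade a factor of $\theta$ for $\theta'$ at the cost of $f^\delta$.
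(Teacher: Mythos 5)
Your toolkit is the right one (the factorization of Lemma~\ref{lem:1812101541}(2), the positivity of $\mathop{\mathrm{Im}}a$, the smallness of $q_6$ and of $\widetilde\partial a$, the role of $\ell$), but the opening decomposition $\mathop{\mathrm{Im}}\bigl((A-a)^*\Theta(H-z)\bigr)=\mathop{\mathrm{Im}}\bigl(A\Theta(H-z)\bigr)-\mathop{\mathrm{Im}}\bigl(\bar a\Theta(H-z)\bigr)$ is not what the paper does, and it destroys the structure on which the proof lives. The paper substitutes the factorization into the \emph{whole} expression and keeps the two factors $(A-a)^*$ and $(A-a)$ on the outside throughout, so that the leading contribution is the sandwiched form $(A-a)^*\bigl[r(\partial^f\Theta)-(\partial^fr)\Theta+2(\mathop{\mathrm{Im}}a)r\Theta\bigr](A-a)$. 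Inside the bracket the dangerous piece $-(\partial^fr)\Theta=-\tfrac12fr^{-1}\Theta$ is cancelled by $2(\mathop{\mathrm{Im}}a)r\Theta\ge\tfrac12fr^{-1}\Theta$ (this is exactly what the imaginary part $\tfrac{\mathrm i}{4}fr^{-2}$ of $a$ is built for), and what survives is precisely $\beta(A-a)^*\bar\chi_m\theta'\theta^{2\beta-1}(A-a)\ge0$ with no residue.

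In your scheme this cancellation is no longer visible. Lemma~\ref{lemma:commutator-calculation} hands you $A\Theta'A$, and the promised ``completion-of-square manipulations that shift $A$ to $A-a$'' require the cross term $-2\mathop{\mathrm{Re}}(\bar a\Theta'A)$ together with the constant $+|a|^2\Theta'$. Tracking your second summand (with or without the factorization) one finds it supplies only $-\mathop{\mathrm{Re}}(\bar a\Theta'A)$ — coming from $\partial^f(\bar a\Theta)$ — i.e.\ half of the needed cross term, and no matching constant. However you then close the square (Cauchy--Schwarz, or adding and subtracting $|a|^2\Theta'$), a negative remainder of order $|a|^2\theta'\theta^{2\beta-1}\sim f^{-1}\theta^{2\beta}$ survives; this exceeds the allowed error $Cf^{-1-\min\{4,2\rho,2\tilde\rho\}+\delta}\theta^{2\beta}$ and cannot be absorbed by $cf^{-1}(1-x/r)\Theta$, since $(1-x/r)$ degenerates in the forward region $x\approx r$ where $|a|\approx1$. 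The same loss recurs elsewhere: the $q_6$-identity bundles $ra^2-r+q-z+(p^far)$ into one small function, whereas splitting off $\bar a\Theta(H-z)$ exposes the individually large pieces $2(\mathop{\mathrm{Im}}a)\Theta(-x)$ and $p(\mathop{\mathrm{Im}}a)\Theta p$, each of size $f^{-1}\Theta$ without the $(1-x/r)$ gain. Since the underlying identity is exact, your route could in principle be forced through, but only by reassembling exactly the sandwich $(A-a)^*[\cdots](A-a)$ — at which point you have reproduced the paper's grouping; the clean proof is to apply Lemma~\ref{lem:1812101541}(2) directly to $(A-a)^*\Theta(H-z)$ and never separate the factors. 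A minor further point: $\beta<1/2$ enters through the coefficient $(\tfrac12-\beta)$ of the Mourre term $f^{-1}(1-x/r)\Theta$ (and $(1-\beta)$ for the transverse term), not primarily through the absorption of $\widetilde\partial a$.
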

\begin{proof}
In this proof we repeatedly use the formulas from 
\eqref{180814} and \eqref{eq:180816} without mentioning.
Fix $\beta\in(0,1/2)$, $m\in\mathbb N_0$ and $\delta>0$ 
as in the assertion.
By Lemmas~\ref{lem:1812101541} we write 
\begin{align}
\begin{split}
&
2\mathop{\mathrm{Im}}\bigl((A-a)^*\Theta(H-z)\bigr) 
\\&= 
2\mathop{\mathrm{Im}}\bigl((A-a)^*\Theta(A+a)r(A-a)\bigr) 
+ 2\mathop{\mathrm{Im}}\bigl((A-a)^*\Theta p_jr\ell_{jk}p_k\bigr) 
\\&\phantom{{}={}}{}
+ 2\mathop{\mathrm{Im}}\bigl((A-a)^*\Theta(r-x)\bigr) 
+ 2\mathop{\mathrm{Im}}\bigl((A-a)^*\Theta q_6\bigr) ,
\end{split}
\label{eq:rc-bound-key-estimateb}
\end{align}
and we further compute each term on the right-hand side of \eqref{eq:rc-bound-key-estimateb}. 
All the estimates below are uniformly in $z\in I_+$ and $\nu\in\N_0$.

By Lemma~\ref{lem:1812101541}
the first term of \eqref{eq:rc-bound-key-estimateb} can be computed and 
bounded as 
\begin{align}
\begin{split}
&
2\mathop{\mathrm{Im}}\bigl((A-a)^*\Theta (A+a)r(A-a)\bigr) 
\\&
=
(A-a)^*(\partial^f\bar\chi_m)\theta^{2\beta}r(A-a)
+2\beta (A-a)^*\bar\chi_mr|\partial f|^2\theta'\theta^{2\beta-1}(A-a)
\\&\phantom{{}={}}{}
-(A-a)^*(\partial^fr)\Theta (A-a)
+2(A-a)^*(\mathop{\mathrm{Im}}a)r\Theta (A-a)
\\&
\ge 
\beta (A-a)^*\bar\chi_m\theta'\theta^{2\beta-1}(A-a)
.
\label{eq:rc-bound-key-estimatebb}
\end{split}
\end{align}
For the second term of \eqref{eq:rc-bound-key-estimateb} 
we use Lemma~\ref{lem:1812101541},
the Cauchy--Schwarz inequality and Lemma~\ref{lem:theta-inequality}.
Omitting some computations, we finally obtain 
\begin{align}
\begin{split}
&
2\mathop{\mathrm{Im}}\bigl((A-a)^*\Theta p_jr\ell_{jk}p_k\bigr) 
\\&= 
2\mathop{\mathrm{Im}}\bigl(p_jAr\Theta\ell_{jk}p_k\bigr) 
+2\mathop{\mathrm{Im}}\bigl([A,p_j]r\Theta\ell_{jk}p_k\bigr) 
-2\mathop{\mathrm{Im}}\bigl(p_ja^*r\Theta\ell_{jk}p_k\bigr) 
\\&\phantom{{}={}}{}
-2\mathop{\mathrm{Im}}\bigl([a^*,p_j]r\Theta\ell_{jk}p_k\bigr) 
\\&\ge 
p_jf^{-1}\Theta\ell_{jk}p_k
-\beta p_j\bar\chi_m\theta'\theta^{2\beta-1}\ell_{jk}p_k 
+\tfrac14pfr^{-2}\Theta p 
-\tfrac12(p^r)^*f^{-1}r^{-1}\Theta p^r 
\\&\phantom{{}={}}{}
-\mathop{\mathrm{Im}}\bigl((\partial_j\Delta f)r\Theta\ell_{jk}p_k\bigr) 
-2\mathop{\mathrm{Re}}\bigl((\partial_ja^*)r\Theta\ell_{jk}p_k\bigr) 
-C_1Q
\\&
\ge 
(1-\beta-\epsilon_1)p_jf^{-1}\Theta\ell_{jk}p_k
-\tfrac14pf^{-1}r^{-2}(r-x)\Theta p
-C_2\epsilon_1^{-1}Q
, 
\end{split}
\label{eq:rc-bound-key-estimatebbb}
\end{align}
where $\epsilon_1\in (0,1)$ is a small constant fixed below,
$C_2>0$ is independent of $\epsilon_1$,
and $Q$ is an \emph{admissible} error of the form
\begin{equation*}
Q = 
f^{-1-\min\{4,2\rho,2\tilde\rho\}+\delta}\theta^{2\beta} 
+ pf^{-1-\min\{4,2\rho,2\tilde\rho\}+\delta}r^{-1}\theta^{2\beta}p
.
\end{equation*}
As for the third term of \eqref{eq:rc-bound-key-estimateb},
we simply compute and bound it by Lemma~\ref{lem:theta-inequality} as 
\begin{align}
\begin{split}
&
2\mathop{\mathrm{Im}}\bigl((A-a)^*\Theta(r-x)\bigr) 
\\&
\ge 
-\beta\bar\chi_mr^{-1}(r-x) \theta'\theta^{2\beta-1}
+\tfrac12fr^{-2}(r-x)\Theta 
-C_3Q
\\&
\geq 
(\tfrac12-\beta)f^{-1}(1-x/r)\Theta 
+\tfrac12xf^{-1}r^{-2}(r-x)\Theta 
-C_3Q.
\end{split}
\label{eq:rc-bound-key-estimatebbbb}
\end{align}
The last term of \eqref{eq:rc-bound-key-estimateb} is bounded by 
using the Cauchy--Schwarz inequality and Lemmas~\ref{lem:1812101541} as 
\begin{align}
\begin{split}
&
2\mathop{\mathrm{Im}}\bigl((A-a)^*\bar\chi_m\theta^{2\beta}q_6\bigr) 
\ge 
-\epsilon_1(A-a)^*\bar\chi_mf^{-1-\delta}\theta^{2\beta}(A-a)
-C_4\epsilon_1^{-1}Q.
\end{split}
\label{eq:rc-bound-key-estimate}
\end{align}

By 
\eqref{eq:rc-bound-key-estimateb}, \eqref{eq:rc-bound-key-estimatebb}, 
\eqref{eq:rc-bound-key-estimatebbb}, \eqref{eq:rc-bound-key-estimatebbbb}
and \eqref{eq:rc-bound-key-estimate}
we have 
\begin{align}
\begin{split}
&
2\mathop{\mathrm{Im}}\bigl((A-a)^*\Theta (H-z)\bigr) 
\\&
\ge 
(\tfrac12-\beta)f^{-1}(1-x/r)\Theta 
+(A-a)^* 
\bigl(\beta\theta'-\epsilon_1f^{-1-\delta}\theta\bigr)\bar\chi_m\theta^{2\beta-1} (A-a)
\\&\phantom{{}={}}{}
+(1-\beta-\epsilon_1)p_j f^{-1}\Theta\ell_{jk}p_k
-\tfrac14p f^{-1}r^{-2}(r-x)\Theta p
\\&\phantom{{}={}}{}
+\tfrac12 xf^{-1}r^{-2}(r-x)\Theta 
-C_5\epsilon_1^{-1}Q
.
\end{split}
\label{eq:1812112355}
\end{align}
The first term on the right-hand side of 
\eqref{eq:1812112355} conform with the assertion,
and so do the second and third 
by using Lemma~\ref{lem:theta-inequality} and choosing small $\epsilon_1\in(0,1)$.
Let us combine the fourth and fifth terms of \eqref{eq:1812112355} as 
\begin{align}
\begin{split}
&-\tfrac14p f^{-1}r^{-2}(r-x)\Theta p
+\tfrac12 xf^{-1}r^{-2}(r-x)\Theta 
\\&
\ge 
-C_6 f^{-1}r^{-1}(1-x/r)\Theta 
-\tfrac12\mathop{\mathrm{Re}}\bigl( f^{-1}r^{-2}(r-x)\Theta (H-z)\bigr)
-C_6Q.
\end{split}\label{eq:1812120010b}
\end{align}
Finally we bound the remainder term $Q$ as
\begin{equation}\label{eq:1812120010}
\begin{split}
- Q 
&\ge 
- C_7f^{-1-\min\{4,2\rho,2\tilde\rho\}+\delta}\theta^{2\beta} 
- 2\mathop{\mathrm{Re}}\bigl(f^{-1-\min\{4,2\rho,2\tilde\rho\}+\delta}r^{-1}\theta^{2\beta}(H-z)\bigr).
\end{split}
\end{equation}
Hence by \eqref{eq:1812112355}, \eqref{eq:1812120010b} 
and \eqref{eq:1812120010} the assertion follows.
\end{proof}

\subsection{Proof of radiation condition bounds}

Here we prove the radiation condition bounds, Theorem~\ref{thm:RC-bound}.

\begin{proof}[Proof of Theorem~\ref{thm:RC-bound}]
For $\beta=0$ the assertion is obvious by Theorem~\ref{thm:lap-bounds}.
Hence we may let $\beta\in(0,\beta_c)$.
Take any $m\ge l+2$ and $\delta\in(0, \min\{4,2\rho,2\tilde\rho\}-2\beta)$,
and apply Lemma~\ref{lem:rc-bound-key-estimate}
to 
 the state $\phi=R(z)\psi$ with $\psi\in f^{-\beta}\vB$ and $z\in I_+$.
Then by the Cauchy--Schwarz inequality, 
Theorem~\ref{thm:lap-bounds} and Lemma~\ref{lem:theta-inequality}
\begin{equation}\label{eq:1811291354}
\begin{split}
&
\bigl\|(1-x/r)^{1/2}\Theta^{1/2}\phi\bigr\|_{L^2_{-1/2}}^2 
+\bigl\|\bar\chi_m^{1/2}\theta'^{1/2}\theta^{\beta-1/2}(A-a)\phi\bigr\|^2 
+\bigl\langle p_jf^{-1}\Theta\ell_{jk}p_k \bigr\rangle_{\phi} 
\\&\le 
C_1\Bigl[ 
\bigl\|\Theta^{1/2}(A-a)\phi\bigr\|_{\vB^*}
\|\theta^\beta\psi\|_\vB 
+\bigl\|f^{-(1+\min\{4,2\rho,2\tilde\rho\}-\delta)/2}\theta^\beta \phi\bigr\|^2
\\&\phantom{{}={}C_1\Bigl[ }{}
+
\bigl\|f^{-(1+\min\{4,2\rho,2\tilde\rho\}-\delta)/2}\theta^\beta \phi\bigr\| 
\bigl\|f^{-(1+\min\{4,2\rho,2\tilde\rho\}-\delta)/2}\theta^\beta\psi\bigr\|
\Bigr]
\\&\le 
C_22^{-2\beta\nu}\Bigl[ 
\bigl\|\bar\chi_m^{1/2}f^\beta(A-a)\phi\bigr\|_{\vB^*}\|f^\beta\psi\|_\vB 
+ \|f^\beta\psi\|_\vB^2\Bigr].
\end{split}
\end{equation}
Here we have $f^\beta(A-a)\phi=f^\beta(A-a)R(z)\psi\in\vB^*$ for each
$z\in I_+$  (seen 
by commuting $R(z)$ and powers of $f$)
and hence the right-hand side of \eqref{eq:1811291354} is finite.
Then it follows by \eqref{eq:1811291354} that 
\begin{equation}\label{eq:1811291412}
\begin{split}
&
2^{2\beta\nu}\bigl\|(1-x/r)^{1/2}\Theta^{1/2}\phi\bigr\|_{L^2_{-1/2}}^2 
+ 2^{2\beta\nu}\bigl\|\bar\chi_m^{1/2}\theta'^{1/2}\theta^{\beta-1/2}(A-a)\phi\bigr\|^2 
\\&
+ 2^{2\beta\nu}\bigl\langle p_jf^{-1}\Theta\ell_{jk}p_k \bigr\rangle_{\phi} 
\\&\le 
C_2\Bigl[ 
\bigl\|\bar\chi_m^{1/2}f^\beta(A-a)\phi\bigr\|_{\vB^*}
\|f^\beta\psi\|_\vB 
+ \|f^\beta\psi\|_\vB^2\Bigr].
\end{split}
\end{equation}
In the second term on the left-hand side of \eqref{eq:1811291412}
restrict the integral region to $\{2^\nu\le f<2^{\nu+1}\}$ 
and take supremum in $\nu\in\N_0$, and then we obtain
\begin{equation*}
c_1\bigl\|\bar\chi_m^{1/2}f^\beta(A-a)\phi\bigr\|_{\vB^*}^2 
\le 
C_2\Bigl[ \bigl\|\bar\chi_m^{1/2}f^\beta(A-a)\phi\bigr\|_{\vB^*}\|f^\beta\psi\|_\vB 
+ \|f^\beta\psi\|_\vB^2\Bigr].
\end{equation*}
By the Cauchy--Schwarz inequality this implies
\begin{equation}\label{eq:rc-bound-first-term}
\bigl\|\bar\chi_m^{1/2}f^\beta(A-a)\phi\bigr\|_{\vB^*} 
\le 
C_3\|f^\beta\psi\|_\vB.
\end{equation}
As for the first and third terms on the left-hand side of \eqref{eq:1811291412} 
we  first bound
$\theta^{2\beta}\geq ( f\theta')^{2\beta}$,  then take the limit
$\nu\to\infty$ using 
the Lebesgue monotone convergence theorem, and use  
\eqref{eq:rc-bound-first-term} to estimate the right-hand side, yielding
\begin{equation}\label{eq:rc-bound-first-termb}
\bigl\|\bar\chi_m^{1/2}f^\beta(1-x/r)^{1/2}\phi\bigr\|_{L^2_{-1/2}}^2 
+\bigl\langle p_j\bar\chi_m^{1/2}f^{2\beta-1}\ell_{jk}p_k \bigr\rangle_{\phi} 
\le 
C_4\|f^\beta\psi\|_\vB^2.
\end{equation}
 From \eqref{eq:rc-bound-first-term} and \eqref{eq:rc-bound-first-termb} we 
can remove the cut-off $\bar\chi_m^{1/2}$ 
by using Theorem~\ref{thm:lap-bounds}.
Hence we are done.
\end{proof}

\subsection{Applications}

Finally we prove Corollaries~\ref{cor:Limiting-Absorption-Principle-Stark}, 
\ref{cor:RC-bound-real} and \ref{cor:Sommerfeld-unique-result}
as applications of Theorems~\ref{thm:priori-decay-b_0}, 
\ref{thm:lap-bounds} and \ref{thm:RC-bound}.

\subsubsection{LAP}

\begin{proof}[Proof of Corollary~\ref{cor:Limiting-Absorption-Principle-Stark}]
Let $s>1/2$ and $\epsilon\in(0,\min\set{\beta_c,s-1/2})$  as in the
assertion. Let $s'=s-\epsilon$.
Decompose for $n\in\mathbb N_0$ and $z,z'\in I_+$ as
\begin{equation}\label{eq:1812121815}
\begin{split}
R(z) - R(z') 
&= 
\chi_nR(z)\chi_n - \chi_nR(z')\chi_n 
\\&\phantom{{}={}}{}
+ \bigl( R(z) - \chi_nR(z)\chi_n \bigr) - \left( R(z') - \chi_nR(z')\chi_n \right).
\end{split}
\end{equation}
We estimate terms on the right-hand side of \eqref{eq:1812121815} as follows.
By Theorem~\ref{thm:lap-bounds} 
we can estimate uniformly in $n\in\mathbb N_0$ and $z, z'\in I_+$ as
\begin{equation} \label{eq:1812121817}
\begin{split}
&\| R(z) - \chi_nR(z)\chi_n \|_{\vL(L^2_s, L^2_{-s})} 
\\&\le 
\| f^{-s}R(z)\bar\chi_nf^{-s} \|_{\vL(L^2)} 
+ \| f^{-s}\bar\chi_nR(z)\chi_nf^{-s} \|_{\vL(L^2)} 
\\&\le 
C_12^{-(s-s')n}=C_12^{-\epsilon n},
\end{split}
\end{equation}
and, similarly, 
\begin{equation} \label{eq:1812121819}
\| R(z') - \chi_nR(z')\chi_n \|_{\vL(L^2_s, L^2_{-s})} 
\le 
C_22^{-(s-s')n}=C_22^{-\epsilon n}.
\end{equation}
As for the first and second terms of \eqref{eq:1812121815}, 
noting  $\overline{a_{\bar z}}=a_z$ and 
\begin{equation} \label{eq:1812121820}
\mathrm i[H, \chi_{n+1}] 
= \mathop{\mathrm{Re}}(\chi_{n+1}'p^f) 
= \mathop{\mathrm{Re}}(\chi_{n+1}'A),
\end{equation}
we can rewrite them as 
\begin{align*}
&\chi_nR(z)\chi_n - \chi_nR(z')\chi_n 
\\&= 
\chi_nR(z)\bigl\{ \chi_{n+1}(H-z') - (H-z)\chi_{n+1} \bigr\}R(z')\chi_n 
\\&= 
\tfrac{\mathrm i}{2}\chi_nR(z)\chi_{n+1}'(A-a_{z'})R(z')\chi_n 
+ \tfrac{\mathrm i}{2}\chi_nR(z)(A+a_{\bar z})^*\chi_{n+1}'R(z')\chi_n
\\&\phantom{{}={}}{}
-\tfrac{\mathrm i}{2}\chi_nR(z)(a_z - a_{z'})\chi_{n+1}'R(z')\chi_n 
+(z-z')\chi_nR(z)\chi_{m}R(z')\chi_n 
\\&\phantom{{}={}}{}
-(z-z')\chi_nR(z)\chi_{m,n+1}(a_z+a_{z'})^{-1}(A-a_{z'})R(z')\chi_n 
\\&\phantom{{}={}}{}
+(z-z')\chi_nR(z)(A+a_{\bar z})^*\chi_{m,n+1}(a_z+a_{z'})^{-1}R(z')\chi_n 
\\&\phantom{{}={}}{}
-(z-z')\chi_nR(z)\bigl[A,\chi_{m,n+1}(a_z+a_{z'})^{-1}\bigr]R(z')\chi_n .
\end{align*}
Here $m\in\mathbb N_0$ is chosen so that $(a_z+a_{z'})^{-1}$
is non-singular on $\mathop{\mathrm{supp}}\bar\chi_m$. 
Then by Theorems~\ref{thm:lap-bounds} and \ref{thm:RC-bound}
we have uniformly in $n\in\mathbb N_0$ and $z, z' \in I_+$
\begin{equation}
\|\chi_nR(z)\chi_n - \chi_nR(z')\chi_n\|_{\vL(L^2_s, L^2_{-s})}
\le 
C_32^{-\epsilon n}
+C_32^{(1-\epsilon)n}|z-z'|.
\label{eq:1812121827}
\end{equation}
By \eqref{eq:1812121815}, \eqref{eq:1812121817}, \eqref{eq:1812121819} 
and \eqref{eq:1812121827}, we obtain uniformly in $n\in\mathbb N_0$ and $z, z' \in I_+$
$$
\| R(z) - R(z') \|_{\vL(L^2_s, L^2_{-s})} 
\le 
C_42^{-\epsilon n}
+C_32^{(1-\epsilon)n}|z-z'|.
$$

Now, if $|z-z'|\le 1$, we choose $n\in\mathbb N_0$ with $2^n \le |z-z'|^{-1} \le 2^{n+1}$, 
and then we obtain
\begin{equation*}
\| R(z) - R(z') \|_{\vL(L^2_s, L^2_{-s})} 
\le 
C_5|z-z'|^\epsilon.
\end{equation*}
The same bound is trivial for $|z-z'|>1$,
and hence the H\"older continuity \eqref{eq:Holder-continuity} 
for $R(z)$ is obtained.
The H\"older continuity \eqref{eq:Holder-continuity} 
for $\widetilde pR(z)$ follows by that for $R(z)$
and the first resolvent equation.

The existence of the limits \eqref{eq:uniform-limit-z-to-lambda} follows 
immediately from \eqref{eq:Holder-continuity}. 
By Theorem~\ref{thm:lap-bounds}
the limits $R(\lambda\pm \mathrm i0)$ and $\widetilde pR(\lambda\pm \mathrm i0)$ 
map into $\vB^*$,
and moreover by density argument these limits extended continuously to maps $\vB\to\vB^*$. 
Hence the assertions are verified.
\end{proof}

\subsubsection{Radiation condition bounds for real spectral parameters}

\begin{proof}[Proof of Corollary~\ref{cor:RC-bound-real}]
The assertion is from Theorem~\ref{thm:RC-bound},
Corollary~\ref{cor:Limiting-Absorption-Principle-Stark} and approximation
arguments. Here we only note the elementary property 
\begin{align*}
  \|\psi\|_{\mathcal B^*}=\sup_{m\in\mathbb N_0} \|\chi_m \psi\|_{\mathcal B^*}
\ \ \text{for }\psi\in\mathcal B^*.
\end{align*}
Hence we are done.
\end{proof}

\subsubsection{Sommerfeld's uniqueness result}

\begin{proof}[Proof of Corollary~\ref{cor:Sommerfeld-unique-result}]
Let $\lambda\in\R$, $\phi\in f^\beta\mathcal B^*$ 
and $\psi\in f^{-\beta}\vB$ with $\beta\in [0,\beta_c)$.
First we let $\phi = R(\lambda+\mathrm i0)\psi$.
Then by Corollaries~\ref{cor:Limiting-Absorption-Principle-Stark} 
and \ref{cor:RC-bound-real} we see that \ref{item:18122818} and \ref{item:18122819} 
of the corollary hold.
Conversely assuming  \ref{item:18122818} and \ref{item:18122819} of
the corollary we let
$$
\phi' = \phi - R(\lambda + \mathrm i0)\psi.
$$
Then by Corollaries~\ref{cor:Limiting-Absorption-Principle-Stark} 
and \ref{cor:RC-bound-real} it follows 
that 
\begin{enumerate}[(1$'$)]
\item\label{item:18122818b}
$(H-\lambda)\phi'=0$ in the distributional sense,
\item\label{item:18122819b}
$\phi'\in f^{\beta}\vB^*$ and $(A- a_+)\phi'\in f^{-\beta}\vB_0^*$.
\end{enumerate}
In addition we have $\phi' \in\vB_0^*$.
To see this  we use functions $\chi_\nu$ as is  \eqref {eq:chimn}, but
considering  now  arbitrary $\nu\in [0,\infty)$. Noting the identity
$$
2\mathop{\mathrm{Im}}\bigl(\chi_\nu(H-\lambda) \bigr) 
= 
(\mathop{\mathrm{Re}}a_+)\chi_\nu' 
+ \mathop{\mathrm{Re}}\bigl( \chi_\nu'(A-a_+) \bigr),
$$
cf.\ \eqref{eq:def-A}, we have the bound
\begin{equation}\label{eq:1812141510}
0 \le 
\langle (\mathop{\mathrm{Re}}a_+)\bar\chi_\nu' \rangle_{\phi'} 
= 
\mathop{\mathrm{Re}}\langle \chi_\nu'(A-a_+) \rangle_{\phi'}.
\end{equation}
Using \ref{item:18122819b} above we deduce that the right-hand side is
bounded as a function of $\nu\geq 0$, leading to the conclusion that
$\phi' \in\vB^*$. Next,  taking  
 the limit $\nu\to\infty$ in \eqref{eq:1812141510}  using again  \ref{item:18122819b}, 
 we indeed obtain $\phi' \in\vB_0^*$.
Then by \ref{item:18122818b} above and Theorem~\ref{thm:priori-decay-b_0} 
it follows that $\phi'=0$, i.e. $\phi = R(\lambda+\mathrm i0)\psi$. 
Hence we are done.
\end{proof}

\end{document}